\theoremstyle{plain}
\newtheorem{theorem}{Theorem}[section]
\newtheorem{conjecture}[theorem]{Conjecture}
\newtheorem{corollary}[theorem]{Corollary}
\newtheorem{lemma}[theorem]{Lemma}
\newtheorem{fact}[theorem]{Fact}
\newtheorem{proposition}[theorem]{Proposition}
\newtheorem{observation}[theorem]{Observation}
\theoremstyle{definition}
\newtheorem{definition}[theorem]{Definition}
\newtheorem{question}[theorem]{Question}
\newtheorem*{claim*}{Claim}
\newcommand{\conv}{\mathsf{conv}}
\newcommand{\Eq}{\textsc{Eq}}
\newcommand{\Adj}{\textsc{Adj}}
\newcommand*\ie{i.\kern.1em e.\ }
\newcommand*\eg{e.\kern.1em g.\ }
\newcommand{\bc}[1]{\widetilde{#1}} 
\DeclareMathOperator\ch{\mathsf{ch}}
\DeclareMathOperator{\poly}{poly}
\DeclareMathOperator{\sign}{sign}
\DeclareMathOperator{\rank}{rank}
\renewcommand{\Pr}[1]{\bP \left[ #1 \right]} 
\newcommand{\define}{:=}
\newcommand{\inn}[1]{\langle #1 \rangle}
\newcommand{\zo}{\{0,1\}}
\newcommand{\pmset}{\{\pm 1\}}
\newcommand{\cA}{\ensuremath{\mathcal{A}}}
\newcommand{\cB}{\ensuremath{\mathcal{B}}}
\newcommand{\cG}{\ensuremath{\mathcal{G}}}
\newcommand{\cH}{\ensuremath{\mathcal{H}}}
\newcommand{\cM}{\ensuremath{\mathcal{M}}}
\newcommand{\cP}{\ensuremath{\mathcal{P}}}
\newcommand{\cQ}{\ensuremath{\mathcal{Q}}}
\newcommand{\cX}{\ensuremath{\mathcal{X}}}
\newcommand{\bN}{\ensuremath{\mathbb{N}}}
\newcommand{\bP}{\ensuremath{\mathbb{P}}}
\newcommand{\bR}{\ensuremath{\mathbb{R}}}
\newcommand{\bZ}{\ensuremath{\mathbb{Z}}}
\newcommand\splitaftercomma[1]{%
  \begingroup
  \begingroup\lccode`~=`, \lowercase{\endgroup
    \edef~{\mathchar\the\mathcode`, \penalty0 \noexpand\hspace{0pt plus .25em}}%
  }\mathcode`,="8000 #1%
  \endgroup
}
\tikzstyle{b_vertex}=[circle,fill=black!100,text=white,inner sep=0.8mm,draw]
\tikzstyle{w_vertex}=[circle,fill=white!100,text=white,inner sep=0.8mm,draw]
\tikzstyle{w_s_vertex}=[circle,fill=white!100,text=white,inner sep=0.5mm,draw]
\tikzstyle{point}=[circle,fill=black,inner sep=0.1mm]
\tikzstyle{path_edge}=[thick]
\tikzstyle{eat edge} = [draw,BrickRed,line width=2pt]
\newcommand{\R}{\mathsf{R}}
\newcommand{\DEQ}{\mathsf{D}^\textsc{Eq}}
\title{Randomized Communication and Implicit Representations \\ for Matrices and Graphs of Small Sign-Rank}
\author{Anonymous Authors}
\author{Nathaniel Harms\thanks{Some of this work was done while the author was at the University of
Waterloo, and visiting the University of Liverpool. Partly supported by an NSERC MSFSS, an NSERC
Postdoctoral Fellowship, and the Swiss State Secretariat for Education, Research and Innovation
(SERI) under contract number MB22.00026.}\\EPFL\\ \texttt{nathaniel.harms@epfl.ch}
\and
Viktor Zamaraev\\University of Liverpool\\ \texttt{viktor.zamaraev@liverpool.ac.uk}}
\newcommand{\Gyarfas}{Gy\'arf\'as }
\begin{document}
	
\maketitle

\begin{abstract}
We prove a characterization of the structural conditions on matrices of sign-rank 3 and unit disk
graphs (UDGs) which permit \emph{constant-cost} public-coin randomized communication protocols.
Therefore, under these conditions, these graphs also admit implicit representations.

The \emph{sign-rank} of a matrix $M \in \pmset^{N \times N}$ is the smallest rank of a matrix $R$
such that $M_{i,j} = \sign(R_{i,j})$ for all $i,j \in [N]$; equivalently, it is the smallest
dimension $d$ in which $M$ can be represented as a point-halfspace incidence matrix with halfspaces
through the origin, and it is essentially equivalent to the \emph{unbounded-error communication
complexity}. Matrices of sign-rank 3 can achieve the maximum possible \emph{bounded-error}
randomized communication complexity $\Theta(\log N)$, and meanwhile the existence of implicit
representations for graphs of bounded sign-rank (including UDGs, which have sign-rank 4) has been
open since at least 2003. We prove that matrices of sign-rank 3, and UDGs, have \emph{constant}
randomized communication complexity if and only if they do not encode arbitrarily large instances of
the \textsc{Greater-Than} communication problem, or, equivalently, if they do not contain large
half-graphs as semi-induced subgraphs. This also establishes the existence of implicit
representations for these graphs under the same conditions.

\end{abstract}

\thispagestyle{empty}
\setcounter{page}{0}
\newpage


\thispagestyle{empty}
\setcounter{page}{0}
\newpage
{\small
\setcounter{tocdepth}{2} 
\tableofcontents
}
\thispagestyle{empty}
\setcounter{page}{0}
\newpage
\setcounter{page}{1}

\newpage
\section{Introduction}

Consider a sign matrix $M \in \pmset^{N \times N}$. In communication complexity, learning theory,
and graph theory, it is often useful to represent $M$ as a point-halfspace incidence matrix of the
following form. To each row $x \in [N]$, assign a point $p_x \in \bR^d \setminus \{0\}$, and to each
row $y \in [N]$ assign a unit vector $h_y \in \bR^d$, such that $M(x,y) = \sign(\inn{p_x, h_y})$. In
other words, $M(x,y) = 1$ if and only if the point $p_x$ belongs to the halfspace $H_y \define \{ p
\in \bR^d \;|\; \inn{p,h_y} \geq 0\}$ whose boundary hyperplane goes through the origin. It is
always possible to find such a representation, but, naturally, we wish to accomplish it in the
simplest way. Here are two common ways to measure the complexity of this representation:

\newcommand{\mar}{\mathsf{mar}}
\textbf{Sign-rank.}
We might want to minimize the dimension $d$ of the representation. The minimum possible $d$
where $M$ admits such a representation is called the \emph{sign-rank} of $M$ and denoted
$\rank_\pm(M)$. It is equivalent to
the smallest rank $d$ of a matrix $R$ such that $M(x,y) = \sign(R(x,y))$ for all $x,y \in [N]$.
Thinking of the rows of $M$ as a fixed domain $\cX$, and the columns as a hypothesis class (\ie
subsets of $\cX$), a standard technique in learning theory is to transform the domain into points in
$\bR^d$, and the hypothesis class into halfspaces; $\rank_\pm(M)$ is the smallest dimension such that
this transformation is possible.  Since halfspaces through the origin in $\bR^d$ have VC dimension
$d$, sign-rank is lower bounded by the VC dimension of the hypothesis class. In
communication complexity, sign-rank is essentially equivalent to the \emph{unbounded-error}
communication complexity of $M$ \cite{PS86}, where the two players have access to \emph{private}
randomness and wish to succeed with probability strictly better than $1/2$. A set $\cM$ of
matrices has \emph{bounded sign-rank} if there exists a constant $d$ such that all matrices $M \in
\cM$ have sign-rank at most $d$. This is equivalent to having \emph{constant} unbounded-error
communication cost. In graph theory, finding \emph{implicit representations} (defined below) for
graphs whose adjacency matrices have bounded sign-rank is an open problem since at least 2003
\cite{Spin03}.

\textbf{Margin.}
We might want to maximize the \emph{margin} of the representation. For a fixed representation
$\{p_x\}_{x \in [N]}$ and $\{h_y\}_{y \in [N]}$, we define the \emph{margin} as $\min_{x,y}
\frac{|\inn{p_x,h_y}|}{\|p_x\| \cdot \|h_y\|}$. Write $\mar(M)$ for the maximum $m$ such that
there is a representation with margin $m$; the dimension of this representation is
irrelevant.  The complexity of various learning algorithms like SVM or perceptron can be bounded in
terms of the margin.  It is also known that $\mar(M)$ is functionally equivalent to the
\emph{two-way, public-coin randomized communication complexity} (\cref{prop:margin-discrepancy}). A
set $\cM$ of matrices has \emph{bounded margin} if there is some constant $m$ such that all $M \in
\cM$ have $\mar(M) \geq m$, and having bounded margin is equivalent to having \emph{constant}
public-coin randomized communication cost. Therefore, graphs whose adjacency matrices have bounded
margin admit implicit representations, due to the observation of \cite{Har20,HWZ22}.

One of the main goals in communication complexity is to understand the power of randomness, and both
of the above measures of complexity capture a type of randomized communication.  A rapidly-growing
body of work on \emph{constant-cost communication}
\cite{Har20,HHH22dimfree,HWZ22,HHH22counter,EHK22,HHL22,HHPTZ22,ACHS23,CHHS23,HHM23} studies
the properties of matrices with bounded margin or bounded sign-rank, but the relationship
between these two measures is not well understood.  In one direction, it is believed that there
exist sets of matrices $\cM$ with bounded margin but unbounded sign-rank, but all known lower bounds
fail to prove this \cite{HHPTZ22} (although it was proven for \emph{partial} matrices
\cite{HHM23}). In this paper, we are interested in the other direction:

\begin{center}
\emph{For matrices $\cM$ of bounded sign-rank, under what conditions does $\cM$ also have bounded
margin?}\footnote{Note that a matrix having bounded sign-rank \emph{and} bounded margin does not mean that sign-rank
and margin are bounded \emph{simultaneously} by the same point-halfspace representation.}
\end{center}

\newcommand{\GT}{\mathsf{GT}}
It is known that \emph{some} conditions are required. Write $\R(M)$ for the two-way, public-coin
randomized communication cost of a matrix $M \in \pmset^{N \times N}$ (which we will refer to simply
as \emph{communication cost}) and $\R(\cM)$ for the communication cost of matrices $M \in \cM$
as a function of their size $N$ (see \cref{section:communication-complexity} for formal
definitions). The \textsc{Greater-Than} communication problem, defined by the matrices $\GT \in
\pmset^{N \times N}$ where $\GT_{i,j} = 1$ if and only if $i > j$, has sign-rank 2 but communication
cost\footnote{Standard notation in the literature uses $n$ as the number of bits in the input; we
use $N$ for the domain size, so $\Theta(\log\log N)$ corresponds to the more commonly-stated bound
$\Theta(\log n)$.} 
$\R(\GT) = \Theta(\log\log N)$ and therefore unbounded margin. When sign-rank increases to 3, 
matrices can achieve the maximum possible communication cost $\R(\cM) = \Theta(\log N)$
\cite{HHL22, ACHS23}, far exceeding the complexity of \textsc{Greater-Than}. However, one of our
main results is that, for sign-rank 3, \textsc{Greater-Than} is the only barrier to constant-cost
communication:

\begin{theorem}[Informal]
\label{thm:main-signrank-informal}
A set $\cM$ of matrices with sign-rank 3 has $\R(\cM) = O(1)$ (and therefore constant margin) if and
only if it does not contain arbitrarily large instances of \textsc{Greater-Than}.
\end{theorem}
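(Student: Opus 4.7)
The plan is to prove the two directions of the ``if and only if'' separately.

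\emph{Necessity (the easy direction).} If $\cM$ contains arbitrarily large instances of $\GT$ as semi-induced subgraphs, then because $\R(\GT_N) = \Theta(\log\log N)$ is unbounded and restricting to sub-matrices can only decrease randomized communication cost, $\R(\cM)$ must be unbounded as well. So the interesting direction is sufficiency.

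\emph{Sufficiency (the hard direction).} Assume $\cM$ has sign-rank $3$ and contains no semi-induced copy of $\GT_k$ for some fixed $k$. I would proceed in three steps.

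\emph{Step 1: Geometric reformulation.} A sign-rank-$3$ matrix is realized by points $p_x \in \bR^3 \setminus \{0\}$ and oriented hyperplanes through the origin. Normalizing the points to the unit sphere $S^2$, the hyperplanes become oriented great circles, and $M(x,y) = 1$ becomes ``$p_x$ lies in the positive hemisphere determined by $h_y$''. A generic stereographic projection turns each hemisphere into one of three planar region types: an (open or closed) disk, the exterior of a disk, or a halfplane. Splitting columns into $O(1)$ classes by type and spending a constant number of public-coin bits to pick the class, it suffices to handle the case where every column encodes a genuine disk in $\bR^2$.

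\emph{Step 2: Extract structure from the no-$\GT$ hypothesis.} Disks in $\bR^2$ have VC dimension~$3$, and a semi-induced $\GT_k$ corresponds to a ``staircase'' of $k$ points and $k$ disks in which each disk contains exactly a prefix of the point sequence. The goal is to show that forbidding such staircases forces the point--disk incidence system to decompose, after partitioning both rows and columns into $O_k(1)$ parts, into blocks each of which is either an $\EQUALITY$ instance (or its complement), a matrix of bounded size, or a constant matrix. The decomposition would be built from the pairwise geometric relations between disks (disjoint, nested, crossing), combined with Ramsey-type arguments: a long monotone chain of one geometric type, together with a suitable witness point sequence, would rebuild a large $\GT$.

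\emph{Step 3: Convert the decomposition into a protocol.} Families of constant randomized communication cost can be characterized as constant-cost reductions to $\EQUALITY$; applying this characterization to the decomposition from Step 2 yields $\R(\cM) = O(1)$, and the equivalent margin bound follows from \cref{prop:margin-discrepancy}.

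\emph{Main obstacle.} The substantive step is Step 2: translating the combinatorial no-$\GT$ (stability-style) hypothesis into an explicit finite decomposition of point--disk incidence matrices. Bounded VC dimension combined with stability is known to yield strong regularity, but carving out an actual reduction to a bounded number of $\EQUALITY$-like blocks specifically for disks in the plane requires a careful case analysis of how disks interact (nested vs.\ crossing vs.\ disjoint), and in particular ruling out arbitrarily long chains of nested or crossing disks witnessed by points that would reconstruct $\GT_k$. Handling the disk/halfplane/disk-exterior trichotomy simultaneously, and making sure the three types do not interact to produce a new monotone staircase across types, is the most delicate point.
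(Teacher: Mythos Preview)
Your necessity argument is fine, and the high-level architecture of the sufficiency direction (geometric reduction, then structural decomposition, then an \textsc{Equality}-oracle protocol) matches the paper. But Step~2 as written is not a proof: it is a restatement of the goal together with a hope that a Ramsey argument on the three pairwise disk relations (nested, disjoint, crossing) will yield a finite block decomposition. You have not explained why such a decomposition exists or terminates. For instance, a long chain of nested disks is forbidden by the no-$\GT_k$ hypothesis only when points witness each level of the nesting; unwitnessed nested chains can be arbitrarily long, and the resulting columns need not form an \textsc{Equality}-like block. More seriously, pairwise-crossing disks carry no natural linear order, so it is unclear what monotone structure you would extract there, or how it rebuilds $\GT_k$. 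You flag this yourself as the ``main obstacle'', and indeed it is essentially the entire content of the theorem.

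The paper's route for this step is quite different from what you sketch. Instead of stereographic projection to disks, it projects out one coordinate to obtain point-\emph{halfplane} incidences in $\bR^2$, and then partitions into a bounded number of \emph{positive} point-halfplane incidence graphs (halfplanes whose normals lie in a common orthant). It proves that positive point-halfplane incidence graphs and their bipartite complements are edge-asteroid-triple-free, hence $(S_{3,3},\bc{S}_{3,3})$-free and $C_t$-free for all $t\ge 10$. The structural engine is then purely combinatorial and has nothing to do with disk geometry: any \Gyarfas decomposition of a stable bipartite graph forbidding $S_{s,t}$ (or forbidding long induced cycles) has bounded \emph{back-degree}, and a bounded-back-degree \Gyarfas decomposition supports a recursive \textsc{Equality}-oracle protocol whose recursion depth is controlled by the chain-index. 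None of this machinery---edge-asteroid triples, \Gyarfas decompositions, back-degree, or the chain-index recursion---appears in your outline.

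A small side error: in Step~3 you assert that constant randomized cost is ``characterized'' by reducibility to \textsc{Equality}. That equivalence is false (the paper itself cites \textsc{$1$-Hamming-Distance} as a counterexample); you only need, and only use, the correct one-way implication that a constant-cost reduction to \textsc{Equality} gives $\R=O(1)$.
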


We prove a similar theorem for the adjacency matrices of unit-disk graphs (UDGs), which have
sign-rank 4, and these results establish the existence of implicit representations when the
condition on the \textsc{Greater-Than} instances is satisfied.  We also exhibit a
fundamental gap between sign-rank 4 and 5 which shows that the ``type'' of randomness used in
our communication protocols cannot succeed in sign-rank 5 and above.
\cref{thm:main-signrank-informal} is a consequence of more general results whose motivation and
applications we elaborate upon below.

\subsection{Constant-Cost Communication and Implicit Graph Representations}

The study of constant-cost randomized communication was initiated independently in
\cite{HHH22dimfree,HWZ22}. One motivation of \cite{HWZ22} was that constant-cost communication
is a special case of a well-studied open problem in structural graph theory and distributed
computing, which asks to characterize the hereditary graph classes $\cG$ that admit \emph{implicit
representations} (see \eg \cite{KNR92,Spin03,ACLZ15,HH22,AAALZ23,Chan23}).

\newcommand{\enc}{\mathsf{enc}}
\paragraph*{Implicit representations.}
A \emph{class} of graphs is a set $\cG$ of (labeled) graphs that is closed under isomorphism. It is
\emph{hereditary} if it is closed under taking induced subgraphs. A hereditary class $\cG$ admits an
\emph{implicit representation} if there exists a decoder
$D : \zo^* \times \zo^* \to \zo$ such that, for every $N$-vertex graph $G \in \cG$, each vertex $v$
of $G$ can be assigned an encoding $\enc(v)$ of $O(\log N)$ bits, where $D(\enc(u),
\enc(v))$ outputs the adjacency of vertices $u,v$; the decoder $D$ depends on the class $\cG$ but
\emph{not} the specific graph $G$. Implicit representations were introduced in \cite{KNR92}, who
observed that they are equivalent to a graph $U$ of size $\poly(N)$, called a \emph{universal
graph}, that contains every $N$-vertex graph $G \in \cG$ as an induced subgraph. Since a graph of
size $\poly(N)$ has at most $2^{O(N \log N)}$ $N$-vertex induced subgraphs, a necessary condition
for the existence of implicit representations is that $\cG$ contains at most $2^{O(N \log N)}$
$N$-vertex graphs, in which case $\cG$ is said to have \emph{factorial speed}.

The communication problem defined by any matrix $M \in \pmset^{N \times N}$ is equivalent
to the problem of deciding adjacency in the (bipartite) graph whose adjacency matrix is $M$, where
each player is given a vertex. Building on \cite{Har20}, \cite{HWZ22} observed that constant-cost
communication problems $\cM$ are \emph{equivalent} to hereditary graph classes $\cG$ that admit an
\emph{adjacency sketch}, which is a randomized version of an implicit representation, where the
encodings $\enc(v)$ are assigned by a randomized algorithm and have \emph{constant size}
(independent of the number of vertices), in such a way that
\[
  \forall u,v : \qquad \Pr{ D(\enc(u), \enc(v)) \text{ correctly outputs adjacency of } u,v } \geq
2/3 \,.
\]
Adjacency sketches for trees also appeared earlier in \cite{FK09}. As noted in \cite{Har20,HWZ22},
adjacency sketches can be derandomized (see \cref{section:communication-to-labeling}) to obtain
implicit representations, making constant-cost randomized communication protocols a stronger type of
implicit representation. 

\paragraph*{Unit disk graphs.}
This again motivates our focus on sign-rank. Graphs whose adjacency matrices have bounded sign-rank
are among the most important types of graphs for which implicit representations are not known to
exist in general: to obtain implicit representations for geometric intersection graphs (more precisely,
semi-algebraic graphs), it suffices to study graphs of bounded sign-rank (see \eg
\cite{Fit19}). Any class of bounded sign-rank satisfies the necessary condition of \emph{factorial
speed} \cite{Spin03}, which was conjectured to be sufficient in \cite{KNR92,Spin03}.  Until this
conjecture was refuted in \cite{HH22} by a non-constructive argument, classes of bounded sign-rank
were considered promising candidates for a counterexample \cite{Spin03}. The best
known implicit representations for classes of bounded sign-rank in general use $O(N^{1-\epsilon})$
bits per vertex where $\epsilon > 0$ is a constant \cite{Fit19,Alo22}.

A canonical example is the unit disk graphs (UDGs).  UDGs admit an ``implicit representation'' in
the sense that each vertex may be encoded with the coordinates of its disk in $\bR^2$. However, this
encoding requires exponentially-many bits \cite{MM13}, and it is a central open problem whether this
difficulty can be sidestepped to obtain encodings of size $O(\log N)$; our understanding is that
this is not widely believed to be possible. In this paper, we resolve the \emph{randomized} version
of the question by giving a complete characterization of the UDGs which admit constant-size
adjacency sketches. To state this result, we require the notion of \emph{stability} (see \eg
\cite{HWZ22,POS22}).

\paragraph*{Stability.} The \emph{chain-index} $\ch(G)$ of a graph $G$ is the largest $k$
such that there exist disjoint sets of vertices $\{a_1, \dotsc, a_k\}$ and $\{b_1, \dotsc, b_k\}$
where, for any $i < j$, $a_i, b_j$ are adjacent but $b_i, a_j$ are not.  In the
terminology of \cite{GPT22}, a graph class $\cG$ is \emph{graph-theoretically stable} if there is a
constant $k$ such that $\ch(G) \leq k$ for all $G \in \cG$; we will say simply
\emph{stable}\footnote{We use \emph{stable} in this paper but we note that the disambiguation
\emph{graph-theoretically stable} in \cite{GPT22} is necessary to avoid confusion with stability in
the literature on model theory.}.
The chain-index is essentially\footnote{Not exactly: we have no restriction on the adjacency between
$a_i, b_i$, which
helps the analysis but is not qualitatively important.} the largest instance of the
\textsc{Greater-Than} communication problem that appears in $G$, and therefore a class that is
\emph{not} stable must have non-constant communication cost (see \cite{HWZ22,HHH22counter} for more
on the stability condition in communication).
For a graph class $\cG$, write
$\R(\cG)$ for the function $N \mapsto \max_G \R(\Adj_G)$ where $G$ ranges over the $N$-vertex graphs
in $\cG$ and $\Adj_G$ is the adjacency matrix of $G$ (if $\cG$ is a class of bipartite graphs, we
take the bipartite adjacency matrix). Stability is necessary for $\R(\cG) = O(1)$; for UDGs and
graphs of sign-rank 3, we show it is also \emph{sufficient}:

\begin{theorem}
\label{thm:intro-main-udg}
Let $\cG$ be either a subclass of UDGs, or a class of sign-rank at most 3. Then $\R(\cG) = O(1)$ if
and only if $\cG$ is stable. As a consequence, stable subclasses of UDGs and graphs of sign-rank 3
admit implicit representations.
\end{theorem}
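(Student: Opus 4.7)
The necessary direction is immediate: if $\cG$ is not stable, then arbitrarily large half-graph patterns appear as semi-induced subgraphs, and these are essentially instances of $\GT$, so $\R(\cG) = \Omega(\log\log N)$ by the standard lower bound for $\GT$. For the sufficient direction, the plan in both cases is to reduce adjacency to a constant number of queries to $\Eq$ on $O(\log N)$-bit labels; since public-coin $\Eq$ has $O(1)$ cost, this yields $\R(\cG) = O(1)$. Implicit representations then follow from the standard derandomization of constant-cost public-coin protocols (discussed in the excerpt), which converts a constant-cost sketch into a deterministic $O(\log N)$-bit labeling scheme, so I would treat that last step as a black box.

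\textbf{Sign-rank 3.} Fix a point-halfspace representation with points $p_x \in \bR^3\setminus\{0\}$ and unit normals $h_y \in S^2$, and project to the sphere so that adjacency records which hemisphere contains $p_x$ relative to the great circle orthogonal to $h_y$. My plan is to prove a structure theorem: stability forces a partition of rows and columns into a bounded number of ``types,'' and within each pair of types the induced submatrix has a trivial form---either constant, a permutation of the identity, or otherwise reducible to $O(1)$ calls to $\Eq$. I would obtain this by orienting $S^2$ and performing a sweep in which bounded chain-index caps the length of any monotone sequence of hemispheres encountered, combined with a Ramsey-type homogenization to extract a bounded number of clean classes. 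Once the structure theorem is in hand, the protocol is routine: the players use public coins to identify each other's type via a constant number of $\Eq$ tests on $O(\log N)$-bit labels, then evaluate adjacency within the type pair using its trivial form.

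\textbf{UDGs and the main obstacle.} For unit disk graphs, adjacency is determined by $\|u-v\|_2 \le 1$ in $\bR^2$, and the analogous plan is to reduce adjacency to an $\Eq$-based protocol. A long half-graph in a UDG corresponds to two clusters of points whose pairwise distances increase monotonically across the threshold $1$; stability forbids this and should, after geometric packing arguments in the plane, yield a decomposition into a bounded number of ``scales'' within which adjacency is determined by Euclidean near-identity of appropriately coarsened coordinates---exactly the behaviour that $\Eq$ captures. I would make this precise through a grid-based bounded-diameter argument followed by Ramsey cleanup, producing a protocol with $O(1)$ calls to $\Eq$. The main obstacle, in both cases, is the structure theorem itself: translating the purely combinatorial bound on chain-index into a robust geometric decomposition. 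For sign-rank $3$ this requires controlling arbitrary arrangements of great circles on $S^2$ under a single combinatorial parameter; for UDGs it requires ruling out delicate multi-scale half-graph patterns in the Euclidean plane. I expect these geometric decompositions, rather than the protocol assembly or the derandomization, to absorb the bulk of the technical work.
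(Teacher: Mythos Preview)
Your necessary direction is fine, and for UDGs the initial grid decomposition into co-bipartite pieces is also the right first move. The gap is in the core ``structure theorem'' you propose for both cases.

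For sign-rank $3$, the claim that stability yields a partition of rows and columns into $O(1)$ types such that each type-pair submatrix is constant or a permuted identity is simply not true, and the sweep-plus-Ramsey sketch does not make it true. A stable class of sign-rank $3$ can still be arbitrarily complicated at any fixed ``scale''; there is no flat bounded-size partition with trivial blocks. The paper instead proceeds in two stages. First, a geometric reduction: graphs of sign-rank $3$ are split, via \cref{cor:srk-vs-pphi}, into a constant number of \emph{positive} point-halfplane incidence graphs, which (together with their bipartite complements) are shown to be edge-asteroid-triple-free (\cref{lem:eat-free}, \cref{prop:positive-comp}), hence $(S_{3,3},\bc S_{3,3})$-free and $(C_t,\bc C_t)$-free for $t\ge 10$. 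Second, a purely combinatorial \emph{recursive} protocol (\cref{lemma:gyarfas-protocol}) on any stable bipartite class closed under bipartite complement with these forbidden subgraphs: one takes a \Gyarfas decomposition, proves its \emph{back-degree} is bounded (\cref{lemma:gyarfas-bdd-chordality}, \cref{lemma:gyarfas-star}), uses $O(1)$ \textsc{Equality} calls to locate the relevant bag, and recurses on a subgraph (possibly the bipartite complement) whose chain-index has strictly dropped, via the lemmas of \cite{POS22}. The recursion depth is bounded by the chain-index, giving a genuine $\DEQ=O(1)$ protocol. Nothing like your ``monotone sequence of hemispheres'' argument appears or is needed.

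For UDGs the same issue arises after the grid step. Within a pair of nearby cells the semi-induced bipartite graph is \emph{not} determined by any coarsening of coordinates, and stability alone does not make it so; you would need exactly the half-graph structure you have forbidden to encode such thresholds. The paper invokes \cref{lemma:az18} to get the same edge-asteroid-triple-free conclusion for these co-bipartite pieces and then applies the identical \Gyarfas-based protocol. So the substantive technical content---bounded back-degree of the \Gyarfas decomposition under the forbidden-subgraph hypotheses, and the chain-index-decreasing recursion---is entirely absent from your proposal, and the flat ``types'' decomposition you substitute for it does not hold.
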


\subsection{Results and Techniques}

\cref{thm:intro-main-udg} follows from a more general result that has other implications for
implicit graph representations and which unifies and generalizes a number of previous results. We
also complement it with an impossibility result that rules out using the type of randomized
techniques in this paper to prove similar results in sign-rank 5 and above. Let us now explain these
results in more detail and give a brief summary of the techniques.

\paragraph*{Constant-cost reductions.}
We require the notion of \emph{constant-cost reductions} and the \textsc{Equality} oracle.  The
\textsc{Equality} communication problem is the standard example of the power of (public-coin)
randomized communication. Two players are given inputs $x, y \in [N]$, respectively, and they must
decide if $x = y$. By random hashing, this can be done with success probability $3/4$ using only 2
bits of communication. The success probability can be improved to any arbitrary constant
by increasing the number of bits by a constant factor.

One way to design a constant-cost communication protocol is to design a \emph{deterministic}
communication protocol with constant cost, which has access to an oracle that computes
\textsc{Equality}. This means that the two players can, at any time, supply the oracle with
arbitrary values $a,b$ and receive, at unit cost, the answer to the query ``$a = b$?''
The power of the \textsc{Equality} oracle has been studied in several works
\cite{BFS86,CLV19,HHH22dimfree,HWZ22,EHK22,AY22,CHHS23,PSS23}.
One may think of these protocols as the ones that can be implemented using standard practical hash
functions like SHA256.  Constant-cost protocols of this form are examples of \emph{constant-cost
reductions}, a type of reduction that is natural for both constant-cost communication complexity and
implicit graph representations; we formally define constant-cost reductions in general in
\cref{section:reductions}. Along with the algorithmic definition of reductions to \textsc{Equality},
there is an equivalent structural definition (see \eg \cite{HWZ22,AY22}): if a graph class $\cG$
admits a constant-cost protocol for computing adjacency in graphs $G \in \cG$, using
\textsc{Equality} oracles, then there exists a constant $t$ such that the adjacency matrix $\Adj_G$
of every graph $G \in \cG$ (or bipartite adjacency matrix, if $\cG$ is a class of bipartite graphs)
can be written as
\[
  \forall x,y :\qquad \Adj_G(x,y) = f(Q_1(x,y), Q_2(x,y), \dotsc, Q_t(x,y)) \,,
\]
where $f : \zo^t \to \zo$ and each $Q_i$ is the bipartite adjacency matrix of a bipartite
\emph{equivalence graph} (disjoint union of bicliques). We write $\DEQ( M )$ for the minimum cost of
a 2-way deterministic protocol with \textsc{Equality} oracles.  For computing adjacency in
\emph{monotone} graph classes (closed under edge \& vertex deletions), all constant-cost randomized
protocols can be put in this form \cite{EHK22}, but in general they cannot
\cite{HHH22dimfree,HWZ22}.  \cite{HHPTZ22} showed that $\DEQ(\cG) = O(1)$ implies that $\cG$ has
bounded sign-rank; our results explore the converse.

\paragraph*{Forbidden cycles and subdivided stars.}
Our \cref{thm:intro-main-udg} is a consequence of a more general result, \cref{thm:intro-main}
below, which also makes some progress towards characterizing the \emph{finitely-defined} bipartite
graph classes for which constant-cost communication and implicit representations are possible.  For
any set $\cH$ of bipartite graphs, a class $\cG$ of bipartite graphs is \emph{$\cH$-free} if no
graph $G \in \cG$ contains any $H \in \cH$ as an induced subgraph. Every hereditary class of
bipartite graphs is $\cH$-free for some unique but possibly infinite set $\cH$. For fixed $\cH$,
write $\cG_\cH$ for the $\cH$-free bipartite graphs.
For a bipartite graph $G=(U,W,E)$ with a fixed bipartition, we write $\bc{G}$ for the bipartite complement of $G$, i.e. $\bc{G}=(U,W,(U \times W) \setminus E)$.

The condition that $\cG$ is \emph{stable} is equivalent to the condition that it is $H_k$-free for
some constant $k$, where $H_k$ denotes the \emph{half-graph} (see \cite{HWZ22}), so $\R(\cG_\cH) =
O(1)$ requires that $\cH$ contain some half-graph $H_k$. When $\cH$ is \emph{finite}, it is also
necessary that $\cH$ contain both a tree and the bipartite complement of a tree,
otherwise the number of graphs in $\cG_\cH$ is too large \cite{LZ17}. In the case $|\cH| = 2$, it is
therefore necessary for $\R(\cG_\cH) = O(1)$ that $\cH = \{H_k, T\}$ where $T$ and its bipartite
complement $\bc{T}$ are both trees; it was proved in \cite{HWZ22} that this is also
\emph{sufficient}. We believe these conditions remain sufficient for larger (but still
finite) $\cH$, \ie $\R(\cG_\cH) = O(1)$ whenever $\cH = \{ H_k, T_1, \bc{T_2}\}$ for some trees $T_1$ and
$T_2$.  When $T_1$ and $T_2$ are \emph{subdivided stars}, our result confirms this.

\begin{definition}[Subdivided Star]
\label{def:subdivided-star}
For $s,t \in \bN$, we write $S_{s,t}$ for the
\emph{subdivided star}, which is obtained by taking the star graph with $s$ leaves and subdividing
each edge $t-1$ times. 
\end{definition}

As usual, we denote by $C_t$ the cycle on $t$ vertices.
Our main technical result is:

\begin{restatable}{theorem}{thmintromain}
\label{thm:intro-main}
Let $\cG$ be a stable class of bipartite graphs that satisfies either of these conditions:
\begin{enumerate}
\item There exist constants $s, t$ such that $\cG$ is $(S_{s,t}, \bc{S}_{s, t})$-free;
or
\item There exists a constant $t$ such that $\cG$ is $\{ C_{t'}, \bc{C}_{t'} ~|~ t' \geq t \text{ and } t' \text{ is even} \}$-free.
\end{enumerate}
Then $\DEQ(\cG) = O(1)$.
\end{restatable}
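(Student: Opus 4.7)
The goal is to show that for every $G \in \cG$ the bipartite adjacency matrix $\Adj_G$ decomposes as a Boolean combination of a constant number of bipartite equivalence graphs; by the structural characterisation of $\DEQ$-protocols recalled in the text, this is equivalent to $\DEQ(\cG) = O(1)$.

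\emph{Unifying observation.} Both hypotheses forbid ``long'' induced structures. For condition~1: $S_{s,t}$ contains an induced path on $2t+1$ vertices when $s \geq 2$, and coincides with the path on $t+1$ vertices when $s = 1$; so $S_{s,t}$-freeness imposes a uniform bound on the length of any induced path in $G$, and the bipartite-complement condition gives the same bound for $\bc{G}$. Condition~2 directly forbids long induced (bipartite) cycles in both $G$ and $\bc{G}$.

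\emph{Step~1 (distance decomposition).} Fix a connected $G = (U, W, E) \in \cG$ and a root $r$, and organise the vertices into BFS layers $L_0, L_1, \dots, L_D$. Under condition~1 the induced-path bound forces $D = O_{s,t}(1)$; under condition~2 the goal is to derive an analogous bounded ``essential depth'' by combining the cycle restriction with stability via a Ramsey-style argument (a long BFS-chain with a stable adjacency pattern should close up into a long induced even cycle). It then suffices to describe the adjacency between any consecutive pair $(L_i, L_{i+1})$ using $O(1)$ equivalence graphs, since the layer indices themselves constitute a single equivalence on vertices and a case-analysis on $(i,j)$ combines the per-pair decompositions.

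\emph{Step~2 (types via stability plus forbidden subgraphs).} Inside each pair $(L_i, L_{i+1})$, the induced bipartite graph is still stable and still satisfies the forbidden-subgraph hypothesis. I plan to partition each side into $O(1)$ ``types'' by iteratively selecting a constant number of \emph{anchor} vertices that split currently-unresolved pairs according to their neighbourhoods. The key claim is that the process halts after $O(1)$ rounds---otherwise the resulting alternating sequence of unresolved pairs and separating anchors would encode either a long half-graph (contradicting stability) or a long induced path (for condition~1) or a long induced even cycle (for condition~2).

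\emph{Step~3 (encoding via \textsc{Equality}).} Label each vertex $v$ with the pair $(\ell(v), \tau(v))$ where $\ell(v) \in \{0, 1, \dots, D\}$ is its layer index and $\tau(v)$ its type within its layer; both coordinates range over a constant-sized set. Then $\Adj_G(u,v)$ is a function of $(\ell(u), \tau(u), \ell(v), \tau(v))$, and this function is a Boolean combination of $O(1)$ \textsc{Equality} queries on individual coordinates, which is exactly the required decomposition.

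\emph{Main obstacle.} The genuinely delicate step is Step~2: fixing the correct notion of ``type'' and bounding uniformly the number of anchors. Stability alone provides no such bound; the argument must convert a hypothetical unbounded chain of type-distinctions into either a long half-graph, a long induced path, or a long induced even cycle. I expect conditions~1 and~2 to need parallel but combinatorially distinct extraction arguments, with condition~2 likely requiring a chordal-bipartite- or Helly-type detour---one that closes a chain of type-distinctions into a cycle rather than a path---for which condition~1 has no direct analogue.
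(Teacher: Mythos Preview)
Your ``unifying observation'' has the implication backwards. The fact that $S_{s,t}$ \emph{contains} an induced $P_{2t+1}$ means that $P_{2t+1}$-freeness implies $S_{s,t}$-freeness, not the other way around. In particular, $S_{s,t}$-freeness does \emph{not} bound the length of induced paths: for any $s \geq 3$, an arbitrarily long path $P_n$ is $S_{s,t}$-free (it has no vertex of degree $\geq 3$), is stable (its chain-index is bounded by a small absolute constant), and is $\bc{S}_{s,t}$-free for suitable parameters, yet its BFS depth is $n-1$. The same example kills Step~1 for condition~2 as well, since paths contain no cycles. So the depth $D$ of your layer decomposition is genuinely unbounded under either hypothesis, and the whole plan of ``constant depth $\times$ constant types per layer'' cannot get off the ground.

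This is exactly why the paper does \emph{not} attempt to bound the depth of the decomposition. Instead it uses the \Gyarfas decomposition and proves (\cref{lemma:gyarfas-bdd-chordality,lemma:gyarfas-star}) that under either hypothesis each bag has edges to only a bounded number of its \emph{ancestor} bags (bounded \emph{back-degree}), even though the tree itself can be arbitrarily deep. The protocol (\cref{lemma:gyarfas-protocol}) then uses $O(\ell)$ \textsc{Equality} queries to locate which ancestor bag contains the other player's vertex, and recurses on the subgraph hanging below that bag; the crucial point---borrowed from \cite{POS22} via \cref{lemma:pos-depth-1,lemma:pos-depth-2}---is that this recursion strictly decreases the chain-index, so it terminates in $\leq k$ rounds. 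Your Step~2 also aims for something too strong: a constant-size ``type'' labelling with adjacency a function of types would force each inter-layer bipartite graph to have boundedly many distinct neighbourhoods, which stability plus the forbidden subgraphs do not give. The paper never needs such a structural claim; it trades it for the back-degree bound plus the chain-index induction.
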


We use \cref{thm:intro-main} to prove \cref{thm:intro-main-udg} by decomposing UDGs or graphs of
sign-rank 3 into bipartite graphs that are both $(S_{3, 3}, \bc{S}_{3, 3})$-free and $\{ C_{t},
\bc{C}_{t} ~|~ t \geq  10 \text{ and } t \text{ is even}\}$-free (which, to clarify, is stronger
than necessary to apply the theorem). We remark that the implicit representation implied by
\cref{thm:intro-main} can be efficiently computed, meaning that the labels can be constructed in
time $\poly(N)$ and decoded in time $\poly\log N$. This efficiency is inherited by the implicit
representations of UDGs and graphs of sign-rank 3, provided that the encoder is given the geometric
representation of the input graph.

\cref{thm:intro-main} is much more general, and also allows us to
recover several prior results. Analogs of \cref{thm:intro-main-udg,thm:intro-main} for the classes
of permutation graphs, interval graphs, and $P_7$-free and $S_{1,2,3}$-free bipartite graphs were
proved in \cite{HWZ22}. All of these results, which in \cite{HWZ22} each required different proof
strategies, follow as corollaries of \cref{thm:intro-main,thm:intro-main-udg}. Likewise,
\cite{AAALZ23} showed the existence of implicit representations for stable, chordal bipartite
graphs, which is also implied by \cref{thm:intro-main}.

\paragraph*{Higher sign-ranks and weakly-sparse graphs.}
To advance beyond sign-rank 3, it is helpful to compare the \emph{stability} condition with the
stronger \emph{weakly-sparse} condition. A class of graphs $\cG$ is \emph{weakly-sparse} if there is
a constant $t$ such that no graph $G \in \cG$ contains $K_{t,t}$ as a subgraph. Any weakly-sparse
class is also stable. It is known and not difficult to prove that any weakly-sparse subclass of
UDGs has bounded degeneracy, and therefore the analog of \cref{thm:intro-main-udg} for weakly-sparse
UDGs is trivial (because $\DEQ(\cG) = O(1)$ for any $\cG$ of bounded degeneracy). For weakly-sparse
graph classes, we present a proof in \cref{section:hierarchy} that reductions to \textsc{Equality}
are \emph{equivalent} to bounded degeneracy:

\begin{restatable}{theorem}{thmeqlowerbound}
\label{thm:eq-lower-bound}
Let $\cG$ be a hereditary class of bipartite graphs that is weakly-sparse. Then $\DEQ(\cG) = O(1)$
if and only if $\cG$ has bounded degeneracy.
\end{restatable}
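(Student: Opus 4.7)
The plan is to prove each direction separately. For the easy direction, that bounded degeneracy implies $\DEQ(\cG) = O(1)$: given a graph $G \in \cG$ of degeneracy $d$, fix an ordering $v_1, \ldots, v_N$ of its vertices such that each $v_i$ has at most $d$ neighbors among $v_1, \ldots, v_{i-1}$. Label each vertex $v$ with its identifier $\mathrm{id}(v)$ together with the list $L(v)$ of identifiers of its at most $d$ earlier neighbors. Two players with inputs $u$ and $v$ can then decide adjacency using $O(d) = O(1)$ equality queries, testing whether any element of $L(u)$ equals $\mathrm{id}(v)$ or any element of $L(v)$ equals $\mathrm{id}(u)$.

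For the hard direction, assume $\DEQ(\cG) \leq t$ and that $\cG$ is $K_{s,s}$-free for some constant $s$. By the structural characterization of reductions to \textsc{Equality} recalled earlier, for every $G = (X, Y, E) \in \cG$ we can write $\Adj_G(x,y) = f(Q_1(x,y), \ldots, Q_t(x,y))$, where each $Q_i$ is the bipartite adjacency matrix of an equivalence graph determined by equivalence relations $\pi_i$ on $X$ and $\sigma_i$ on $Y$. The common refinement of these relations yields partitions $X = \bigsqcup_\alpha X_\alpha$ and $Y = \bigsqcup_\beta Y_\beta$ on whose product $\Adj_G$ is constant, so $E(G)$ is a disjoint union of full bicliques $X_\alpha \times Y_\beta$, and weak sparsity forces $|X_\alpha| < s$ or $|Y_\beta| < s$ at every edge-cell.

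I would prove the degeneracy bound by induction on $t$. Fix one of the relations, say $Q_1$, and split $E(G) = E^{=} \cup E^{\neq}$ according to whether $Q_1(x,y) = 1$ (aligned part) or $Q_1(x,y) = 0$ (cross part). The aligned part is vertex-disjoint across the individual $Q_1$-bicliques $X_1^{(c)} \cup Y_1^{(c)}$; on each such biclique the adjacency of $G$ is governed by $f(1, Q_2, \ldots, Q_t)$ with $Q_2, \ldots, Q_t$ restricted to the biclique. Since the induced subgraph of $G$ on this biclique inherits $K_{s,s}$-freeness and now satisfies $\DEQ \leq t - 1$, the induction hypothesis bounds its degeneracy, and hence the degeneracy of $E^{=}$.

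The main obstacle is the cross part $E^{\neq}$, for which the naive induction fails because the constraint $\neg Q_1$ is not itself an equivalence graph. Expanding $f(0, Q_2, \ldots, Q_t)$ in DNF, each clause containing at least one positive literal $Q_j$ with $j \geq 2$ contributes edges lying in a disjoint union of bicliques; being $K_{s,s}$-free forces each such biclique to have a side of size less than $s$, giving degeneracy less than $s$. The delicate case is the ``all-zero'' clause, where the edges lie in the bipartite complement of $\bigcup_i Q_i$. The plan here is a Ramsey-type argument: if this subgraph contains an induced subgraph of minimum degree exceeding a threshold $d(s, t)$, then one can select $s$ vertices $x_1, \ldots, x_s$ whose joint neighborhood $Y \setminus \bigcup_{i,j} C_i(x_j)$ contains at least $s$ elements (where $C_i(x)$ denotes the $\sigma_i$-class matched with $x$'s $\pi_i$-class), producing a $K_{s,s}$ in $G$ and a contradiction. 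The union $\bigcup_{i,j} C_i(x_j)$ ranges over at most $st$ equivalence classes across the $\sigma_i$, so a careful pigeonhole selection of the $x_j$ into common $\pi_i$-classes collapses this union to a bounded number of distinct classes, keeping the joint complement sufficiently large. Combining the aligned and cross bounds via additivity of degeneracy under edge unions then completes the induction.
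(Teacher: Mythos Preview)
Your easy direction is fine and matches the paper.  The hard direction, however, has a genuine gap in the treatment of the cross part, and the overall inductive strategy does not go through as written.

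The claim that ``each clause containing at least one positive literal $Q_j$ contributes edges lying in a disjoint union of bicliques; being $K_{s,s}$-free forces each such biclique to have a side of size less than $s$'' is incorrect.  A clause such as $Q_2 \wedge \neg Q_3$ contributes edges that lie inside the (vertex-disjoint) $Q_2$-bicliques, but inside each such biclique the edges are those with $Q_3 = 0$, which is the bipartite complement of an equivalence graph --- not a full biclique, and not obviously of small degeneracy.  If instead you mean the minterm DNF, then a single minterm's edge set is a union of cells $X_\alpha \times Y_\beta$ of the common refinement; each cell has a small side by $K_{s,s}$-freeness, but these cells are \emph{not} vertex-disjoint (one $X_\alpha$ is paired with many $Y_\beta$'s), so ``small side in each biclique'' does not bound the degeneracy.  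The same objection applies to the all-zero minterm, so it is not a special case that can be isolated and handled by a separate pigeonhole argument.  More fundamentally, the induction on $t$ is stuck at the cross part because $f(0,Q_2,\dots,Q_t)$ viewed as a graph on all of $X \times Y$ need not be $K_{s,s}$-free (it can contain edges that $G$ does not), so you cannot invoke the inductive hypothesis on it; and intersecting with $\neg Q_1$ reintroduces a $t$-th relation that is not an equivalence graph.

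The paper takes a completely different route.  It shows (via a lemma implicit in \cite{HWZ22}) that any class with the Ramsey property ``every $k$-edge-colouring of some $G \in \cG$ has a long monochromatic induced path'' must have $\DEQ(\cG)=\omega(1)$, and then proves (\cref{lemma:bonamy-girao}) that every $K_{t,t}$-free bipartite graph of sufficiently high average degree has this Ramsey property.  Thus unbounded degeneracy $\Rightarrow$ arbitrarily high average degree $\Rightarrow$ Ramsey property $\Rightarrow$ $\DEQ=\omega(1)$.  The point is that a monochromatic induced path of length $\ell$ is exactly what breaks a bounded $\DEQ$-protocol: along such a path all the oracle answers on edges agree, yet the protocol must still separate edges from the non-edges of the path.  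This global Ramsey argument sidesteps the difficulty of inducting through $\neg Q_i$.
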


In \cite{CHHS23}, it is conjectured that the point-line incidence graphs $\mathcal{PL}$ satisfy
$\R(\mathcal{PL}) = \omega(1)$. \cref{thm:eq-lower-bound} shows the weaker result $\DEQ(\cG) =
\omega(1)$, because point-line incidences are $K_{2,2}$-free and have unbounded degeneracy. They
also have sign-rank at most 6, which means that the \textsc{Equality} oracle does not suffice to
extend \cref{thm:intro-main} to sign-rank 6 and above, even if the stability condition
is replaced with the much stronger \emph{weakly-sparse} condition. Combining known results in the
literature, we also give in \cref{section:hierarchy} an example ($K_{2,2}$-free point-box incidence
graphs) with sign-rank 5 that is $K_{2,2}$-free but has unbounded degeneracy, showing in fact that
the \textsc{Equality} oracle does not suffice to extend \cref{thm:intro-main} to sign-rank 5.
It may be the case that reductions to \textsc{Equality} are the only type of constant-cost
communication possible for matrices of bounded sign-rank, see \cref{question:constant-cost}.
We summarize the known results for low sign-ranks in \cref{table:hierarchy}.

\newcommand{\resultcell}{\cellcolor{YellowGreen}}
\begin{table}
\centering
\begin{tabular}{c|c|c|c|c}
 & \multicolumn{4}{c|}{Sign-rank} \\
\cline{2-5}
 & 3 & 4 & 5 & 6 \\
\hline
\multirow{2}{*}{Weakly-sparse} & $\DEQ = O(1)$ & $\DEQ = O(1)$ & \resultcell $\DEQ =
\omega(1)$ & \resultcell $\DEQ = \omega(1)$ \\
\cline{2-5}
 & $\R = O(1)$  & $\R = O(1)$ & $\R=?$ & $\R = $ conjectured $\omega(1)$ \\
\hline
\multirow{2}{*}{Stable} & \resultcell $\DEQ = O(1)$ & \resultcell $\DEQ = O(1)$ for UDGs & \resultcell $\DEQ = \omega(1)$ & \resultcell $\DEQ = \omega(1)$ \\
\cline{2-5}
 & \resultcell $\R = O(1)$ & \resultcell $\R = O(1)$ for UDGs & $\R=?$ & $\R = $ conjectured $\omega(1)$ \\
\hline
\end{tabular}
\caption{The landscape of communication complexity for small sign-ranks; see
\cref{section:hierarchy}. Cells in green are results
of this paper. The conjectures for sign-rank 6 are due to \cite{CHHS23}. The upper bounds for
weakly-sparse and sign-rank $\leq 4$ follow from \cite{CH23} (and we also give an alternate proof of a
weaker statement in \cref{section:caratheodory}).}
\label{table:hierarchy}
\end{table}

\paragraph*{Proof overview.}
We briefly summarize the proofs of \cref{thm:intro-main,thm:intro-main-udg}. Although UDGs and
graphs of sign-rank 3 do not satisfy the conditions of \cref{thm:intro-main}, we prove that two
parties with access to an \textsc{Equality} oracle can agree on a graph decomposition into pieces
that avoid \emph{edge-asteroid triple} structures (used in \eg \cite{FHH99,STU10,AZ18}), which
guarantees that these pieces satisfy the conditions of \cref{thm:intro-main}.

Our main tool to prove \cref{thm:intro-main} is the \emph{\Gyarfas decomposition}, which we take
from \cite{POS22}. The \Gyarfas decomposition partitions a bipartite graph into bags of vertices
with a tree-like structure on the bags that controls the edges between the bags. In particular,
every root-to-leaf path on the bags induces a path in the original graph. For this reason, the
\Gyarfas method has previously been used (as in \cite{POS22}) to analyze $P_t$-free graphs, \ie
graphs which forbid long induced paths, where the depth of the decomposition is constant.

However, in our case, the depth of the decomposition is unbounded. Instead, we show that, under the
conditions of \cref{thm:intro-main}, each bag has edges to only a bounded number of its ancestors.
Using this guarantee, we show that a communication protocol on input vertices $x,y$ may use the
\textsc{Equality} oracle to either determine the adjacency, or agree on a subset of bags that
contains $x$ and $y$. The protocol may then recurse on these bags, sometimes switching to the
bipartite complement of the graph when it does so (this is why \eg we require both $S_{s, t}$
\emph{and} $\bc{S}_{s, t}$ to be forbidden). Due to arguments of \cite{POS22}, this recursion will
reduce the chain-index of the graph and is therefore guaranteed to terminate after a constant number
of iterations.

\subsection{Discussion and Open Problems}

\paragraph*{Communication complexity.}
An intriguing possibility arises from this work, in conjunction with other
recent work on bounded sign-rank. Adapting (or abusing) some notation of \cite{BFS86}, write
$\mathsf{UPP}[1]$ for the set of communication problems with bounded sign-rank (\ie constant
\emph{unbounded-error communication cost} \cite{PS86}), write $\mathsf{BPP}[1]$ for the set of
communication problems with constant public-coin randomized communication cost, and write $\DEQ[1]$
for the set of communication problems with a constant-cost reduction to \textsc{Equality}. With
these definitions of communication complexity classes, we can ask:

\begin{question}
\label{question:constant-cost}
Is it the case that $\DEQ[1] = \mathsf{UPP}[1] \cap \mathsf{BPP}[1]$?
\end{question}

A positive answer to this question would ``explain'' all of the known results and conjectures
relating these classes. It is proved in \cite{CHHS23} that $\DEQ[1] \subseteq \mathsf{UPP}[1] \cap
\mathsf{BPP}[1]$. In the other direction, there are communication problems in $\mathsf{BPP}[1]$ that
do not belong to $\DEQ[1]$, which was proved independently in \cite{HHH22dimfree} and \cite{HWZ22},
but the example in both cases, the \textsc{$1$-Hamming Distance} problem (adjacency in the
hypercube), is believed not to belong to $\mathsf{UPP}[1]$ \cite{HHPTZ22}, which is implied by a
positive answer to \cref{question:constant-cost}. In \cref{section:hierarchy}, we give two explicit examples
($K_{2,2}$-free point-box incidences, and point-line incidences) in $\mathsf{UPP}[1]$ that do not
belong to $\DEQ[1]$, which could possibly provide a negative answer to \cref{question:constant-cost}
if they belong to $\mathsf{BPP}[1]$, but point-line incidences are conjectured not to belong to
$\mathsf{BPP}[1]$ in \cite{CHHS23}.  On the other hand, a negative answer to
\cref{question:constant-cost} seems to require a substantially different type of randomized protocol
than the ones which have so far been discovered\footnote{By this we mean that it seems unlikely to
us that a negative answer to the question would be achieved by a reduction to any currently-known
constant-cost problem, most of which can be found in \cite{HWZ22}.}, and would therefore be very
interesting.

\paragraph*{Implicit representations.}
An obvious question is whether the stability condition in our positive result for implicit
representations can be dropped. This cannot be accomplished by reductions to \textsc{Equality}, for
which stability is necessary. We have shown that the \textsc{Greater-Than} problem is the only
barrier to constant-cost communication, so one idea for generalizing our result is to allow the more
powerful \textsc{Greater-Than} oracles in the communication protocol. Constant-cost reductions to
\textsc{Greater-Than} are equally good for the purpose of finding implicit representations (we may
think of some standard implicit representations, like for interval graphs \cite{KNR92} and point-box
incidences \cite{Spin03}, as protocols of this form). But this cannot succeed: a
constant-cost reduction to \textsc{Greater-Than} for graphs $\cG$ of sign-rank 3 would imply
$\R(\cG) = \Theta(\log\log N)$ which contradicts the known bound of $\Theta(\log N)$
\cite{HHL22,ACHS23}. This answers an open
question asked in independent and concurrent work \cite{Chan23} whether (in our terminology)
reductions to \textsc{Greater-Than} suffice to obtain implicit representations for geometric
intersection graphs with small sign-rank realized by integer coordinates\footnote{The bounds in
\cite{HHL22,ACHS23} hold for constructions with integer coordinates.}. 
This at least demonstrates
that communication complexity lower bounds can be used against certain natural types of implicit
representation, although it remains open how to prove \emph{any} explicit, non-trivial lower bounds
for implicit representations.

\section{Preliminaries}

Let us define some notation and formalize the notions we have discussed in the introduction. We
intend this paper to be accessible to readers in graph theory or communication complexity who may
not have a background in both, so we make an attempt to make the terminology explicit. We will also
define a general notion of \emph{constant-cost reductions} which has not yet appeared explicitly
in the literature.

\subsection{Notation}

For a matrix $M \in \bR^{X \times Y}$, row $x \in X$, and column $y \in Y$, we will write either
$M_{x,y}$ or $M(x,y)$ for the entry at $x$ and $y$.

For a graph $G$, we write $\overline G$ for the complement of $G$.  For a bipartite graph $G =
(X,Y,E)$ with a fixed bipartition, write $\bc{G}$ for the bipartite complement, which has edge
$xy$ if and only if $xy$ is not an edge of $G$. The adjacency matrix of a graph $G = (V,E)$ is the
matrix $\Adj_G \in \pmset^{V \times V}$ with $\Adj_G(x,y) = 1$ if and only if $xy \in E$. For a
bipartite graph $G = (X,Y,E)$ with a fixed bipartition, the \emph{bipartite} adjacency matrix is the
matrix $\Adj_G \in \pmset^{X \times Y}$ with $\Adj_G(x,y) = 1$ iff $xy \in E$, where we note that
the rows are indexed by $X$ instead of the full set of vertices $X \cup Y$ (and similar for the
columns).

For a graph $G$ and disjoint sets $X,Y \subseteq V(G)$, we will write $G[X,Y]$ for the
\emph{semi-induced} bipartite subgraph, which is the bipartite graph $G[X,Y] = (X,Y,E)$ defined by
putting an edge between $x \in X$ and $y \in Y$ if and only if $xy$ are adjacent in $G$. (In
particular, any edges within $X$ or $Y$ in $G$ are not present in $G[X,Y]$.)

\subsection{Sign-Rank}

For a matrix $M \in \pmset^{N \times N}$, the \emph{sign-rank} of $M$ is denoted $\rank_\pm(M)$ and
it is the minimum $d \in \bN$ such that there exists a matrix $R \in \bR^{N \times N}$ of rank $d$
with $M = \sign(R)$, where $\sign(R) \in \pmset^{N \times N}$ is the matrix with entries
\[
  \forall i,j \in [N] : \sign(R)_{i,j} = \sign(R_{i,j}) \,.
\]
Equivalently, $\rank_\pm(M)$ is the minimum $d$ such that each row $i \in [N]$ may be associated
with a unit vector $p_i \in \bR^d$ (which we think of as a point) and each column $j \in [N]$ may be
associated with a unit vector $h_j \in \bR^d$ (which we think of as the normal vector for a
halfspace), such that $M_{i,j} = \sign(\inn{p_i, h_j})$. In this way, the sign-rank of $M$ is
equivalent to the minimum dimension $d$ such that $M$ is the incidence matrix between a set of
points $X$ and a set of halfspaces $Y$, where the hyperplane boundaries of the halfspaces contain
the origin.

We require a notion of sign-rank for graphs, which we will define separately for bipartite graphs
with a fixed bipartition, and for general graphs.  For a bipartite graph $G = (X,Y,E)$ with a fixed
bipartition, its sign-rank $\rank_\pm(G)$ is defined as the sign-rank $\rank_\pm(\Adj_G)$ of its
bipartite adjacency matrix $\Adj_G \in \pmset^{X \times Y}$. For a general graph $G = (V,E)$, we
define its \emph{partial} adjacency matrix $A \in \{\pm 1, \star\}^{V \times V}$ to be
\[
  \Adj_G^*(x,y) \define \begin{cases}
  \star &\text{ if } x = y \\
  1 &\text{ if } xy \in E \\
  -1 &\text{ otherwise.}
  \end{cases}
\]
We then define the sign-rank $\rank_\pm(G)$ as the minimum rank of a matrix $R$ such that
\[
  \forall i \neq j : \sign(R_{i,j}) = \Adj_G^*(i,j) \,.
\]
Specifically, we do not make any requirement on the diagonal entries.

\subsection{Communication Complexity and Margin}
\label{section:communication-complexity}

For a matrix $M \in \pmset^{N \times N}$, we will write $\R(M)$ for the \emph{public-coin randomized
communication complexity} of $M$, with success probability $2/3$. In this model, Alice receives a
row $x \in [N]$ and Bob receives a column $y \in [N]$ and they must output $M(x,y)$. They are given
shared access to a string of random bits, and they take turns sending messages that depend on
their respective inputs and the random string. They must output the correct answer with probability
at least $2/3$ over the random string, and the complexity of a protocol is the total number of bits
communicated between the players on the worst-case inputs $x,y$. $\R(M)$ is the minimum complexity
of any such protocol computing $M$. 
See \cite{KN96,RY20}.

The standard notion of a (total, Boolean-valued) \emph{communication problem} is a sequence $\cP =
(P_N)_{N \in \bN}$ of matrices, where $P_N \in \pmset^{N \times N}$, and the complexity of the
problem, denoted $\R(\cP)$, is the function $N \mapsto \R(P_N)$. However, we are interested in the
complexity of \emph{classes} of matrices (specifically adjacency matrices of graphs belonging to
some graph class), not merely sequences of matrices, where there is a variety of $N \times N$
matrices instead of just one. So we define communication problems more generally, as in
\eg \cite{HWZ22,EHK22}.

\begin{definition}[Communication Problem]
A \emph{communication problem} is a set $\cP = \bigcup_{N \in \bN} \cP_N$ of Boolean matrices, where
$\cP_N$ is a finite set of matrices in $\pmset^{N \times N}$. We then define the communication complexity
$\R(\cP)$ as the function
\[
  N \mapsto \max_{P \in \cP_N} \R(P) \,.
\]
For a class $\cG$ of graphs, we write $\Adj_\cG$ for the communication problem that is the set of
adjacency matrices of graphs in $\cG$. If $\cG$ is a class of bipartite graphs, we take the
bipartite adjacency matrices. We abuse notation and write $\R(\cG) = \R(\Adj_\cG)$, so that
$\R(\cG)$ is the function
\[
  N \mapsto \max \{ R(\Adj_G) \;|\; G \in \cG \text{ has $N$ vertices } \} \,.
\]
\end{definition}
Communication complexity is always upper bounded by the number of bits $n$ in the input, or in our
notation, by $\lceil \log N \rceil$.  We are interested in determining which communication problems
have \emph{constant cost}, which means that there exists a constant $c$ such that $\R(M) \leq c$ for
all $M \in \cP$. One way to rule out a constant-cost protocol for a problem $\cP$ is if the
\textsc{Greater-Than} communication problem appears as a subproblem of $\cP$. Formally, this is
captured by the stability condition (see \cite{HWZ22}):

\begin{proposition}
\label{prop:gt-lower-bound}
Let $\cG$ be any graph class which is not stable. Then $\R(\cG) = \omega(1)$.
\end{proposition}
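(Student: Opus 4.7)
The plan is to reduce the \textsc{Greater-Than} communication problem to computing adjacency in graphs from $\cG$, and then invoke the known $\Omega(\log\log N)$ lower bound for \textsc{Greater-Than}.

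First I would unpack the hypothesis: if $\cG$ is not stable, then the chain-index is unbounded on $\cG$, so for every $k \in \bN$ there exists a graph $G_k \in \cG$ containing disjoint vertex sets $\{a_1,\dotsc,a_k\}$ and $\{b_1,\dotsc,b_k\}$ such that for all $i < j$, $a_ib_j \in E(G_k)$ and $b_ia_j \notin E(G_k)$. The size of $G_k$ may be much larger than $k$, but this does not matter: we just need the number of vertices $N_k := |V(G_k)|$ to go to infinity with $k$, which it does since $G_k$ contains at least $2k$ distinct vertices.

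Next I would set up the reduction. Fix any $k$, and consider the communication problem on input domain $[k]\times[k]$ defined by $f(i,j) = 1$ iff $i < j$ (which agrees with \textsc{Greater-Than} off the diagonal and hence has the same randomized communication complexity up to an additive constant, by the classical bound $\R(\textsc{Greater-Than}_k) = \Theta(\log\log k)$ \cite{MNSW98}). Given any public-coin protocol $\Pi$ computing $\Adj_{G_k}$ with success probability $2/3$ and cost $c_k$, Alice on input $i$ and Bob on input $j$ simulate $\Pi$ on the pair of vertices $(a_i, b_j)$; the output equals $f(i,j)$ whenever $i \neq j$, by definition of the chain-index witness. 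For $i = j$ the adjacency of $a_ib_i$ is unconstrained, but Alice can detect equality with $O(1)$ extra bits via public-coin \textsc{Equality}, or alternatively one can invoke the standard fact that the two problems differ by $O(1)$ in randomized complexity. Either way, $c_k = \R(\Adj_{G_k}) \geq \R(\textsc{Greater-Than}_k) - O(1) = \Omega(\log\log k)$.

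Finally, since the $N_k$-vertex graphs $G_k$ satisfy $\R(\Adj_{G_k}) \geq \Omega(\log\log k)$ and $k \to \infty$, the function $N \mapsto \max\{\R(\Adj_G) \mid G \in \cG, |V(G)| = N\}$ cannot be bounded by any constant, proving $\R(\cG) = \omega(1)$. The only mildly delicate point is the handling of the diagonal $a_i = b_i$ case in the reduction, but as noted this costs only $O(1)$ bits and does not affect the asymptotic lower bound; no other obstacle arises.
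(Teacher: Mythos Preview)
Your argument is correct and is exactly the standard reduction that the paper has in mind. Note, however, that the paper does not actually include a proof of this proposition: it is stated with a reference to \cite{HWZ22} and treated as a known fact (the surrounding text says only that instability means \textsc{Greater-Than} appears as a subproblem). So there is no ``paper's own proof'' to compare against beyond the implicit one-line reduction you have spelled out; your write-up simply fills in the details the paper omits, including the handling of the unconstrained diagonal $a_ib_i$, and does so correctly.
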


As mentioned in the introduction, having constant communication cost is equivalent to having
constant \emph{margin}, due to the following inequality, which follows from results of \cite{LS09}:

\begin{proposition}
\label{prop:margin-discrepancy}
Let $M \in \pmset^{N \times N}$. Then
\[
  \Omega\left(\log\frac{1}{\mar(M)}\right) \leq \R(M) \leq O\left(\frac{1}{\mar(M)^2}\right) \,.
\]
\end{proposition}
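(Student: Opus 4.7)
The plan is to handle the two inequalities separately, invoking classical tools for each. The lower bound will go through the discrepancy method, while the upper bound is a direct construction from any near-optimal margin representation.

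For the \textbf{lower bound} $\R(M) \geq \Omega(\log(1/\mar(M)))$, I would chain two standard inequalities. The first is the classical discrepancy bound for randomized communication, $\R(M) \geq \Omega(\log(1/\disc(M)))$, where $\disc(M)$ is the minimum over distributions on $[N] \times [N]$ of the maximum bias of $M$ over combinatorial rectangles. The second is the inequality $\disc(M) \leq O(\mar(M))$, proved by Linial and Shraibman via Grothendieck's inequality and the duality between the $\gamma_2$-norm (which characterizes margin complexity) and discrepancy. Composing these yields the stated lower bound.

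For the \textbf{upper bound} $\R(M) \leq O(1/\mar(M)^2)$, I would use random hyperplane rounding. Let $m = \mar(M)$ and fix a representation with unit vectors $p_x, h_y \in \bR^d$ satisfying $M(x,y) \cdot \langle p_x, h_y \rangle \geq m$. Using the public random string, Alice and Bob sample a shared Gaussian vector $g \in \bR^d$; Alice sends the one bit $a = \sign(\langle g, p_x \rangle)$, Bob sends $b = \sign(\langle g, h_y \rangle)$, and they output $a \cdot b$. The Goemans--Williamson calculation gives $\Ex{a \cdot b} = 1 - (2/\pi)\arccos(\langle p_x, h_y \rangle)$, whose sign matches $M(x,y)$ and whose magnitude is $\Omega(m)$. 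Running this basic exchange $O(1/m^2)$ independent times and taking the majority vote yields constant success probability by Chernoff, at a total cost of $O(1/m^2)$ bits.

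The main obstacle in this plan is the Linial--Shraibman inequality $\disc(M) \leq O(\mar(M))$, whose proof is a nontrivial application of Grothendieck's inequality in the language of matrix factorization norms. Since this is a well-documented classical result, I would cite it from the literature rather than reprove it; the remaining steps (the discrepancy lower bound for $\R$ and the hyperplane rounding protocol) are standard and short once this tool is in hand. Note that the bounds are deliberately not tight to each other, so no care is needed to optimize constants; only the qualitative fact that constant margin is equivalent to constant communication is at stake here.
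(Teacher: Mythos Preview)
Your proposal is correct and follows the standard route. The paper itself does not give a proof of this proposition at all; it simply attributes the inequality to results of Linial and Shraibman \cite{LS09}, and your sketch (discrepancy lower bound via the Linial--Shraibman margin/discrepancy equivalence through Grothendieck, and the Goemans--Williamson hyperplane-rounding upper bound amplified by majority vote) is precisely the argument that underlies that citation.
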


\subsection{Constant-Cost Communication Reductions and Equality}
\label{section:reductions}

One way to obtain constant-cost protocols is by reduction to the \textsc{Equality} problem, for
which we require the definitions of the \textsc{Equality} problem and a notion of reduction.

\begin{definition}[The \textsc{Equality} Communication Problem]
The \textsc{Equality} communication problem is the set $\Eq \define \{ I_{N \times N} : N \in \bN
\}$ where $I_{N \times N}$ denotes the $N \times N$ identity matrix. 
\end{definition}

In other words, for input size $N$, Alice and Bob receive elements $x,y \in [N]$ and wish to decide
whether $x = y$. It is well-known that $\R(\Eq) = 2$.

Constant-cost communication reductions, specifically to the \textsc{Equality} problem, have been
used implicitly in several prior works. Here we choose to explicitly define \emph{constant-cost
reductions} in general\footnote{This general definition of constant-cost reductions has arisen out
discussions with several other researchers.}. For this, we require the notion of a \emph{query set}.

\newcommand{\QS}{\mathsf{QS}}
\begin{definition}
A \emph{query set} $\cQ$ is a set of matrices that is closed under the following operations:
\begin{enumerate}
\item For every $Q \in \cQ$ and any $Q'$ obtained by row and column permutations of $Q$, $Q' \in \cQ$.
\item For every $Q \in \cQ$, if $Q'$ is any submatrix of $Q$ then $Q' \in \cQ$.
\item For every $Q \in \cQ$, if $Q'$ is obtained by duplicating a row or a column of $Q$, then $Q'
\in \cQ$.
\end{enumerate}
For a set $\cP$ of matrices, we define $\QS(\cP)$ to be the closure of $\cP$ under these operations.
\end{definition}
In the communication complexity literature, $\QS(\Eq)$ was recently named the set of \emph{blocky
matrices} \cite{HHH22dimfree,AY22}. In graph theory, $\QS(\Eq)$ are the adjacency matrices of
disjoint unions of bicliques, also called \emph{bipartite equivalence graphs}. It is easily verified
that for any constant $c$, if $\R(\cP) \leq c$ then $\R(\QS(\cP)) \leq c$. However, we caution that
$\R(\QS(\cP)) \leq \R(\cP)$ does not hold for \emph{non-constant} complexities, because $\QS(\cP)$
includes all submatrices of $\cP$ and $\R(\cdot)$ takes the \emph{maximum} complexity over all
size-$N$ matrices (see \cite{HWZ22} for examples).

We now give two equivalent definitions for reductions between problems; one algorithmic and one
structural.
\newcommand{\D}{\mathsf{D}}
\begin{definition}[Communication with Oracles]
Let $\cB$ be a communication problem and let $P \in \pmset^{N \times N}$. A \emph{deterministic
protocol computing $P$ with $\cB$ oracles} is a rooted binary tree $T$ where each leaf $\ell$ is
assigned a value $b(\ell) \in \pmset$ and inner node $v$ is assigned an $N \times N$ matrix $Q_v \in
\QS(\cB)$, with the following conditions. On each pair of inputs $x,y \in [N]$ the protocol begins
at the root node $v$ of $T$. At each node $v$, if $Q_v(x,y) = -1$ then the protocol proceeds by
advancing the current node $v$ to its left child, and if $Q_v(x,y) = 1$ then the protocol proceeds
by advancing the current node $v$ to its right child, until $v$ becomes a leaf, at which point the
protocol outputs $b(v)$. It is required that $b(v) = P_{x,y}$ for all inputs $x,y$.

The \emph{cost} of the protocol is the depth of the tree. We write $\D^{\cB}(P)$ for the minimum
cost of a protocol which computes $P$ with $\cB$ oracles. For a communication problem $\cA$, we write $\D^\cB(\cA)$ for
the function
\[
N \mapsto \max_{P \in \cA_N} \D^\cB(P).
\]
\end{definition}

In other words, a communication protocol with $\cB$ oracles is a \emph{deterministic} protocol where
in each round, Alice and Bob transform their inputs $x,y$ into inputs to a problem in $\cB$ and
receive the answer from an oracle computing $\cB$ at unit cost. Observe that, as long as $\cB$ is
non-trivial (\ie does not contain only all-$1$ and all-($-1$) matrices), the definition of
$\QS(\cB)$ allows any single round of deterministic communication to be simulated by an oracle, so
without loss of generality we may assume that every inner node of the protocol is an oracle call.

If there is a constant $c$ such that $\D^\cB(\cA) \leq c$, then we say that $\cA$
\emph{constant-cost reduces} (or just \emph{reduces}) to $\cB$. The following proposition is easily
obtained by standard error-boosting techniques:

\begin{proposition}
\label{prop:const-cost-reduction}
Suppose $\R(\cB) = O(1)$ and $\cA$ reduces to $\cB$. Then $\R(\cA) = O(1)$. In particular, if $\cA$
reduces to $\Eq$ then $\R(\cA) = O(1)$.
\end{proposition}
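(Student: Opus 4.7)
The plan is to build a public-coin randomized protocol for $\cA$ by simulating, one-by-one, the oracle calls of a constant-cost deterministic protocol with $\cB$ oracles, using enough independent repetitions of a randomized protocol for $\cB$ so that a union bound controls the total error.

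First I would record the closure properties of $\QS(\cdot)$ at the level of randomized complexity: namely, that $\R(\QS(\cB)) \le \R(\cB)$ up to an additive constant. Row/column permutations do not change the problem, passing to a submatrix only restricts the input domain (so Alice and Bob simply run the protocol for $\cB$ on their inputs under the inclusion map into the larger matrix), and duplicating a row or column can likewise be handled by mapping the duplicate back to the original row/column before invoking the protocol. Hence if $\R(\cB) = O(1)$ then every matrix in $\QS(\cB)$ admits a public-coin protocol of constant cost and success probability $\ge 2/3$.

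Next, by assumption there is a constant $c$ such that $\D^{\cB}(\cA) \le c$. Fix $P \in \cA_N$ and let $T$ be the corresponding deterministic protocol tree of depth at most $c$, whose inner nodes are labelled by matrices $Q_v \in \QS(\cB)$. On any input $(x,y)$ the protocol queries at most $c$ such oracles before reaching a leaf. I would replace each oracle call by a public-coin randomized protocol for the corresponding $Q_v \in \QS(\cB)$, boosted to success probability $1 - \tfrac{1}{10c}$ via a constant number (depending only on $c$) of independent repetitions and a majority vote; by the first step this still has constant cost per oracle. Alice and Bob share public randomness, so they can execute these boosted subprotocols in sequence, advancing through $T$ according to the returned bits.

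Finally, I would take a union bound: on any root-to-leaf path there are at most $c$ simulated oracle calls, each failing with probability at most $\tfrac{1}{10c}$, so the probability that the simulation deviates from the correct deterministic execution is at most $\tfrac{1}{10}$, and the total number of bits communicated is bounded by $c$ times the (constant) cost of a boosted oracle call. Thus $\R(P) = O(1)$ uniformly over $P \in \cA$, proving $\R(\cA) = O(1)$. The particular case $\cB = \Eq$ follows since $\R(\Eq) = 2$. The only mildly subtle step is the first one, verifying that randomized complexity is preserved under the three query-set closure operations; everything after that is standard error-boosting plus a union bound, and I do not foresee a real obstacle.
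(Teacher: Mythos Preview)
Your proposal is correct and matches the paper's approach: the paper does not give a proof, stating only that the proposition ``is easily obtained by standard error-boosting techniques,'' which is exactly the simulate-and-union-bound argument you outline. The closure property $\R(\QS(\cB)) \le \R(\cB)$ that you flag as the only subtle point is also asserted without proof in the paper (``It is easily verified that for any constant $c$, if $\R(\cP) \leq c$ then $\R(\QS(\cP)) \leq c$''), so your treatment is at least as detailed as the original.
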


The second, structural definition of reduction is as follows. We say $\cA$ reduces to $\cB$ if there
exists a constant $t$ such that, for every $A \in \cA$, there exists:
\begin{enumerate}
\item a function $f : \pmset^t \to \pmset$; and
\item matrices $Q_1, \dotsc, Q_t \in \QS(\cB)$,
\end{enumerate}
such that $A = f(Q_1, \dotsc, Q_t)$, meaning that $A(i,j) = f(Q_1(i,j), Q_2(i,j), \dotsc, Q_t(i,j))$
for all $i,j \in [N]$. In the special case when $\cB$ is the set of identity matrices, this
definition appeared independently in \cite{HHH22dimfree,HWZ22} and subsequently in \cite{EHK22}, and
the minimum $t$ such that the above conditions hold is a ``functional'' analog of rank, recently
called the \emph{functional blocky-rank} in \cite{AY22}. It is not difficult to show that this
structural definition of constant-cost reductions is equivalent to the algorithmic one. One may
easily derive a constant-cost protocol with oracles $Q_i$ from the structural definition, and in the
other direction one may simply let the set of matrices $Q_i$ be the inner nodes of the communication
protocol and define $f$ as the function that simulates the protocol on these queries. In the
structural definition it is not hard to see an analog of \cref{prop:const-cost-reduction} for
implicit representations. A similar\footnote{There are some technicalities involved in translating
between the two.} notion of reductions for implicit representations appeared independently and
concurrently in \cite{Chan23}, which included reductions to \textsc{Equality} and
\textsc{Greater-Than} as parts of a complexity hierarchy of implicit representations.

\begin{proposition}
\label{prop:const-cost-reduction-implicit}
Suppose $\cB$ is the set of adjacency matrices for a hereditary graph class that admits an implicit
representation, and suppose $\cA$ is the set of adjacency matrices for a hereditary graph class
$\cG$. If $\cA$ reduces to $\cB$ then $\cG$ admits an implicit representation.
\end{proposition}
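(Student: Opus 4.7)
The plan is to apply the structural definition of constant-cost reductions stated just above. For each $G \in \cG$ on $N$ vertices, set $A = \Adj_G$; the definition supplies a constant $t$ (independent of $G$), a function $f : \pmset^t \to \pmset$, and matrices $Q_1, \dots, Q_t \in \QS(\cB)$ of size $N \times N$ with $A = f(Q_1, \dots, Q_t)$. Let $D_\cB$ and $\enc_\cB$ denote a fixed decoder and encoder realizing the implicit representation assumed for the hereditary class whose adjacency matrices comprise $\cB$.

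The main step is to observe that any $Q \in \QS(\cB)$ of size $N \times N$ inherits an implicit representation from $D_\cB$ with labels of size $O(\log N)$. Since $\cB$ is the set of adjacency matrices of a hereditary class, it is already closed under row/column permutations (isomorphism) and under submatrix operations (induced subgraphs), so the closure $\QS(\cB)$ is obtained from $\cB$ by row/column duplications alone. Consequently every such $Q$ arises from some $B \in \cB$ of order $n_B \leq N$ via duplication maps $\sigma, \tau : [N] \to [n_B]$ with $Q(i,j) = B(\sigma(i), \tau(j))$. Assigning row index $i$ the label that $\enc_\cB(B)$ gives to vertex $\sigma(i)$, and column index $j$ the label of $\tau(j)$, yields labels of length $O(\log n_B) = O(\log N)$, from which $D_\cB$ correctly recovers $Q(i,j)$.

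To build the implicit representation for $\cG$, I would concatenate these pieces. The label of $v \in V(G)$ will consist of (a) the $2^t$-bit truth table of $f$, and (b) for every $i \in [t]$, both the row-label and the column-label of $v$ in the implicit representation of $Q_i$ constructed above. The total length is $O(1) + 2t \cdot O(\log N) = O(\log N)$ since $t$ is constant. The global decoder, which is a single fixed algorithm across the whole class $\cG$, reads $f$ from either input label, invokes $D_\cB$ on each of the $t$ pairs of sub-labels to recover $Q_i(u,v)$, and outputs $f(Q_1(u,v), \dots, Q_t(u,v)) = \Adj_G(u,v)$.

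The one subtlety worth pointing out is that the structural witnesses $f, Q_1, \dots, Q_t$ are permitted to depend on the specific $G$, whereas an implicit representation requires a decoder that does \emph{not}. This is absorbed painlessly because the truth table of $f$ has $2^t = O(1)$ bits and can simply be copied into every vertex label. The remainder is essentially bookkeeping to concatenate a constant number of implicit representations; there is no combinatorial obstacle beyond verifying the observation in the second paragraph.
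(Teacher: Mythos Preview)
Your approach is correct and matches what the paper intends: it states the proposition without proof, remarking only that ``in the structural definition it is not hard to see an analog of \cref{prop:const-cost-reduction} for implicit representations.'' Your argument is precisely that unpacking.

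One small imprecision worth noting: your claim that $\cB$ is already closed under row/column permutations and submatrices is not quite right. A hereditary class of (non-bipartite) graphs gives adjacency matrices closed under applying the \emph{same} permutation to rows and columns (isomorphism) and under \emph{principal} submatrices (induced subgraphs), but the query-set operations allow independent row and column permutations and arbitrary rectangular submatrices, which can produce non-symmetric matrices not in $\cB$. Fortunately your conclusion survives this: any $Q \in \QS(\cB)$ is still of the form $Q(i,j) = B(\sigma(i),\tau(j))$ for some $B \in \cB$ and maps $\sigma,\tau$, simply because each of the three closure operations preserves this form. Restricting $B$ to the induced subgraph on $\mathrm{im}(\sigma)\cup\mathrm{im}(\tau)$ gives $n_B \le 2N$ (rather than $n_B \le N$), which is equally good for the $O(\log N)$ bound. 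With that correction the rest of your argument goes through unchanged.
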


\subsection{From Communication Protocols to Implicit Representations}
\label{section:communication-to-labeling}

An observation of \cite{Har20,HWZ22} is that any hereditary graph class $\cG$ for which $\R(\cG) =
O(1)$ must also have an implicit representation (and any constant-cost communication problem may be
transformed into a hereditary graph class). Therefore, as argued in \cite{HWZ22}, constant-cost
communication is essentially the probabilistic version of implicit representations.

We will present our proofs as upper bounds on communication complexity, which imply implicit
representations.  The general correspondence between constant-cost communication and implicit
representations is non-constructive (by the probabilistic method), but for the sake of clarity and
completeness, we briefly describe how to directly translate a communication protocol that uses
\textsc{Equality} oracles (as ours will do) into an implicit representation.  

Recall that, for a graph $G = (V,E)$, if $\DEQ(G) \leq c$ then there exists a binary communication
tree of depth $c$ with each inner node $v$ assigned to a matrix $Q_v \in \QS(\Eq)$, which means that
$Q_v$ is the adjacency matrix of a bipartite equivalence graph. In other words, there are functions
$a_v, b_v : V \to [N]$ such that
\[
Q_v(x,y) = \begin{cases}
  1 &\text{ if } a_v(x) = b_v(y) \\
  0 &\text{ otherwise.}
  \end{cases}
\]
To obtain an implicit representation, we need to define a decoder $D$ and encodings $\enc(\cdot)$
for each graph $G \in \cG$.  We define $\enc(x)$ for each $x \in V$ by writing down the values
$a_v(x), b_v(y)$ for each inner node $v$ of the tree, together with the output values at the leaves
of the tree. Each value $a_v(x)$ and $b_v(x)$ requires at most $\lceil \log N \rceil$ bits, and
there are at most $2^c$ nodes in the tree, which is constant, so the size of the encoding is $O(\log
N)$. The decoder $D$, on inputs $\enc(x)$ and $\enc(y)$ for $x,y \in V$, may use the values of
$a_v(x)$ and $b_v(y)$ for each node $v$, together with the outputs on the leaves, to simulate the
communication protocol.

\section{Communication Bounds for Excluded Cycles and Subdivided Stars}
\label{section:gyarfas}

Our results for unit disk graphs and matrices of sign-rank 3 will follow from a more general result
on bipartite graphs excluding either long cycles or subdivided stars, which we prove in this
section. Recall the definition of the subdivided star $S_{s, t}$, \cref{def:subdivided-star}.

\newcommand{\adj}{\mathsf{adj}}

\thmintromain*

Our main tool will be the \emph{\Gyarfas decomposition}, which we borrow from \cite{POS22}, defined
below.

\subsection{\Gyarfas Decomposition: Definition, Existence, and Properties}
\label{sec:gyarfas-def}

\newcommand{\depth}{\mathsf{depth}}
\newcommand{\bag}{\mathsf{bag}}
The following definition of the \Gyarfas decomposition is taken from \cite{POS22}.  We will only
apply the \Gyarfas decomposition to bipartite graphs in this paper, so we state the special case of
the decomposition for bipartite graphs. See \cref{fig:Gyarfas} for an illustration.

\begin{definition}[\Gyarfas Decomposition]
\label{def:gyarfas}
A \emph{\Gyarfas decomposition} of a connected bipartite graph $G$ is a rooted tree $Y$ satisfying the
following properties:
\begin{enumerate}
\item Each node of $Y$ is a subset of $V(G)$, called a \emph{bag}, and the nodes of $Y$ form a
  partition of $V(G)$. For each vertex $v \in V(G)$, write $\bag_Y(v)$ for the unique bag in $Y$ that
  contains $v$. We will drop the subscript $Y$ when the decomposition is clear from context.
\item The root bag of $Y$ is a singleton containing the \emph{root vertex}.
\item If $u,v \in V(G)$ are adjacent then $\bag(u)$ is an ancestor of $\bag(v)$ or vice-versa.
\item For every bag $B$ of $Y$, the subgraph of $G$ induced by $B$ together with all of its
descendents is connected.
\item For every non-root bag $B$ of $Y$, there exists a vertex $h(B)$, called the \emph{hook} of
$B$, which belongs to the parent bag of $B$ and has the property that $h(B)$ is adjacent to
all vertices of $B$ and non-adjacent to all vertices in the strict descendents of $B$.
\end{enumerate}
For each bag $B$, we write $\depth(B)$ for the length of the path from the root bag to $B$ in $Y$
(where the depth of the root bag is 0).  For each $\ell \in \bN$, we say that \emph{level $\ell$} of
$Y$ is the set of all bags $B$ with $\depth(B) = \ell$.  A \Gyarfas decomposition for a disconnected
bipartite graph $G$ is the union of \Gyarfas decompositions for its connected components.
\end{definition}

\begin{figure}[th]
	\centering
	\scalebox{0.8}{%
	\begin{tikzpicture}
		\centering
		
		\def\rscoef{1.5}
		 
		\tikzset{bag/.style={draw, ellipse, fill=white, minimum width=2cm, minimum height=1cm}}
		\tikzset{bag-edge/.style={draw=gray!50}}
		\tikzset{hook-edge/.style={draw,BrickRed,line width=1.1pt}}

		\foreach \i in {2,...,6} {
			\node[bag, label=left:$B_\i$] (O1\i) at (0, \rscoef*7-\rscoef*\i) {};
			\node[w_s_vertex, label={[xshift=10pt, yshift=-4pt]\small{$h_\i$}}] (h\i) at (-0.3, \rscoef*7-\rscoef*\i) {};
		}
	
		\node[bag, label=left:$B_1$] (O11) at (0, \rscoef*7-\rscoef*1) {};
		\node[w_s_vertex, label={[xshift=-5pt, yshift=-4pt]\small{$h_1$}}] (h1) at (-0.3, \rscoef*7-\rscoef) {};
		
		\foreach \i in {1,...,5} {
			\pgfmathtruncatemacro{\j}{\i+1}
			
			\draw[bag-edge] (h\i) -- ($(O1\j.west) + (+0.04,0.13)$);
			\draw[bag-edge] (h\i) -- ($(O1\j.east) + (-0.04,0.15)$);
			
			\draw[hook-edge] (h\i) -- (h\j);
		}
		\node[w_vertex, label=above:$r$] (V) at (4, \rscoef*7.5) {};
		\draw[hook-edge] (V) -- (h1);
		
		\draw (O15.east) .. controls +(0.8,3) .. (O12.east);
		\draw (O14.east) .. controls +(0.4,1.5) .. (O11.east);
		\draw (O16.east) .. controls +(0.8,3) .. (O13.east);
		\draw (O16.east) .. controls +(1.2,3.2) .. (O11.east);

		\node[bag] (O22) at (4, \rscoef*6) {};
		\node at (4, \rscoef*5) {$\vdots$};
		
		\node[bag] (O33) at (8, \rscoef*6) {};
		\node at (8, \rscoef*5) {$\vdots$};
		
		\draw[bag-edge] (V) -- ($(O11.west) + (0.5,0.43)$);
		\draw[bag-edge] (V) -- (O11.east);
		
		\draw[bag-edge] (V) -- ($(O22.west) + (0.01,0.08)$);
		\draw[bag-edge] (V) -- ($(O22.east) + (-0.01,0.08)$);
		
		\draw[bag-edge] (V) -- (O33.west);
		\draw[bag-edge] (V) --  ($(O33.east) + (-0.5,0.43)$);

		
	\end{tikzpicture}
	}
	\caption{\Gyarfas decomposition. The vertices $r$ and $h_i$, $i \in [6]$, are hooks. The hook
vertices induce a hook path highlighted with bold edges. Lines between two bags indicate that there
are edges between some vertices in those bags.}
	\label{fig:Gyarfas}
\end{figure}

There is a simple algorithmic proof that such decompositions always exist \cite{POS22}.

\begin{lemma}
For every connected bipartite graph $G$ and vertex $r \in V(G)$, there exists a \Gyarfas
decomposition of $G$ with root vertex $r$. Given $G$ and $r$, this decomposition can be computed in
polynomial time.
\end{lemma}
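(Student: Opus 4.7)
The plan is to construct the decomposition by an explicit recursive procedure and then verify the five defining properties by induction on the recursion depth.

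For the construction, set $B_0 := \{r\}$ as the root bag of $Y$. For each connected component $C$ of $G-r$, I would attach to $B_0$, as a child subtree, the output of a recursive routine $\textsc{Build}(C, r)$ defined as follows. Given a connected subgraph $C'$ of $G$ and an external ``hook'' vertex $h \notin V(C')$ with at least one neighbor in $C'$, $\textsc{Build}(C', h)$ returns a rooted tree whose root bag is $B := N(h) \cap V(C')$ with hook $h(B) := h$; for each connected component $D$ of $C' - B$, it picks an arbitrary vertex $h' \in B$ that has a neighbor in $D$ and recursively calls $\textsc{Build}(D, h')$, attaching the result as a child of $B$. Such an $h'$ always exists because $C'$ is connected, so every component $D$ of $C' - B$ must have an edge into $B$.

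Properties (1) and (2) are immediate from the construction. Property (5) follows directly from the definition $B := N(h) \cap V(C')$: every $v \in B$ is a neighbor of $h$, and every strict descendant of $B$ lies in $C' \setminus B \subseteq V(C') \setminus N(h)$, so is non-adjacent to $h$. Property (4) holds because each $\textsc{Build}$ call is invoked on a connected subgraph, and the bag it produces together with all descendants is exactly the vertex set of that subgraph. The only substantive property is (3); I would prove it by considering the first recursive call whose input $C'$ contains both endpoints of an edge $uv$ but which separates them. At that call, with bag $B = N(h) \cap V(C')$, the possibilities are: both $u, v \in B$, so $\bag(u) = \bag(v)$; exactly one of them in $B$ and the other in a strict descendant subtree, giving an ancestor/descendant pair; or both in descendant subtrees. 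The last case would require $u$ and $v$ to lie in distinct connected components of $C' - B$, which contradicts the existence of the edge $uv$.

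Polynomial running time is immediate: each recursive call does polynomial work (one neighborhood computation and one connected-components computation) and places at least one new vertex into a bag, so the total number of calls is at most $|V(G)|$. The only conceptually delicate step is verifying property (3), and the key observation powering it is the elementary fact that an edge of $G$ cannot connect two vertices lying in distinct connected components of any induced subgraph.
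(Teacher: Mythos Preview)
Your construction and verification are correct; this is precisely the standard greedy/recursive procedure underlying the ``simple algorithmic proof'' that the paper cites from \cite{POS22} rather than spelling out. The only place worth tightening is the phrasing ``first recursive call \ldots\ which separates them'' in the argument for property~(3): what you really mean is the call corresponding to the least common ancestor of $\bag(u)$ and $\bag(v)$, and the contradiction is that if this LCA were a strict ancestor of both, then $u$ and $v$ would lie in distinct components of $C'-B$ despite being adjacent.
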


A path $P = (v_0, v_1, v_2, \ldots, v_k)$ in $G$ is a \emph{hook path} (with respect to $Y$) if $v_i$ is the hook
of $\bag(v_{i-1})$ for every $i \in [k]$.
Observe that any hook path with respect to a \Gyarfas decomposition is an induced path.
\Gyarfas decompositions are typically used in the case where some induced path $P_t$ is forbidden,
in which case the depth of the decomposition is bounded. In our case, we will not necessarily have a forbidden
$P_t$ or bounded depth of the decomposition, but we will see that the decomposition has a different
structure that will permit efficient communication protocols. For this we define the notion of
\emph{back degree}.

\begin{definition}[Back-Degree]
Given a \Gyarfas decomposition $Y$ of $G$. We say that a bag $B$ of $Y$ has an edge to another bag
$B'$ in $Y$ if there exist a vertex in $B$ and a vertex in $B'$ that are adjacent.  The
\emph{back-degree} of a bag $B$ in $Y$ is the number of ancestor bags of $B$ to which $B$ has an
edge.  The \emph{maximum back-degree} of $Y$ is the maximum back-degree of any of its bags.
\end{definition}

Note that \Gyarfas decomposition of a $P_t$-free graph has depth at most $t$, and therefore the
maximum back-degree of the decomposition is also bounded by $t$.

In the next two sections we show that if a graph has bounded chain-index and either
\begin{enumerate}
\item does not contain long induced cycles (\cref{sec-chordality}), or
\item does not contain a fixed subdivision of a star (\cref{sec-sub-star}),
\end{enumerate}
then its \Gyarfas decompositions have bounded maximum back-degree. In
\cref{section:protocol-for-gyarfas}, we give a general communication protocol for \Gyarfas
decompositions with bounded maximum back-degree.

Before proceeding \cref{sec-chordality} and \cref{sec-sub-star}, we introduce some notation and
properties of the interactions between bags in \Gyarfas decompositions that are used in both
sections.  Let $Y$ be a \Gyarfas decomposition of a bipartite graph $G = (X,Y,E)$, and let $B$ be a
bag of $Y$ with $\depth(B) > 0$. Write $h$ for the hook of $B$. Let $A_1, A_2, \dotsc, A_r$ be some
ancestors of $B$, excluding the immediate parent of $B$, to which $B$ has an edge. Then the
following properties are easy to verify:

\begin{proposition}
Let $s \in \bN$ and suppose that $\depth(A_1) < \depth(A_2) < \dotsm < \depth(A_r)$ and
$\depth(A_{i+1}) - \depth(A_{i}) \geq s$ for all $i \in [r-1]$.  For $i \in [r]$, we define
$h_{i,1}$ to be the hook of $A_i$, and for $z \in [s-1]$, inductively define $h_{i,z}$ as the hook
of $\bag(h_{i,z-1})$. For each $i \in [r]$, let $a_i \in A_i$ be a neighbour of some $b_i \in B$.
Then the following properties hold:
\begin{enumerate}[label=(\arabic*)]
	\item The hook $h$ of $B$ is adjacent to each $b_i$. For each $i \in [r]$ and $z \geq 1$, $a_i$ is
	not adjacent to $h$, because they are on the same side of the bipartition of $G$, and $h_{i,z}$ is
	not adjacent to $h$, because $h_{i,z}$ is a hook in an ancestor bag of $\bag(h)$ that is not the
	parent of $\bag(h)$.
	\label{item:prop-1}
	
	\item For each $i,j \in [r]$, $a_i$ is not adjacent to $a_j$, because they are on the same side of the bipartition of $G$.
	\label{item:prop-aiaj}
	
	\item For each $1 \leq i < j \leq r$ and each $z \geq 1$, $h_{i,z}$ is not adjacent to $a_j$, because $h_{i,z}$ is a hook that is not in the parent bag of $A_j$.
	\label{item:prop-ah}
	
	\item For each $i \in [r]$ and each $z \geq 2$, we have $h_{i,z}$ not adjacent to $a_i$ because $h_{i,z}$ is a
	hook that is not in the parent bag of $\bag(a_i)$.
	\label{item:prop-ahi}
	
	\item For each $i,j \in [r]$, and $z \geq 1$, $h_{i,z}$ is not adjacent to $b_j$ because $h_{i,z}$ is a hook
	that is not in the parent bag of $B$.
	\label{item:prop-bh}
\end{enumerate}
\end{proposition}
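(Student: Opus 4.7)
My plan is to verify each of the five items by direct inspection, using only two tools: item (5) of the \Gyarfas decomposition (\cref{def:gyarfas}), which states that a hook is adjacent to all of its child bag and non-adjacent to every vertex in a strict descendant of that child bag, together with the fact that $G$ is bipartite.

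For (1) and (2), I would first invoke the bipartition. Since $a_i$ is adjacent to $b_i \in B$ and $h$ is adjacent to $b_i$ by the hook property applied to $B$, the vertices $a_i$ and $h$ lie on the same side of the bipartition and hence cannot be adjacent; the same reasoning shows $a_i$ and $a_j$ lie on the same side. To handle the non-adjacency of $h_{i,z}$ to $h$, I would use the hook property: $h_{i,z}$ is the hook of $B' := \bag(h_{i,z-1})$ (interpreting $B' = A_i$ when $z=1$), so it suffices to place $\bag(h)$ as a strict descendant of $B'$. Since $A_i$ is excluded from being the immediate parent of $B$, one has $\depth(B') \leq \depth(A_i) \leq \depth(B) - 2 < \depth(B) - 1 = \depth(\bag(h))$; as both $B'$ and $\bag(h)$ lie on the unique root-to-$B$ path in $Y$, this places $\bag(h)$ strictly below $B'$, and the hook property forbids the edge.

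Items (3), (4), and (5) all follow the same template: identify the bag $B'$ whose hook is $h_{i,z}$, and check that the target vertex sits in a strict descendant of $B'$. For (3), with $i < j$, we have $\depth(B') \leq \depth(A_i) < \depth(A_j)$, so $A_j$ (on the root-to-$B$ path) is a strict descendant of $B'$. For (4), the requirement $z \geq 2$ gives $\depth(B') = \depth(A_i) - z + 1 \leq \depth(A_i) - 1$, placing $A_i$ itself strictly below $B'$. For (5), $\depth(B') \leq \depth(A_i) \leq \depth(B) - 2$ places $B$ strictly below $B'$. In each case the hook property then rules out the edge.

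The main (mild) obstacle is purely bookkeeping: one must carefully track which bag $B'$ has $h_{i,z}$ as its hook, and verify the correct strict inequality on depths. The hypothesis $\depth(A_i) \leq \depth(B) - 2$ supplies this strictness throughout. The spacing assumption $\depth(A_{i+1}) - \depth(A_i) \geq s$ is not actually used in these five items, but is in place so that the hook chains $(h_{i,1}, \dotsc, h_{i,s-1})$ for different $i$ remain vertex-disjoint, which will matter for subsequent applications of the proposition when building forbidden paths, stars, or cycles.
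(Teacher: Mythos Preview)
Your proposal is correct and follows essentially the same approach as the paper: the paper states this proposition with inline justifications (``because \dots'') and describes the properties as ``easy to verify,'' and you have accurately expanded those justifications using exactly the two ingredients intended, namely the hook property in \cref{def:gyarfas}(5) and bipartiteness. Your depth bookkeeping is correct, and your remark that the spacing hypothesis $\depth(A_{i+1}) - \depth(A_i) \geq s$ is not needed for these five items (only for disjointness in later applications) is also accurate.
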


\subsection{Excluding Long Cycles}\label{sec-chordality}

\begin{lemma}
	\label{lemma:gyarfas-bdd-chordality}
	For any $t, k \in \bN$, there exists a constant $\ell$ such that the following holds. 
	Let $G = (X,Y,E)$ be any $(C_t, C_{t+1}, C_{t+2}, \ldots)$-free bipartite graph with $\ch(G) < k$.
Let $Y$ be a \Gyarfas decomposition of $G$. Then $Y$ has maximum back-degree at most $\ell$.
\end{lemma}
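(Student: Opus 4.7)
The plan is a proof by contradiction. Suppose some bag $B$ has back-degree exceeding $\ell(t,k)$; we will derive either a forbidden induced cycle or a chain of size $k$, contradicting the hypotheses. First, by greedy pigeonhole I thin the ancestors of $B$ with edges to $B$ (excluding the immediate parent) to obtain a sequence $A_1, \dots, A_R$ with $A_1$ shallowest and consecutive depth gaps at least $s$, for $R$ and $s$ chosen sufficiently large in terms of $t$ and $k$ (which is possible once $\ell$ is large enough). Pick representatives $a_i \in A_i$ and $b_i \in B$ with $a_i b_i$ an edge. Let $h_1, h_2, \dots$ be the canonical hook path from $B$, where $h_z$ lies in the ancestor of $B$ at depth $\depth(B) - z$; this is an induced path of $G$, and the proposition's properties give strong non-adjacencies between the $h_z$'s and the $a_i$'s and $b_i$'s.

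For each pair $i < j$ I consider the candidate cycle
\[
C_{i,j}\colon\ a_i\,-\,h(A_i)\,-\,h_{z_i}\,-\,h_{z_i-1}\,-\,\cdots\,-\,h_{z_j+1}\,-\,a_j\,-\,b_j\,-\,h(B)\,-\,b_i\,-\,a_i,
\]
where $z_\bullet = \depth(B) - \depth(A_\bullet)$; its length is $z_i - z_j + 6 \geq s + 6 \geq t$. Combining the proposition with bipartite parity, the only potential chords in $C_{i,j}$ are (a) $a_i b_j$, (b) $a_j b_i$, and (c) $a_i h_z$ for some even $z \in [z_j+1, z_i - 1]$. I apply Ramsey's theorem to the $3$-dimensional $\{0,1\}$-coloring of unordered pairs by the presence of each chord type, extracting a homogeneous sub-family $S$ of size $\geq k$. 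When neither (a), (b), nor (c) holds on $S$, $C_{i,j}$ is induced and too long, contradicting $C_{\geq t}$-freeness. When exactly one of (a) or (b) holds (regardless of (c)), the edges $\{a_i b_j : i<j\}$ together with the non-edges of the opposite orientation form a chain of length $|S| \geq k$, contradicting $\ch(G) < k$.

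The main obstacle is the remaining case in which both (a) and (b) hold (so $\{a_i\} \cup \{b_j\}$ forms a biclique), together with the parallel case where only (c) holds. For these I use the single-ancestor cycle
\[
D_i\colon\ a_i\,-\,b_i\,-\,h_1\,-\,h_2\,-\,\cdots\,-\,h_{z_i+1}\,-\,a_i,
\]
of length $z_i + 2$, whose only possible chords are $a_i h_z$ for even $z \leq z_i - 1$. When no such chord exists, $D_i$ is itself a forbidden induced cycle. Otherwise, $C_{\geq t}$-freeness forces $a_i$ to be adjacent to a dense set of hooks: defining $N_i := \{z \text{ even} : a_i \sim h_z, \ z \leq z_i - 1\}$, consecutive elements of $N_i$ must differ by at most $t - 3$, so $|N_i|$ grows with $z_i$. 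I then apply Dilworth's theorem to $(\{N_i\}_{i \in S}, \subseteq)$: a long $\subseteq$-chain $N_{i_1} \supseteq N_{i_2} \supseteq \cdots$ yields, via pairs $(h_z, a_{i_\ell})$ with $z \in N_{i_\ell} \setminus N_{i_{\ell+1}}$, an induced chain in $G$ of size $\geq k$ (using the proposition properties to rule out extraneous edges); a large antichain is handled by a secondary Ramsey step that likewise extracts a chain. The hard part is bookkeeping: choosing $\ell$, $R$, and $s$ so that the chained applications of Ramsey (on pairs) and Dilworth (on hook-neighborhoods) still leave parameters of size at least $k$, and verifying in the antichain sub-case that the extracted structure is a genuine chain in $G$ rather than merely a forbidden substructure of $(\{N_i\}, \subseteq)$.
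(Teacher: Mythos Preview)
Your high-level approach—thin to well-separated ancestors, apply Ramsey to adjacency data of the representatives, then do a case split yielding either a chain or a long induced cycle—is the same as the paper's. The crucial difference is \emph{what} you feed into Ramsey.

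You record only a single bit (c) about hooks: whether $a_i$ has \emph{some} hook-neighbour in the interval between $A_j$ and $A_i$. This is too coarse to finish the argument directly, and you compensate with the Dilworth machinery on the full hook-neighbourhoods $N_i$. The $\subseteq$-chain branch of Dilworth can indeed be made to yield a half-graph, though your definition of $N_i$ with the range cap $z\le z_i-1$ needs fixing: $z\in N_{i_\ell}\setminus N_{i_{\ell+1}}$ does not exclude $a_{i_{\ell+1}}\sim h_z$ when $z=z_{i_{\ell+1}}+1$, since that $h_z$ is the hook of $A_{i_{\ell+1}}$ and is always adjacent to $a_{i_{\ell+1}}$. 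The real gap is the antichain branch, which you dispatch as ``a secondary Ramsey step that likewise extracts a chain''. Pairwise incomparability of the $N_i$ only supplies witnesses $z_{pq}$ depending on both indices, and no coloring of pairs is specified that turns this into a half-graph; you yourself flag this sub-case as the point where you are unsure the structure is genuine. (There are also minor loose ends—$b_i=b_j$, or $a_i$ coinciding with a hook vertex—that collapse your cycle $C_{i,j}$, but those are easily added to the Ramsey colouring.)

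The paper avoids Dilworth entirely by using a \emph{finer} Ramsey colouring: for a pair $i<j$ it records the full $(t-4)$-bit vector of adjacencies between $a_i$ and the first $t-4$ hooks \emph{above $A_j$}. If some bit $z$ is $1$ on the homogeneous set, then $a_i\sim h_{j,z}$ for all $i<j$ while $h_{i,z}\not\sim a_j$ is automatic from the hook property, so the pairs $(a_i,h_{i,z})$ already form a chain of size $k$. If all bits are $0$, then using just two indices the paper walks the hook path from $a_2$ upward, takes the \emph{first} vertex $v$ on it adjacent to $a_1$, notes that the all-zero colour forces $v$ to be at least $t-3$ steps above $A_2$, and closes an induced cycle through $b_2,a_2,\ldots,v,a_1$ (or through $b_1$ and $h(B)$, depending on the $a_ib_j$ adjacencies) of length $\ge t$. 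Two cases, no recursion, and only $r=\max\{2,k\}$ is needed after Ramsey.
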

\begin{proof}
	Without loss of generality we assume that $t \geq 4$.
	Let $R$ be the Ramsey number that guarantees that a complete graph on $R$ vertices with edges
	colored by $2^{t-4}$ colors has a monochromatic clique of size $r \define \max\{ 2, k \}$.
	
	Let $\ell = (t-3) \cdot R$ and let $B$ be a bag of $Y$. If $B$ has depth at most $\ell$ in $Y$ the result
	holds trivially, so we will assume that $B$ has depth greater than $\ell$. Let $A''_1, A''_2, \dotsc,
	A''_m$ be the ancestors of $B$, excluding the immediate parent of $B$, to which $B$ has an edge.
	For each $i \in [m]$, let $a''_i \in A''_i$ be a neighbour of some $b''_i \in B$.
	
	Assume for the sake of contradiction that $B$ has edges to more than $\ell$ ancestors, so $m \geq \ell$.
	Then there is a subsequence of ancestor bags $A'_1, \dotsc, A'_R$ such that
	$\depth(A'_{i+1}) - \depth(A'_i) \geq t-3$, for each $i \in [R-1]$, so that there are at least $t-4$ levels of the \Gyarfas decomposition separating each bag in this subsequence. We will write $a'_i \define a''_{i^*}$ and
	$b'_i \define b''_{i^*}$, where $i^*$ is the index of the bag satisfying $A''_{i^*} = A'_i$.
	
	For each $A'_i$, let $h'_{i,1}$ be the hook of $A'_i$, and for $1 < z \leq t-3$, inductively define
	$h'_{i,z}$ as the hook of $\bag(h'_{i,z-1})$.
	For each pair $\{ A'_i, A'_j \}$ with $i < j$ we assign a color $\mathsf{col}\{A'_i, A'_j\}
	\in \zo^{t-4}$ as follows: 
	\begin{enumerate}
		\item[] the $z^{th}$ bit is 1 if and only if $a_i'$ is adjacent to $h_{j,z}'$, for $z \in [t-4]$.
	\end{enumerate}

	By Ramsey's theorem, we may now choose a subsequence $A_1, \dotsc, A_r$ of ancestor bags, where for
	each $i \in [r]$ there is a corresponding $i^* \in [R]$ such that $A_i = A'_{i^*}$, and each
	pair $\{A_i, A_j\}$ with $1 \leq i < j \leq r$ has the same color. We will now obtain a contradiction for each
	possibility of this color. We will write $a_i \define a'_{i^*}$, $b_i \define b'_{i^*}$, and
	$h_{i,z} \define h'_{i^*, z}$, for each $i \in [r]$ and $z \in [t-3]$, and use the notation and the properties from \cref{sec:gyarfas-def}.
	
	
	\noindent
	\textbf{Case 1:} There is $z \in [t-4]$ such that the $z^{th}$ bit of the color is 1. 
	Consider the subgraph $H$ induced by the vertices
	$\{a_1, a_2, \dotsc, a_r\} \cup \{ h_{1,z}, h_{2,z}, \dotsc, h_{r,z} \}$. For $1 \leq i < j \leq r$, we have $a_i$
	adjacent to $h_{j,z}$ due to the color, and $h_{i,z}$ is not adjacent to $a_j$ due to Property \ref{item:prop-ah}. Thus, by definition, $\ch(G) \geq \ch(H) = r \geq k$, a contradiction.
	
	\noindent
	\textbf{Case 2:} All bits of the color are 0. Consider the hook path $P$ from $a_2$ to $h_{1,1}$.
	Let $v$ be the first (i.e. closest to $a_2$) vertex on $P$ that is adjacent to $a_1$. Such a vertex exists because $a_1$ is adjacent to the last vertex $h_{1,1}$ of the path.
	Let $P'$ be the subpath of $P$ from $a_2$ to $v$.
	By Property \ref{item:prop-aiaj} and the color assumption, $P'$ contains the first $t-2$ vertices of $P$:
	$a_2, h_{2,1}, h_{2,2}, \ldots, h_{2,t-4}, h_{2,t-3}$.
	Now, if $b_2$ is adjacent to $a_1$, then $b_2,P',a_1,b_2$ is an induced cycle of length at least $t$.
	Similarly, if $b_1$ is adjacent to $a_2$, then $b_1, P', a_1, b_1$ is such a cycle. 
	Finally, if neither $b_2$ is adjacent to $a_1$, nor $b_1$ is adjacent to $a_2$, then $b_1 \neq b_2$ and,
	by Properties \ref{item:prop-1} and \ref{item:prop-bh}, $h,b_2,P',a_1,b_1,h$ is a forbidden induced cycle.
\end{proof}

\subsection{Excluding Subdivisions of Stars}
\label{sec-sub-star}


\begin{lemma}
\label{lemma:gyarfas-star}
For any $s, t, k \in \bN$, there exists a constant $\ell$ such that the following holds. Let $G =
(X,Y,E)$ be any bipartite graph with $\ch(G) < k$ that does not contain $S_{s, t}$ as an
induced subgraph. Let $Y$ be a \Gyarfas decomposition of $G$. Then $Y$ has maximum back-degree at
most $\ell$. 
\end{lemma}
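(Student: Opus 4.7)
The plan is to mimic the proof of \cref{lemma:gyarfas-bdd-chordality}, replacing the role of forbidden long induced cycles with the forbidden subdivided star $S_{s,t}$. Fix $r \define \max\{s, k\}$ and let $R$ be a Ramsey number such that any edge-coloring of $K_R$ with $C$ colors contains a monochromatic $K_r$, where $C$ is a finite constant depending only on $s$ and $t$, determined below. Set $\ell \define t \cdot R$, and suppose for contradiction that some bag $B$ has back-degree exceeding $\ell$. Exactly as in the previous lemma, I can then extract a subsequence $A_1, \ldots, A_R$ of ancestors of $B$ (distinct from the immediate parent) to which $B$ has an edge, with $\depth(A_{i+1}) - \depth(A_i) \geq t$ for all $i$. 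Pick neighbours $a_i \in A_i$ of some $b_i \in B$, let $h$ be the hook of $B$, and define the hook vertices $h_{i,1}, \ldots, h_{i,t-1}$ above each $A_i$ as in the previous proof. The spacing $\geq t$ guarantees that the hook paths of different $A_i$'s are vertex-disjoint, since the bags $\bag(h_{i,z})$ for $z \in [t-1]$ lie strictly between $A_{i-1}$ and $A_i$ in depth.

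Next, I color each pair $\{A_i, A_j\}$ with $i < j$ by the tuple of bits recording (i) whether $a_i$ is adjacent to $h_{j,z}$ for each $z \in [t-1]$; (ii) whether $b_i$ is adjacent to $a_j$; and (iii) whether $h_{i,z}$ is adjacent to $h_{j,z'}$ for each $z, z' \in [t-1]$. Since this is a constant number $C = C(s,t)$ of bits, Ramsey's theorem yields a monochromatic subset of size $r$, which I reindex as $A_1, \ldots, A_r$ (with corresponding $a_i, b_i, h_{i,z}$).

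If the common color records at least one adjacency, I extract a chain of size $r \geq k$ from $G$, contradicting $\ch(G) < k$. For example, if $a_i$ is adjacent to $h_{j,z}$ for all $i < j$ (for some fixed $z$), then the sequence $(a_i, h_{i,z})_{i \in [r]}$ is such a chain, since $a_j$ is non-adjacent to $h_{i,z}$ whenever $j > i$ by the Gyarfas hook property (the hook $h_{i,z}$ is not in the parent bag of $A_j$). Analogous extractions handle the remaining color bits. In the opposite case that the common color is all zero, the vertex $h$ together with the $s$ vertex-disjoint paths $P_i = (b_i, a_i, h_{i,1}, \ldots, h_{i,t-2})$ for $i \in [s]$ induce a copy of $S_{s,t}$ in $G$: the adjacencies $h$--$b_i$ and the consecutive edges of each $P_i$ hold by construction and the hook relation, while the required non-adjacencies (within each $P_i$ and between distinct $P_i, P_j$) are provided by the Gyarfas properties together with the all-zero color assumption, contradicting $S_{s,t}$-freeness of $G$.

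I expect the main obstacle to be the case-by-case verification in the ``some bit is $1$'' branch. For each of the $C(s,t)$ possible non-trivial color bits I must exhibit an explicit chain of size $r$. The $a_i$--$h_{j,z}$ bits are handled cleanly by the Gyarfas hook non-adjacency properties, but the hook--hook adjacencies $h_{i,z}$--$h_{j,z'}$ and the mixed $b_i$--$a_j$ bit will require careful bookkeeping to select the right roles $(x_i, y_i)$ in the half-graph and to rule out the ``wrong-direction'' adjacency using the hook-path structure and the spacing $\geq t$ of the $A_i$'s. Once this case analysis is in place, the argument is structurally parallel to that of the long-cycle lemma.
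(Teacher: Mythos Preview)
Your approach is the same Ramsey-plus-color-bits strategy as the paper, but there is a genuine gap in your choice of color bits: you record only one direction of the $b$--$a$ adjacency (bit (ii): ``$b_i$ adjacent to $a_j$'' for $i<j$) and omit the reversed direction ``$a_i$ adjacent to $b_j$'' (equivalently, the paper's third bit). This breaks both of your terminal cases. First, when bit (ii) is $1$ you claim to extract a chain, but you cannot: knowing $b_i a_j \in E$ for all $i<j$ tells you nothing about $a_i b_j$, so neither ordering of the pair $(a_\bullet,b_\bullet)$ witnesses $\ch(G)\ge r$. Second, in your all-zero case the construction with center $h$ and branches $(b_i,a_i,h_{i,1},\ldots,h_{i,t-2})$ need not be an \emph{induced} $S_{s,t}$: nothing in your colors forbids edges $a_i b_j$ for $i<j$, and such an edge connects the second vertex of branch $i$ to the first vertex of branch $j$.

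The paper fixes exactly this by recording both bits for $b$--$a$ adjacency (and a bit for $b_i=b_j$, though as you implicitly observe, $b_i=b_j$ forces $b_i a_j\in E$, so this is subsumed). With both directions available, the case ``$b_i a_j\in E$ for $i<j$'' is handled not by a chain but by a \emph{second} $S_{s,t}$ construction centered at $b_1$ with branches $(a_i,h_{i,1},\ldots,h_{i,t-1})$; the case ``$a_i b_j\in E$ and $b_i a_j\notin E$ for $i<j$'' is exactly a chain; and the all-zero case is then genuinely free of cross-branch $a$--$b$ edges, so your $h$-centered $S_{s,t}$ goes through. As a minor point, your hook--hook bits (iii) are superfluous: with spacing at least $t$ between the $A_i$'s, every $h_{i,z}$ is a hook whose non-adjacency to strict descendants already forces $h_{i,z}h_{j,z'}\notin E$ for all $i<j$ and all $z,z'\in[t-1]$.
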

\begin{proof}
Let $R$ be the Ramsey number that guarantees that a complete graph on $R$ vertices with edges
colored by $2^{3+(t-1)}$ colors has a monochromatic clique of size $r \define \max\{ s, k \}$.

Let $\ell = t \cdot R + 1$ and let $B$ be a bag of $Y$. If $B$ has depth at most $\ell$ in $Y$ the result
holds trivially, so we will assume that $B$ has depth greater than $\ell$. Let $A''_1, A''_2, \dotsc,
A''_m$ be the ancestors of $B$, excluding the immediate parent of $B$, to which $B$ has an edge,
meaning that for each $i \in [m]$, there exists a vertex $b''_i \in B$ with an edge to a vertex $a''_i
\in A''_i$.

Assume for the sake of contradiction that $B$ has edges to more than $\ell$ ancestors, so $m \geq \ell$.
Then there is a subsequence of ancestor bags $A'_1, \dotsc, A'_R$ such that $\depth(A'_1) \geq t$ and $\depth(A'_{i+1}) - \depth(A'_i) \geq t$, for each $i \in [R-1]$; in particular there are at least $t-1$ levels of the Gy\'arf\'as decomposition separating each bag in this subsequence. We will write $a'_i \define a''_{i^*}$ and
$b'_i \define b''_{i^*}$, where $i^*$ is the index of the bag satisfying $A''_{i^*} = A'_i$.

For each $A'_i$, let $h'_{i,1}$ be the hook of $A'_i$, and for $1 < z \leq t-1$, inductively define
$h'_{i,z}$ as the hook of $\bag(h'_{i,z-1})$.
For each pair $\{ A'_i, A'_j \}$ with $i < j$ we assign a color $\mathsf{col}\{A'_i, A'_j\}
\in \zo^{3+(t-1)}$ as follows:
\begin{enumerate}
\item The first bit indicates whether $b'_i = b'_j$ (i.e. set the bit to 1 if $b'_i = b'_j$ and 0
otherwise).
\item The second bit indicates whether $b'_i$ is adjacent to $a'_j$.
\item The third bit indicates whether $b'_j$ is adjacent to $a'_i$.
\item The remaining $t$ bits indicates whether $a'_i$ is adjacent to $h'_{j,z}$, for $z \in [t-1]$.
\end{enumerate}
By Ramsey's theorem, we may now choose a subsequence $A_1, \dotsc, A_r$ of ancestor bags, where for
each $i \in [r]$ there is a corresponding $i^* \in [R]$ such that $A_i = A'_{i^*}$, and each
pair $\{A_i, A_j\}$ with $i < j$ has the same color. We will now obtain a contradiction for each
possibility of this color. We will write $a_i \define a'_{i^*}$, $b_i \define b'_{i^*}$, and
$h_{i,z} \define h'_{i^*, z}$, for each $i \in [r]$ and $z \in [t-1]$, and use the notation and the properties from \cref{sec:gyarfas-def}.


\noindent
\textbf{Case 1:} There is $z \in [t-1]$ such that the $(3+z)^{th}$ bit of the color is 1. The argument is exactly as in Case 1 of \cref{lemma:gyarfas-bdd-chordality}.
Consider the subgraph $H$ induced by the vertices
$\{a_1, a_2, \dotsc, a_r\} \cup \{ h_{1,z}, h_{2,z}, \dotsc, h_{r,z} \}$. For $1 \leq i < j \leq r$, we have $a_i$
adjacent to $h_{j,z}$ due to the color, and $h_{i,z}$ is not adjacent to $a_j$ due to Property \ref{item:prop-ah}. Thus, by definition, $\ch(G) \geq \ch(H) = r \geq k$, a contradiction.

\noindent
\textbf{Case 2:} The first bit or second bit of the color is 1, and the $(3+z)^{th}$ color is 0 for
all $z \in [t-1]$. 
Consider the subgraph induced by the vertices
\[
\{b_1\} \cup \bigcup_{i=1}^{s} \{a_i, h_{i,1}, h_{i,2}, \dotsc, h_{i,t-1} \} \,.
\]
Since each $A_i$ is separated by at least $t-1$ levels of $Y$, each of the above named vertices are
distinct.  If the first bit of the color is 1, then we have $b_1$ adjacent to each $a_i$ by
definition, since $b_1 = b_2 = \dotsm = b_s$. If the first bit of the color is 0 but the second bit
of the color is 1, then we have $b_1$ adjacent to each $a_i$ because of the color.

For each $1 \leq i < j \leq s$ and $z \in [t-1]$, we have $a_i$ not adjacent to $h_{j,z}$ because
the associated bit of the color is set to 0, and we have $a_j$ not adjacent to $h_{i,z}$ by Property
\ref{item:prop-ah}. For $z \geq 2$, we have $a_i$ not adjacent to $h_{i,z}$ by Property
\ref{item:prop-ahi}. We have $a_i$ not adjacent to $a_j$ by Property
\ref{item:prop-aiaj}. And we have $b_1$ not adjacent to $h_{i,z}$ by Property \ref{item:prop-bh}. But we have
$b_1$ adjacent to each $a_i$, as well as edges $a_ih_{i,1}$ and $h_{i,1}h_{i,2}$, $h_{i,2}h_{i,3}$,
$\dotsc$, $h_{i,z-1}h_z$ by definition. So the subgraph induced by the considered vertices is $S_{s, t}$, which is a contradiction.

\noindent
\textbf{Case 3:} The first two bits of the color are 0, the third bit is 1, and the $(3+z)^{th}$ bit
is 0 for each $z \in [t-1]$. Consider the subgraph $H$ induced by the vertices $\{b_1, \dotsc, b_k\}
\cup \{ a_1, \dotsc, a_k \}$. For each $i < j$, $a_i$ is adjacent to $b_j$ due to the third bit of
the color, but $b_i$ not adjacent to $a_j$ due to the second bit of the color. Then we have $\ch(G) \geq \ch(H) = r \geq k$, a contradiction.

\noindent
\textbf{Case 4:} All bits of the color are 0. Consider the subgraph induced by the vertices
\[
  \{ h \} \cup \bigcup_{i=1}^s \{ b_i, a_i, h_{i,1}, \dotsc, h_{i,t-2} \} \,.
\]
Since each bag $A_i$ in the sequence is separated by at least $t-1$ levels of $Y$ and the first bit of the color is 0, each of the named
vertices above is distinct. 
By Property \ref{item:prop-1}, $h$ is adjacent to none of the vertices
$a_i, h_{i,1}, \dotsc, h_{i,t-2}$ for every $i \in [s]$.
For each $i < j$, we have $b_i$ not adjacent to $a_j$ and $a_i$ not
adjacent to $b_j$ due to the color. For each $z \in [t-1]$, we have $h_{i,z}$ not adjacent to $b_i$
or $b_j$ due to Property \ref{item:prop-bh}; and we have $h_{i,z}$ not adjacent to $a_j$ due to Property
\ref{item:prop-ah} and $a_i$ not adjacent to $h_{j,z}$ due to the color. For $z \geq 2$ we have
$h_{i,z}$ not adjacent to $a_i$ due to Property \ref{item:prop-ahi}.

On the other hand, we have edges $hb_i$ for each $i \in [s]$ by definition, along with edges
$b_ia_i$, $a_ih_{i,1}$, and $h_{i,1}h_{i,2}$, $\dotsc$, $h_{i,t-3}h_{i,t-2}$. Therefore the induced subgraph is $S_{s, t}$, which is a contradiction.
\end{proof}

\subsection{A Communication Protocol for the Gy\'arf\'as Decomposition}
\label{section:protocol-for-gyarfas}

Let $G$ be a connected bipartite graph and let $Y$ be a \Gyarfas decomposition of $G$.
For any bag $B$ of $Y$ with depth $d = \depth(B)$, let $G_B$ denote the subgraph of $G$ induced by
$B$ together with all of the descendent bags of $B$ in $Y$ with depth $d' \not\equiv d \mod 2$.
We require the next two lemmas of \cite{POS22}.

\begin{lemma}
\label{lemma:pos-depth-2}
Let $B$ be a bag of $Y$ with $\depth(B) = d$ for any $d \geq 2$. Then $\ch(G_B) < \ch(G)$.
\end{lemma}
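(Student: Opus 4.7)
The plan is to take a maximum chain $(a_i, b_i)_{i=1}^k$ in $G_B$ realizing $k = \ch(G_B)$ and extend it to a chain of length $k+1$ in $G$, using the hook $h$ of $B$ (at depth $d-1$) and the hook $h'$ of $\bag(h)$ (at depth $d-2$); both exist precisely because $d \geq 2$.

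First I would verify a structural observation: since $G$ is bipartite and every non-root bag's hook is adjacent to the entire bag, each bag lies on one side of the bipartition of $G$, and bags at adjacent depths sit on opposite sides. Hence $G_B$ is bipartite with $B$ occupying one side (call it $X$) and all the included descendent bags, at depths $d+1, d+3, \ldots$, occupying the other side $Y$. The adjacency facts I would rely on are: $h$ (on $Y$) is adjacent to every vertex of $B$ and non-adjacent to every strict descendent of $B$; and $h'$ (on $X$) is non-adjacent to every strict descendent of $\bag(h)$---which includes $B$ together with all of its descendents.

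Since the $a_i$'s and $b_i$'s must lie on opposite sides of $G_B$, there are two symmetric cases. If the $a_i$'s all lie in $B$ and the $b_i$'s all in the descendent part, I would append $a_{k+1} \define h'$ and $b_{k+1} \define h$: then $b_{k+1} = h$ is adjacent to each $a_i \in B$, and $a_{k+1} = h'$ is non-adjacent to each $b_j$, since $b_j$ lies in a descendent of $B$ and hence of $\bag(h)$. Conversely, if the $b_i$'s lie in $B$ and the $a_i$'s in the descendent part, I would prepend $a_0 \define h$ and $b_0 \define h'$ and re-index; the symmetric adjacency facts produce the needed edges and non-edges. In both cases $h$ and $h'$ sit at depths $d-1$ and $d-2$, so they are automatically disjoint from the existing chain (whose vertices all have depth $\geq d$) and land on the correct sides of the bipartition, yielding a chain of length $k+1$ in $G$.

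There isn't a genuinely hard step here: the conclusion is a direct consequence of the hook-adjacency properties built into the \Gyarfas decomposition. The only care needed is bookkeeping---choosing to append versus prepend based on which side of the bipartition contains $B$, and using the two-deep hook chain $(h, h')$ rather than just $h$, so that one simultaneously supplies the edge (via $h$) and the non-edge (via $h'$) required to extend a chain by one.
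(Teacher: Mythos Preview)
The paper itself does not give a proof of this lemma; it is quoted from \cite{POS22}. Your argument via the two hooks $h, h'$ is the natural one and is essentially correct, but there is a small gap in the case split. You claim that ``the $a_i$'s and $b_i$'s must lie on opposite sides of $G_B$'', but the chain-index definition only imposes the edges $a_ib_j$ for $i<j$; no edge constraint touches $a_k$ or $b_1$. So while $a_1,\dots,a_{k-1}$ are forced onto one side of the bipartition and $b_2,\dots,b_k$ onto the other, $a_k$ and $b_1$ may sit on either side. Concretely, if $a_1,\dots,a_{k-1}\in B$ but $a_k\in D$, your appended pair fails the required edge $a_kb_{k+1}$, since $b_{k+1}=h$ is non-adjacent to every strict descendent of $B$. (A witness: take $B=\{u\}$, $D=\{v_1,v_2,v_3\}$ with the single edge $uv_1$; then $\ch(G_B)=2$, and every length-$2$ chain must have $a_2\in D$.)

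The repair is short and keeps your approach intact: when $a_k\in D$, insert $h'$ at position $k$ and push the original $a_k$ to position $k{+}1$, i.e.\ take $a'_i=a_i$ for $i\le k{-}1$, $a'_k=h'$, $a'_{k+1}=a_k$, and $b'_j=b_j$ for $j\le k$, $b'_{k+1}=h$. The only checks beyond those you already made are $h'h\in E$ (true: $h'$ is the hook of $\bag(h)$) and $b_ka_k\notin E$ (true: both lie in the independent side $D$). The prepend case has the symmetric issue with $b_1$ and the symmetric fix.
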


\begin{lemma}
\label{lemma:pos-depth-1}
Let $B$ be a bag of $Y$ with $\depth(B) = 1$. Let $C$ be a connected component of $\bc{G_B}$ and
$Y_C$ be a \Gyarfas decomposition of $C$ rooted at a vertex $r_C \in V(C) \cap B$. Let $B'$ be a bag
of $Y_C$ with $\depth(B') = d \geq 1$. Then $\ch(C_{B'}) < \ch(G)$.
\end{lemma}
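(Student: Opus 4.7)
The plan is to split by $d = \depth_{Y_C}(B')$. If $d \geq 2$, I would apply \cref{lemma:pos-depth-2} directly to $(C, Y_C)$ and $B'$, obtaining $\ch(C_{B'}) < \ch(C)$. Since bipartite complementation preserves the chain-index (given any chain in $C$ one obtains a chain of the same length in $\bc{C}$ by swapping the $a$- and $b$-roles), and $\bc{C} = G_B[V(C)]$ is an induced subgraph of $G$, this already yields $\ch(C_{B'}) < \ch(C) = \ch(\bc{C}) \leq \ch(G)$, as desired.

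The genuine case is $d = 1$, where $B'$ is a child of the singleton root $\{r_C\}$ in $Y_C$. Fix $k = \ch(C_{B'})$ and a chain $a_1, \dotsc, a_k, b_1, \dotsc, b_k$ in $C_{B'}$; since any edge $a_i b_j \in E(C)$ crosses the bipartition $X \cup Y$ of $G$, WLOG $a_i \in X$ and $b_i \in Y$. Assuming $B$ sits on side $X$ at depth $1$ in $Y$, the parity structure gives $V(G_B) \cap X = B$; since depths alternate sides in $Y_C$ starting from $r_C \in X$, the bag $B'$ lies on side $Y$ and the even-depth descendants of $B'$ in $Y_C$ live in $V(C) \cap X \subseteq B$. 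Therefore $b_i \in B'$ and $a_i \in B$ for every $i$.

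Translating the chain from $C$ to $G$ by taking the bipartite complement (swapping the $a$- and $b$-roles) shows that $(b_1, \dotsc, b_k;\, a_1, \dotsc, a_k)$ is already a chain of length $k$ in $G$. I would extend it to length $k+1$ by prepending $r$ (root of $Y$) on the $b$-side and $r_C$ (root of $Y_C$) on the $a$-side. Since $r$ is the hook of $B$ in $Y$, it is $G$-adjacent to every vertex of $B$, hence to every $a_i$; since $r_C$ is the hook of $B'$ in $Y_C$, it is $C$-adjacent to every $b_i$, i.e., $G$-non-adjacent to them. Disjointness is automatic: $r \notin V(G_B) \supseteq V(C)$, $r_C \notin V(C_{B'})$ since $r_C$ sits at depth $0$ of $Y_C$, and $r, r_C$ lie on opposite bipartition sides (so they are distinct from all chain vertices and from each other).

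The main obstacle is the bipartition/parity bookkeeping needed to conclude $a_i \in B$, since $r$ is only guaranteed $G$-adjacent to $B$ itself and not to any deeper descendants. Once this parity argument for $Y_C$ is in hand, the extension by the two roots is immediate and gives $\ch(G) \geq k+1 > \ch(C_{B'})$.
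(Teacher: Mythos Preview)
Your strategy---translate the $C_{B'}$ chain to $G$ via the bipartite complement, then extend by the two roots $r$ and $r_C$---is the right idea, and your parity bookkeeping (that $V(G_B)\cap X = B$ and hence the $X$-side of $C_{B'}$ lies in $B$) is correct. However, there is a genuine gap in the alignment step, and it affects both cases.

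\textbf{The claim ``bipartite complementation preserves the chain-index'' is false.} Take $H = K_{1,3}$ with bipartition $X=\{x\}$, $Y=\{y_1,y_2,y_3\}$. Then $\ch(H)=2$, witnessed by $a_1=x,\ a_2=y_2,\ b_1=y_3,\ b_2=y_1$: indeed $a_1b_2=xy_1$ is an edge and $b_1a_2=y_3y_2$ is a non-edge. But $\bc{H}$ is edgeless, so $\ch(\bc{H})=1$. Your $d\geq 2$ argument concludes $\ch(C_{B'}) < \ch(C) = \ch(\bc{C}) \leq \ch(G)$, and the middle equality can fail by exactly this mechanism.

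\textbf{The same issue breaks the $d=1$ case.} Your ``WLOG $a_i\in X$ and $b_i\in Y$'' is not justified: the chain condition ($a_ib_j$ an edge for $i<j$) only forces $a_1,\dotsc,a_{k-1}$ and $b_2,\dotsc,b_k$ onto opposite sides of the bipartition. The endpoints $a_k$ and $b_1$ are unconstrained. If $a_k\in Y$ (i.e.\ $a_k\in B'$), then $r$ and $a_k$ lie on the same side of $G$, so $ra_k$ is a non-edge and your prepended pair fails the required adjacency $\alpha_0\beta_k$. Nothing in the setup rules out $C_{B'}\cong K_{1,3}$ with the center on the $X$-side.

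The paper does not give its own proof of this lemma; it is quoted from \cite{POS22}. The argument there does use hook vertices to extend, but handles the non-aligned endpoints more carefully than a single swap-and-prepend. Your proof would need an additional case analysis (or a second extension using further hooks along the path in $Y_C$) to deal with $a_k$ and $b_1$ landing on the ``wrong'' side.
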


We will also require the following easy fact.
\begin{proposition}
\label{proposition:base-case}
Let $\cG$ be the class of bipartite graphs $G$ with $\ch(G) = 1$. Then $\DEQ(\cG) \leq 2$.
\end{proposition}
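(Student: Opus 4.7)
The plan is to exploit the fact that in a bipartite graph $G = (X, Y, E)$, two vertices on the same side of the bipartition are automatically non-adjacent. This makes the condition $\ch(G) = 1$ extremely restrictive, and I will show that every $G \in \cG$ is either edgeless or has at most $4$ vertices. Once this structural claim is in hand, the protocol is essentially a brute-force lookup.

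For the structural step, suppose $G \in \cG$ has at least one edge $x_0 y_0$ with $x_0 \in X$, $y_0 \in Y$, and assume for contradiction that $|X| \geq 3$. Pick two additional vertices $x_1, x_2 \in X \setminus \{x_0\}$. Then the disjoint pairs $\{a_1, a_2\} = \{x_0, x_1\}$ and $\{b_1, b_2\} = \{x_2, y_0\}$ witness $\ch(G) \geq 2$: the pair $a_1 b_2 = x_0 y_0$ is an edge of $G$, while $b_1 a_2 = x_2 x_1$ is a non-edge since both endpoints lie in $X$. This contradicts $\ch(G) = 1$, so $|X| \leq 2$, and symmetrically $|Y| \leq 2$, giving $|V(G)| \leq 4$.

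For the protocol itself: if $E = \emptyset$ we use the depth-$0$ tree whose single leaf outputs $0$. Otherwise $|V(G)| \leq 4$, so Alice's input and Bob's input each take one of at most two possible values. A first Equality query reveals $a$ to both parties (Alice sets $\phi_1(a)$ to the index of $a$ in a fixed labeling of $X$, while Bob sets $\psi_1(b) = 1$, so the oracle fires iff $a$ is the first vertex of $X$), and a symmetric second query reveals $b$. After two queries the pair $(a, b)$ is determined along the root-to-leaf path, so we hard-code each of the at most $4$ leaves with the correct value of $\Adj_G(a, b)$. The only substantive step is the structural claim above; the protocol is then immediate from brute force on a constant-sized graph.
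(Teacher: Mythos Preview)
Your proof is correct, but the route differs from the paper's. The paper observes that $\ch(P_4)=2$, so every $G\in\cG$ is $P_4$-free; a $P_4$-free bipartite graph is a disjoint union of bicliques (a bipartite equivalence graph), and adjacency in such a graph is decided by one bit (to check opposite sides) plus one \textsc{Equality} call (to check ``same biclique'').

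Your structural claim is strictly sharper: you show that once $G$ has an edge, $\ch(G)=1$ forces $|X|,|Y|\le 2$. The key trick---placing three of the four chain-witness vertices on the \emph{same} side of the bipartition so that the required non-edge $b_1a_2$ comes for free---is valid under the paper's definition of $\ch$, which does not insist that the $a_i$'s and $b_j$'s lie on opposite sides. (Note this means, e.g., that $K_{3,3}$ and $3K_2$ already have chain-index $\ge 2$, so the paper's ``disjoint union of bicliques'' is a genuine over-approximation of $\cG$.) Your brute-force protocol then legitimately uses two oracle calls, each simulating one bit of deterministic communication, which is permitted since such row-constant (or column-constant) $\pm1$ matrices lie in $\QS(\Eq)$.

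What each approach buys: yours is more self-contained and avoids invoking the $P_4$-free $\Leftrightarrow$ equivalence-graph fact, at the cost of a slightly ad hoc protocol. The paper's approach is less tight structurally but connects the base case directly to the semantics of the \textsc{Equality} oracle (equivalence graphs are exactly what one \textsc{Equality} query computes), which is the more natural viewpoint for the recursion in \cref{lemma:gyarfas-protocol}.
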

\begin{proof}
Note that the chain-index of $P_4$, the 4-vertex path, is 2.
Thus each $G \in \cG$ is $P_4$-free and therefore is a disjoint union of bicliques, \ie an equivalence graph.
Therefore Alice and Bob may compute adjacency in $G$ by using 1 bit of communication to ensure that
their input vertices $x$ and $y$ are on opposite sides of the bipartition, and using 1 call to the
$\Eq$ oracle to check if $x,y$ are in the same biclique.
\end{proof}

Our first main result, \cref{thm:intro-main}, follows from the next lemma, applied together
with \cref{lemma:gyarfas-bdd-chordality,lemma:gyarfas-star}.

\begin{lemma}
\label{lemma:gyarfas-protocol}
Let $\cG$ be a hereditary class of bipartite graphs that is closed under bipartite complementation, and which satisfies the following conditions:
\begin{enumerate}
\item There exists a constant $k$ such that $\ch(\cG) \leq k$.
\item There exists a constant $\ell$ such that for any $G = (X,Y,E) \in \cG$, any \Gyarfas decomposition
of $G$ has back-degree bounded by $\ell$.
\end{enumerate}
Then there exists a constant $c$ such that $\DEQ(\cG) \leq c$.
\end{lemma}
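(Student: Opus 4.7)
The plan is to induct on the chain-index bound $k$. The base case $k = 1$ follows from \cref{proposition:base-case}, which gives $\DEQ(\cG) \leq 2$. For the inductive step, apply the inductive hypothesis to the subclass $\cG' \define \{ H \in \cG : \ch(H) < k \}$, which is hereditary, closed under bipartite complementation (since $\ch$ is preserved under bipartite complementation), and inherits the back-degree bound $\ell$; hence $\DEQ(\cG') \leq c_{k-1}$ for some constant.

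Given $G \in \cG$ with $\ch(G) \leq k$, both parties agree on a \Gyarfas decomposition $Y$ of $G$, computed deterministically from $G$. Each vertex $v$ carries labels for $\bag(v)$, for its parent bag, and for the at most $\ell$ strict ancestor bags of $\bag(v)$ that have an edge to $\bag(v)$. On input $(x, y)$, by property 3 of \cref{def:gyarfas} together with the back-degree bound, $x$ and $y$ can be adjacent only if $\bag(x) \neq \bag(y)$ (because bags are monochromatic in the bipartition) and one of $\bag(x), \bag(y)$ appears in the bounded ancestor-list of the other. Thus $O(\ell)$ Equality queries suffice to either conclude ``non-adjacent'' or identify, WLOG, that $B \define \bag(x)$ is an ancestor of $\bag(y)$ at some specific depth $d$.

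The subcases $d = 0$ (then $x$ is the root and adjacency is decided from whether $\bag(y)$ sits at depth $1$, a fact Bob knows) and $d \geq 2$ (then \cref{lemma:pos-depth-2} yields $\ch(G_B) < k$, so $G_B \in \cG'$ with $x, y \in V(G_B)$, and we recurse at cost $c_{k-1}$) are straightforward. The hard part is $d = 1$, where \cref{lemma:pos-depth-2} does not apply; here we pass to the bipartite complement. Once $B$ is identified, both parties know $B$ from the protocol state and can each locally compute the connected components of $\bc{G_B}$. One additional Equality query checks whether $x$ and $y$ lie in the same such component: if not, $xy$ is not an edge of $\bc{G_B}$ and is therefore an edge of $G$; otherwise, let $C$ be their common component and $Y_C$ a \Gyarfas decomposition of $C$ rooted at a fixed vertex of $V(C) \cap B$. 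By \cref{lemma:pos-depth-1}, every subgraph $C_{B'}$ arising from a non-root bag $B'$ of $Y_C$ satisfies $\ch(C_{B'}) < k$, and $C_{B'} \in \cG'$ since $C$ is an induced subgraph of $\bc{G} \in \cG$. Running the same bag-identification scheme on $Y_C$ therefore reduces to a problem covered by the inductive hypothesis, with the answer complemented since $xy$ is an edge of $G$ iff it is not an edge of $C$. Each recursion level strictly decreases the chain-index bound, so the recursion terminates after at most $k$ levels at cost $O(\ell)$ each, yielding $\DEQ(\cG) = O(k\ell) = O(1)$.
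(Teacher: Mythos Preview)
Your proposal is correct and takes essentially the same approach as the paper. The organization differs slightly---the paper fixes Alice's vertex to lie on the same side of the bipartition as the root and then case-splits on whether $y$ has depth $1$, whereas you do a symmetric case analysis on the depth $d$ of whichever bag turns out to be the ancestor---but the core ingredients are identical: induction on the chain-index bound, using the back-degree bound with $O(\ell)$ \textsc{Equality} queries to locate the ancestor bag, invoking \cref{lemma:pos-depth-2} when $d \geq 2$, and passing to a component of the bipartite complement together with \cref{lemma:pos-depth-1} when $d = 1$. You gloss over a couple of routine preliminaries (reducing to connected $G$ and to $x,y$ on opposite sides), but these are the same one-bit/one-query checks the paper does.
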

\begin{proof}
We prove the theorem by induction
on $k$. The base case $k = 1$ is established in \cref{proposition:base-case}. Let $x,y \in V(G)$ be
Alice's and Bob's inputs, respectively. We may assume without loss of generality that $G$ is
connected and that $x$ and $y$ are in opposite parts of the bipartition of $G$, since Alice and Bob
may use one $\Eq$ oracle call to check whether their inputs $x$ and $y$ are in the same connected
component, and use 1 bit of communication to determine whether $x$ and $y$ are in opposite parts.

Let $Y$ be a \Gyarfas decomposition of $G$. The communication protocol proceeds as follows. We will
assume that the root vertex of $Y$ is on the left side of the bipartition of $G$, and that Alice's
input $x$ is on the left side and Bob's input $y$ is on the right side of the bipartition.
\begin{enumerate}
\item Using 1 bit of communication, Alice tells Bob whether $x$ is the root vertex of $Y$. If so,
Bob outputs 1 if $y$ has depth 1 in $Y$ and the protocol terminates. The protocol is correct in this
case, since by \cref{def:gyarfas}, all vertices at depth 1 are adjacent to the root vertex.
\item Using 1 bit of communication, Bob tells Alice whether $y$ has depth 1 in $Y$. If so, they
perform the following:
  \begin{enumerate}
    \item Using 1 call to the $\Eq$ oracle, Alice and Bob decide if $\bag(x)$ is a descendent of
      $\bag(y)$. This is possible because Alice and Bob each know the set of level 1 bags of $Y$.
      If $\bag(x)$ is not a descendent of $\bag(y)$, they output 0 and the protocol terminates.
      The protocol is correct in this case, since by \cref{def:gyarfas}, if $x$ and $y$ are adjacent
      then $\bag(x)$ must be the descendent of $\bag(y)$ or vice versa.
    \item Alice and Bob now agree on $B = \bag(y)$, so they each compute the connected components
      $C_1, \dotsc, C_m$ of $\bc{G_B}$ and agree on \Gyarfas decompositions $Y_1, \dotsc,
      Y_m$ of these components, respectively, where the root vertex of each decomposition is on the
      right side of the bipartition. Using 1 call to the $\Eq$ oracle, they decide if $x,y$ belong to the
      same connected component of $\bc{G_B}$. If not, they output 1 and terminate the protocol. The
      protocol is correct in this case by definition.
    \item Let $i$ be the index of the component $C_i$ of $\bc{G_B}$ containing both $x$ and $y$.
      Using 1 bit of communication, Bob tells Alice whether $y$ is the root vertex of $Y_i$. If
      so, Alice outputs 0 if $x$ has depth 1 in $Y_i$ and the protocol terminates. The protocol is
      correct in this case, since by \cref{def:gyarfas} all vertices of depth 1 in $Y_i$ are adjacent
      to the root vertex $y$ in $\bc{G_B}$ (and therefore non-adjacent in $G$).
    \item By the assumption of bounded back-degree, $\bag_{Y_i}(x)$ has edges
      to at most $\ell$ of its ancestors in $Y_i$. Call these ancestors $A_1, \dotsc, A_{\ell'}$ where
      $\ell' \leq \ell$. Using $\ell$ calls to the $\Eq$ oracle, Alice and Bob determine whether $A_j =
      B'$ for some $j \leq \ell'$, where $B' \define \bag_{Y_i}(y)$.
      \label{item:protocol-step-comp1}
      \begin{enumerate}
        \item If $A_j = B'$ then Alice and Bob inductively compute adjacency in the graph
        $(C_i)_{B'}$, which is the bipartite complement of a graph in $\cG$ and therefore is
        contained in $\cG$, and which by \cref{lemma:pos-depth-1} satisfies $\ch((C_i)_{B'}) <
        \ch(G)$. They then output the opposite value and terminate the
        protocol. The protocol is correct in this case since, by induction, they will compute
        adjacency of $x,y$ in $(C_i)_{B'}$, which is an induced subgraph of $\bc{G}$, so $x,y$ have the
        opposite adjacency as in $G$.
        \item If $A_j \neq B'$ for all $j$, the protocol proceeds as below.
          \label{item:protocol-step-anc1}
      \end{enumerate}
    \item Similar to step \ref{item:protocol-step-comp1}, $\bag_{Y_i}(y)$ has edges to at most $\ell$
      of its ancestors in $Y_i$. Call these ancestors $A_1, \dotsc, A_{\ell'}$ where $\ell' \leq \ell$. 
      Using $\ell$ calls to the $\Eq$ oracle, Alice and Bob determine whether $A_j =
      B'$ for some $j \leq \ell'$, where $B' \define \bag_{Y_i}(x)$.
      \label{item:protocol-step-comp2}
      \begin{enumerate}
        \item If $A_j = B'$ then Alice and Bob inductively compute adjacency in the graph
        $(C_i)_{B'}$, which again is contained in $\cG$ and by \cref{lemma:pos-depth-1} satisfies
        $\ch((C_i)_{B'}) < \ch(G)$. They then output the opposite value and
        terminate the protocol. The protocol is correct in this case since, by induction, they will
        compute adjacency of $x,y$ in $(C_i)_{B'}$, which is an induced subgraph of $\bc{G}$, so
        $x,y$ have the opposite adjacency as in $G$.
        \item If $A_j \neq B'$ for all $j$, then Alice and Bob output 1 and the protocol terminates.
          The protocol is correct in this case because $x,y$ are adjacent in $G$ if and only if they are
          non-adjacent in $C_i$. By \cref{def:gyarfas}, if they are adjacent in $C_i$ then either
          $\bag_{Y_i}(x)$ is an ancestor of $\bag_{Y_i}(y)$ or vice versa. From step
          \ref{item:protocol-step-comp1}, we know that if $\bag_{Y_i}(y)$ is an ancestor of
          $\bag_{Y_i}(x)$, then $\bag_{Y_i}(x)$ has no edges to $\bag_{Y_i}(y)$, so $x,y$ are
          non-adjacent in $C_i$ and therefore adjacent in $G$. From the current step, we know
          similarly that if $\bag_{Y_i}(x)$ is an ancestor of $\bag_{Y_i}(y)$ then $x,y$ are again
          non-adjacent in $C_i$ and therefore adjacent in $G$.
          \label{item:protocol-step-anc2}
      \end{enumerate}
  \end{enumerate}
  \item Now guaranteed that $x$ and $y$ are each in bags at depth 2 or higher, Alice and Bob proceed
    similarly as in steps \ref{item:protocol-step-comp1} and \ref{item:protocol-step-comp2}, with the
    following differences. Here, $Y$ is used instead of $Y_i$. In step 
    \ref{item:protocol-step-anc2}, the protocol outputs 0 instead of 1, because they are operating
    on the graph $G$ itself instead of an induced subgraph of the bipartite complement $\bc{G}$.  When applying the inductive
    hypothesis, we use \cref{lemma:pos-depth-2} instead of \cref{lemma:pos-depth-1}, and the players do
    not flip the output of the protocol applied to $G_{B'}$. Correctness again follows by induction.
\end{enumerate}
This concludes the proof.
\end{proof}

\section{Application to Sign-Rank 3 and Unit Disk Graphs}

We now prove our results \cref{thm:intro-signrank,thm:intro-udg} for graphs of sign-rank 3 and
unit disk graphs. This will require the notion of \emph{edge-asteroid triples} (see e.g.
\cite{FHH99,STU10,AZ18}).

\begin{definition}[Edge-Asteroid Triples]
	A set of three edges in a graph is called an \emph{edge-asteroid triple} if for each pair of the edges, there is a
	path containing both of the edges that avoids the neighbourhoods of the end-vertices of the third edge (see \cref{fig:eat} for an illustration).
  We say that a graph class $\cG$ is \emph{edge-asteroid-triple-free} if no $G \in \cG$ contains an
  edge-asteroid triple.
\end{definition}

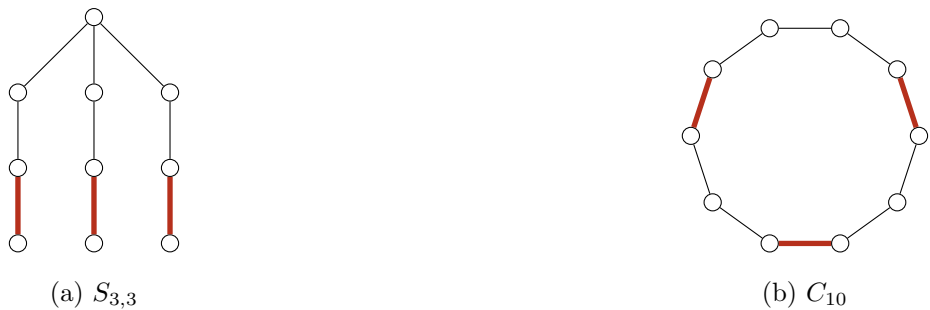
\begin{figure}[H]
	\begin{subfigure}[b]{0.45\textwidth}
		\centering
		\begin{tikzpicture}
			\node[w_vertex] (1) at (1,0) { }; 	
			\node[w_vertex] (2) at (0,-1) { };
			\node[w_vertex] (3) at (1,-1) { };
			\node[w_vertex] (4) at (2,-1) { };
			\node[w_vertex] (5) at (0,-2) { };
			\node[w_vertex] (6) at (1,-2) { };
			\node[w_vertex] (7) at (2,-2) { };
			\node[w_vertex] (8) at (0,-3) { };
			\node[w_vertex] (9) at (1,-3) { };
			\node[w_vertex] (10) at (2,-3) { };
			
			\foreach \from/\to in {1/2,1/3,1/4, 2/5,3/6,4/7}
			\draw (\from) -- (\to);
			
			\foreach \from/\to in {5/8,6/9,7/10}
			\draw[eat edge] (\from) -- (\to);
		\end{tikzpicture}
		\caption{$S_{3,3}$}
		\label{fig:subfig1}
	\end{subfigure}
	\hfill
	\begin{subfigure}[b]{0.45\textwidth}
		\centering
		\begin{tikzpicture}
			\foreach \i in {1,...,10} {
				\coordinate (P\i) at ({\i * 36}:1.5);
			}
			
			\foreach \i in {1,...,10} {
				\node[w_vertex] (N\i) at (P\i) {};
			}
		
			\foreach \i in {1,...,9} {
				\draw(N\i) -- (N\the\numexpr\i+1\relax);
			}
			\draw (N10) -- (N1);
			
			\foreach \from/\to in {10/1, 4/5, 7/8} {
				\draw[eat edge] (N\from) -- (N\to);
			}
		\end{tikzpicture}
		\caption{$C_{10}$}
		\label{fig:subfig2}
	\end{subfigure}

	\caption{Both $S_{3,3}$ and $C_{10}$ contain an edge-asteroid triple (bold edges).}
	\label{fig:eat}
\end{figure}

Since $S_{3, 3}$ and $C_t$ for $t \geq 10$ contain edge-asteroid triples, we make the following
simple observation:

\begin{observation}
\label{obs:edge-asteroid-triple}
Let $G$ be any bipartite graph that is edge-asteroid-triple-free. Then $G$ is both $S_{3, 3}$-free,
and $C_t$-free for all $t \geq 10$.
\end{observation}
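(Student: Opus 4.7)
The plan is to reduce to the two concrete containments already pictured in \cref{fig:eat}, together with a short hereditariness argument that passes an edge-asteroid triple from an induced subgraph up to its host graph.

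First I would observe that the property ``contains an edge-asteroid triple'' is preserved when passing from an induced subgraph $H$ to a supergraph $G$. Indeed, if $H$ is an induced subgraph of $G$ and $e_1,e_2,e_3 \in E(H)$ form an edge-asteroid triple in $H$, then the three witnessing paths lie entirely inside $V(H)$, and for each pair the corresponding path $P$ avoids $N_H(x)$ for each endpoint $x$ of the third edge. Since $H$ is \emph{induced}, we have $N_H(x) = N_G(x) \cap V(H)$ for every $x \in V(H)$; hence any vertex of $P \subseteq V(H)$ that is a non-neighbour of $x$ in $H$ is automatically a non-neighbour of $x$ in $G$. Thus $e_1,e_2,e_3$ also witness an edge-asteroid triple in $G$. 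By contrapositive, every induced subgraph of an edge-asteroid-triple-free graph is edge-asteroid-triple-free.

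Next I would exhibit an explicit edge-asteroid triple in $S_{3,3}$ and in $C_t$ for all $t \geq 10$. For $S_{3,3}$, label the centre $c$, its neighbours $v_1,v_2,v_3$, the second-level vertices $u_1,u_2,u_3$ (with $v_i u_i \in E$), and the leaves $\ell_1,\ell_2,\ell_3$ (with $u_i \ell_i \in E$). Take the three pendant edges $e_i = u_i \ell_i$. For the pair $(e_1,e_2)$, the path $\ell_1, u_1, v_1, c, v_2, u_2, \ell_2$ contains both edges and avoids $N(u_3) \cup N(\ell_3) = \{v_3,\ell_3,u_3\}$; the other two pairs are symmetric. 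For $C_t$ with $t \geq 10$, label the vertices $v_0,v_1,\dotsc,v_{t-1}$ cyclically and pick edges $e_1,e_2,e_3$ that are cyclically spaced by at least $3$, \eg for $C_{10}$ take $e_1 = v_0 v_1$, $e_2 = v_3 v_4$, $e_3 = v_6 v_7$. The arc from $v_1$ through $v_2, v_3,\dotsc$ to $v_6$ contains $e_2$ and $e_3$ and avoids the closed neighbourhood of $\{v_0,v_1\}$, namely $\{v_{t-1},v_0,v_1,v_2\}$; the other two pairs are handled identically by rotation.

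Combining the two steps: if $G$ contained $S_{3,3}$ as an induced subgraph, the first step would promote the edge-asteroid triple of $S_{3,3}$ to one in $G$, contradicting the hypothesis; and likewise for each $C_t$ with $t \geq 10$. Hence $G$ is $S_{3,3}$-free and $C_t$-free for every $t \geq 10$. There is no real obstacle here — the only mild subtlety is the hereditariness check, which relies on $H$ being \emph{induced} so that non-adjacencies lift, and that is already built into our use of ``induced subgraph'' throughout the paper.
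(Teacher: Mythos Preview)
Your approach is exactly the paper's: exhibit an edge-asteroid triple in $S_{3,3}$ and in each long even cycle (the paper simply points to \cref{fig:eat}), and lift it to the host graph via the hereditariness of the property. Your $S_{3,3}$ verification and the hereditariness step are correct.

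There is, however, a slip in the cycle verification. The arc you name, $v_1,v_2,\dotsc,v_6$, neither contains $e_3=v_6v_7$ (the vertex $v_7$ is missing) nor avoids $\{v_{t-1},v_0,v_1,v_2\}$ (it contains both $v_1$ and $v_2$). The path you want for the pair $(e_2,e_3)$ is $v_3,v_4,v_5,v_6,v_7$; the other two pairs are witnessed by $v_0,v_1,v_2,v_3,v_4$ and by $v_6,v_7,\dotsc,v_{t-1},v_0,v_1$. Also, in $C_{10}$ your three edges are not a single rotation orbit, so ``identically by rotation'' is a slight overstatement, though each of the three checks is equally routine. With these corrections the same three edges $v_0v_1,\,v_3v_4,\,v_6v_7$ work uniformly for every even $t\geq 10$, and the argument goes through.
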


This observation will allow us to apply our \cref{thm:intro-main}, but it requires that our graphs
$G$ and their complements both be edge-asteroid-triple-free.  Unit disk graphs and graphs of
sign-rank 3 are not necessarily edge-asteroid-triple-free.  But we show that we can decompose these
graphs into pieces which satisfy the necessary conditions.

\subsection{Sign-Rank 3}

To apply \cref{thm:intro-main} to graphs of sign-rank 3, we will decompose these graphs into
pieces which are edge-asteroid-triple-free. We achieve this by interpreting graphs of sign-rank 3 as
point-halfspace incidences and projecting down into dimension 2. 

\begin{definition}[Point-Halfspace and -Halfplane Incidence Graphs]
Let $P$ be a set of points in $\bR^d$ and $H$ be a set of halfspaces in $\bR^d$. 
The \emph{incidence graph} of $P$ and $H$ is the bipartite graph
\[
G(P, H) = \left( P, H, \{ ph ~|~ p \in P, h \in H, \text{ and } p \in h \} \right).
\]
A bipartite graph $G$ is a \emph{point-halfspace incidence} graph in $\bR^d$ if it can be
represented as an incidence graph in $\bR^d$; more specifically, if there exist a set $P$ of points
and a set $H$ of halfspaces both in $\bR^d$ such that $G$ is isomorphic to $G(P, H)$.  If $d=2$ we
call the graph a \emph{point-halfplane incidence} graph.

If there exists such a representation of $G$, where in addition all pairwise dot products of the
norm vectors of the hyperplanes defining the halfspaces in $H$ are non-negative, then $G$ is called
a \emph{positive point-halfspace incidence} graph in $\bR^d$. 
\end{definition}

\begin{proposition}\label{prop:dimension-reduction}
Any bipartite graph $G = (U,W,E)$ of sign-rank $d$ admits a partition $U = U_1 \cup U_2$ such that
$G[U_1,W]$ and $G[U_2,W]$ are point-halfspace incidence graphs in $\bR^{d-1}$.
\end{proposition}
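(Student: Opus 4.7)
The plan is to use the standard ``affine chart'' trick: take a sign-rank $d$ representation of $G$ as points $\{p_u\}_{u \in U} \subseteq \bR^d \setminus \{0\}$ and halfspaces $H_w = \{q \in \bR^d : \langle q, h_w \rangle \geq 0\}$ for $w \in W$, with $uw \in E$ iff $p_u \in H_w$, and then split $U$ according to the sign of a fixed coordinate of the points.

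First, by applying a rotation of $\bR^d$ simultaneously to all points and all halfspace normals (which preserves every inner product $\langle p_u, h_w \rangle$ and hence the representation), I can arrange that $p_u^{(d)} \neq 0$ for every $u \in U$. This is a routine genericity argument, since $U$ is finite and the rotations placing some $p_u$ into the hyperplane $\{x^{(d)} = 0\}$ form a finite union of proper subvarieties of the rotation group. Define $U_1 = \{u \in U : p_u^{(d)} > 0\}$ and $U_2 = \{u \in U : p_u^{(d)} < 0\}$, which partitions $U$.

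Next, for each $u \in U_1$ I would replace $p_u$ by the positive rescaling $\tilde p_u = p_u / p_u^{(d)} = (q_u, 1)$ with $q_u \in \bR^{d-1}$; writing $h_w = (h'_w, h_w^{(d)}) \in \bR^{d-1} \times \bR$, positivity of the rescaling yields
\[
  p_u \in H_w \iff \langle \tilde p_u, h_w \rangle \geq 0 \iff \langle q_u, h'_w \rangle \geq -h_w^{(d)},
\]
which exhibits $q_u$ as lying in an affine halfspace of $\bR^{d-1}$ with normal $h'_w$ and offset $-h_w^{(d)}$. Thus the maps $u \mapsto q_u$ and $w \mapsto \{q \in \bR^{d-1} : \langle q, h'_w \rangle \geq -h_w^{(d)}\}$ realize $G[U_1, W]$ as a point-halfspace incidence graph in $\bR^{d-1}$. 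The symmetric rescaling by $-1/p_u^{(d)} > 0$ handles $U_2$, producing points of the form $(q_u, -1)$ and the same conclusion.

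The substance is just the well-known passage from origin-containing halfspaces in $\bR^d$ to arbitrary affine halfspaces in $\bR^{d-1}$ via the charts $\{x^{(d)} = \pm 1\}$, and I do not expect any serious obstacle. The only minor technicalities are the preliminary rotation, dealt with by genericity, and the degenerate case $h'_w = 0$: there $w$ is adjacent either to all of $U_i$ or to none of it, and since $U_i$ is finite we may replace the resulting degenerate ``halfspace'' by any genuine halfspace containing, respectively avoiding, every $q_u$.
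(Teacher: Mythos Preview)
Your proposal is correct and is essentially the same argument as the paper's proof: both partition $U$ by the sign of the last coordinate of the point representation, rescale by $1/|p_u^{(d)}|$, and read off an affine halfspace condition in $\bR^{d-1}$. You are slightly more explicit than the paper in justifying the WLOG assumption $p_u^{(d)}\neq 0$ via rotation and in handling the degenerate case $h'_w=0$, but the substance is identical.
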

\begin{proof}
	Let $a : U \rightarrow \bR^d$, $b : W \rightarrow \bR^d$ be such that
	$u \in U$, $w \in W$ are adjacent if and only if $\langle a(u), b(w) \rangle \geq 0$.
	We assume, without loss of generality, that $a(u)_d \neq 0$ for every $u \in U$, and partition $U$ into 
	$U_1 = \{ u \in U ~|~ a(u)_d  > 0 \}$ and $U_2 = \{ u \in U ~|~ a(u)_d  < 0 \}$
	
	We define $a' : U \rightarrow \bR^{d-1}$ and $b' : W \rightarrow \bR^{d-1}$ as
	$$a'(u) = \frac{1}{|a(u)_d|}(a(u)_1, a(u)_2, \ldots, a(u)_{d-1}) \text{,~~~~~}
	b'(w) =(b(w)_1, b(w)_2, \ldots, b(w)_{d-1}).$$
	
	Further, we define the following sets of points and halfspaces in $\bR^{d-1}$:
	\begin{align*} 
		P_i &= \{ p_u = a'(u) ~|~ u \in U_i \}, i = 1,2, \\
		H_1 &= \left\{ h_w ~|~ w \in W, h_w = \{ x \in \bR^{d-1} ~|~ \langle x, b'(w) \rangle \geq -b(w)_d \} \right\}, \\
		H_2 &= \left\{ h_w' ~|~ w \in W, h_w' = \{ x \in \bR^{d-1} ~|~ \langle x, b'(w) \rangle \geq b(w)_d \} \right\}.
	\end{align*} 

	Finally, we define two point-halfspace incidence graphs $G_1 = (U_1, W, E_1)$ and $G_2 = (U_2, W, E_2)$, where
	$E_1 = \{ uw ~|~ u \in U_1, w \in W, p_u \in h_w \}$ and $E_2 = \{ uw ~|~ u \in U_2, w \in W, p_u \in h_w' \}$. 
	
	We claim that $G = G_1 \cup G_2 = (U_1 \cup U_2, W, E_1 \cup E_2)$. Indeed, for any $u \in U$ and $w \in W$, 
	\begin{align*} 
		uw \in E 
		\iff
		\langle a(u), b(w) \rangle \geq 0 
		& \iff 
		\langle a'(u), b'(w) \rangle  + \sign(a(u)_d) \cdot b(w)_d \geq 0 \\ 
		& \iff
		\langle a'(u), b'(w) \rangle \geq - \sign(a(u)_d) \cdot b(w)_d.
	\end{align*} 
	Hence, if $u \in U_1$, then $uw \in E \iff \langle a'(u), b'(w) \rangle \geq - b(w)_d \iff p_u \in h_w \iff uw \in E_1$; and
	if $u \in U_2$, then $uw \in E \iff \langle a'(u), b'(w) \rangle \geq b(w)_d \iff p_u \in h_w' \iff uw \in E_2$.
\end{proof}

\begin{proposition}\label{lem:phi-vs-pphi}
Any point-halfspace incidence graph $G = G(P,H)$ in $\bR^{d}$ admits a partition $H =
\bigcup_{i=1}^{2^d} H_i$ such that each $G(P,H_i)$ is a \emph{positive} point-halfspace incidence graph.
\end{proposition}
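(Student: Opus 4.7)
The plan is to partition $H$ according to the sign pattern of the normal vectors of the hyperplanes defining its halfspaces. For each halfspace $h \in H$, let $n_h \in \bR^d$ denote a fixed normal vector of the bounding hyperplane, oriented so that $h = \{x \in \bR^d : \langle x, n_h\rangle \geq c_h\}$ for some offset $c_h \in \bR$. To each $h$ we associate a sign vector $\sigma(h) \in \{+,-\}^d$ by setting $\sigma(h)_i = +$ if $(n_h)_i \geq 0$ and $\sigma(h)_i = -$ if $(n_h)_i < 0$ (zero coordinates are assigned to the $+$ class arbitrarily). For each of the $2^d$ patterns $\sigma \in \{+,-\}^d$, let $H_\sigma := \{h \in H : \sigma(h) = \sigma\}$. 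These sets partition $H$ into at most $2^d$ classes.

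The key observation is that for any two halfspaces $h, h' \in H_\sigma$ in the same class, the normals $n_h$ and $n_{h'}$ agree coordinatewise in sign, so $(n_h)_i (n_{h'})_i \geq 0$ for every $i \in [d]$, and therefore
\[
\langle n_h, n_{h'} \rangle = \sum_{i=1}^{d} (n_h)_i (n_{h'})_i \geq 0.
\]
Consequently, for each $\sigma$, the incidence graph $G(P, H_\sigma)$, obtained by restricting to the same points $P$ and the halfspaces in $H_\sigma$ with their original normals, is a positive point-halfspace incidence graph in $\bR^d$ by definition. Since the $H_\sigma$ partition $H$, we also have that the graphs $G(P, H_\sigma)$ partition the edges of $G(P, H)$ in the required sense, proving the proposition.

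There is essentially no obstacle here: the only point requiring a little care is the handling of zero coordinates in the normals, which is resolved by the arbitrary sign convention above (since a factor of $0$ contributes $0$ to the inner product, it cannot violate non-negativity regardless of how its sign is assigned). No perturbation or general-position assumption is needed.
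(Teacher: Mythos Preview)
Your proof is correct and essentially identical to the paper's: both partition $H$ by the sign pattern of the coordinates of the halfspaces' normal vectors and observe that within a class all pairwise inner products of normals are nonnegative. Your explicit remark about zero coordinates is a small clarification, but the argument is otherwise the same.
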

\begin{proof}
	For $h \in H$, let $w_h \in \bR^d$ and $t_h \in \bR$ be such that $h = \{ x \in \bR^d ~|~  \langle w_h, x \rangle \leq t_h \}$.
	We partition $H$ into $2^d$ subsets $H_{\alpha}$, $\alpha \in \{ -1, +1 \}^d$, with respect to the sign patterns of the norm vectors.
	More specifically, $h \in H_{\alpha}$ if and only if $(w_h)_i \geq 0 \iff \alpha_i = +1$ for every $i \in [d]$.
	Clearly, for any $\alpha \in \{-1,+1\}^d$ and any $h, h' \in H_{\alpha}$, we have that $\langle w_h, w_{h'} \rangle \geq 0$,
	i.e. $G(P, H_{\alpha})$ is a positive point-halfplane incidence graph.
\end{proof}

From \cref{prop:dimension-reduction} and \cref{lem:phi-vs-pphi} we obtain the following immediate

\begin{corollary}
\label{cor:srk-vs-pphi}
Any bipartite graph $G = (U,W,E)$ of sign-rank $d$ admits a partition $U = \bigcup_{i=1}^{2^d} U_i$
such that each $G[U_i,W]$ is a positive point-halfspace incidence graph in $\bR^{d-1}$.
\end{corollary}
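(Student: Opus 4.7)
The plan is to derive the corollary as an immediate composition of \cref{prop:dimension-reduction} and \cref{lem:phi-vs-pphi}. Starting from any sign-rank $d$ realization $a : U \to \bR^d$, $b : W \to \bR^d$ of $G$ (with $a(u)_d \neq 0$ for every $u$, after a harmless perturbation), \cref{prop:dimension-reduction} partitions $U$ into two halves $U^+ = \{u : a(u)_d > 0\}$ and $U^- = \{u : a(u)_d < 0\}$ and exhibits an explicit point-halfspace representation of each $G[U^\pm, W]$ in $\bR^{d-1}$. In that representation, the halfspace associated with $w \in W$ has normal vector $b'(w) = (b(w)_1, \ldots, b(w)_{d-1})$, depending only on $w$ and not on the choice of the half of $U$.

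Next, I would apply \cref{lem:phi-vs-pphi} in dimension $d-1$ to the halfspaces $\{h_w\}_{w \in W}$ in these representations. The proposition partitions the halfspace side $W$ into $2^{d-1}$ classes $W^{(\beta)}$ indexed by the sign pattern $\beta \in \{-1,+1\}^{d-1}$ of $b'(w)$, so that within each class the pairwise inner products of the normals are non-negative. Consequently each of the combinations $G[U^\epsilon, W^{(\beta)}]$ with $\epsilon \in \{+,-\}$ and $\beta \in \{-1,+1\}^{d-1}$ is a positive point-halfspace incidence graph in $\bR^{d-1}$. The $W$-partition is shared between $U^+$ and $U^-$, since the normals $b'(w)$ depend only on $w$.

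Composing the two partitions yields exactly $2 \cdot 2^{d-1} = 2^d$ positive point-halfspace incidence subgraphs in $\bR^{d-1}$. To match the form stated by the corollary, I would re-index these $2^d$ pieces as $i = 1, \ldots, 2^d$, taking $U_i$ to be the half $U^{\epsilon(i)}$ of $U$ on which the $i$-th piece lives and recording that the positive point-halfspace representation of $G[U_i, W]$ uses the halfspaces restricted to the cell $W^{(\beta(i))}$. The main subtle point, and really the only step to check carefully, is this bookkeeping: the genuine output of the composition is a product decomposition of $(U,W)$ into $2^d$ rectangular blocks, and each $G[U_i, W]$ must be read with that restriction in mind. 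The essential content of the corollary is then simply that the two propositions compose cleanly, because the sign-pattern partition of $W$ coming from \cref{lem:phi-vs-pphi} depends only on $W$ and is therefore common to the two $U$-halves produced by \cref{prop:dimension-reduction}; no additional combinatorial or geometric argument is required beyond the two cited propositions.
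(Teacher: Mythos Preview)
Your approach is exactly the paper's: the corollary is declared ``immediate'' from \cref{prop:dimension-reduction} and \cref{lem:phi-vs-pphi}, and you carry out precisely that composition, including the observation that the normals $b'(w)$ depend only on $w$, so the sign-pattern partition of $W$ from \cref{lem:phi-vs-pphi} is common to both halves $U^+$, $U^-$.

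You are also right to flag the bookkeeping. The direct composition yields a $2 \times 2^{d-1}$ product partition of $(U,W)$ into rectangular blocks, not a partition of $U$ alone into $2^d$ disjoint parts, and your proposed re-indexing (letting several $U_i$ equal the same $U^{\epsilon}$) does not give a partition of $U$. This is a minor imprecision in the corollary's wording rather than a gap in your reasoning or the paper's: the only downstream use, in \cref{lemma:sign-rank-3-decomposition} and \cref{thm:intro-signrank}, needs nothing more than the $2^d$ rectangular blocks, and the protocol there can agree on the relevant block with a constant number of bits regardless of how the $2^d$ pieces are attributed to the two sides.
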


We now prove that positive point-halfplane incidence graphs are edge-asteroid-triple free.

\begin{lemma}\label{lem:eat-free}
Every positive point-halfplane incidence graph $G$ is edge-asteroid-triple-free.
\end{lemma}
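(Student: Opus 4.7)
I would proceed by contradiction. Suppose the positive point-halfplane incidence graph $G = G(P,H)$ contains an edge-asteroid triple $e_1,e_2,e_3$ with $e_i = p_ih_i$. The first step is to convert the combinatorial definition into geometric constraints. From the three paths, the endpoint conditions imply (i) $p_i \in h_i$ for each $i$; (ii) $p_i \notin h_j$ for all $i \neq j$; (iii) every internal point $q$ on the path $P_{jk}$ satisfies $q \notin h_i$; and (iv) every internal halfplane $h'$ on the path $P_{jk}$ satisfies $p_i \notin h'$.

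The second step is to use the positivity assumption. Since every pair of halfplane normals has non-negative inner product, all three normals $n_1,n_2,n_3$ fit inside a common cone of angular width at most $\pi/2$ in $\bR^2$; after a rotation we may assume this cone lies in the first quadrant and that the angles satisfy $\theta_1 \leq \theta_2 \leq \theta_3$. Then $n_2$ lies in the positive cone spanned by $n_1$ and $n_3$, so $n_2 = \alpha n_1 + \beta n_3$ for some $\alpha,\beta \geq 0$. This linear relation between the three normals is the crucial algebraic tool.

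The third step is to interpret the path $P_{13}$ geometrically. The alternating sequence of points and halfplanes on $P_{13}$ realizes a polygonal chain in the plane from $p_1$ to $p_3$, in which each segment lies inside one of the path's halfplanes by convexity. The constraints (iii) and (iv) applied to $P_{13}$ say that the whole chain is separated from $p_2$ by $L_{h_2}$, and simultaneously every covering halfplane on $P_{13}$ excludes $p_2$. I would then exploit the identity $n_2 = \alpha n_1 + \beta n_3$: the inequalities $n_1 \cdot p_1 \leq t_1$ and $n_3 \cdot p_1 > t_3$ at $p_1$, together with $n_3 \cdot p_3 \leq t_3$ and $n_1 \cdot p_3 > t_1$ at $p_3$, show that moving along the chain the signs of $(n_1 \cdot x - t_1)$ and $(n_3 \cdot x - t_3)$ must both change. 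The positive combination $n_2 = \alpha n_1 + \beta n_3$ then forces $n_2 \cdot q \leq t_2$ at some vertex $q$ of the chain, contradicting (iii).

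I expect the main obstacle to be making the last step fully rigorous. The path can be arbitrarily long, and a purely local transition argument is insufficient, so one needs a global invariant along the chain that combines the halfplane-excludes-$p_2$ constraint (iv) with the point-outside-$h_2$ constraint (iii). A natural route is to do a double sweep: first identify, along the chain, the last vertex where $n_1 \cdot x \leq t_1$ still holds and the first vertex where $n_3 \cdot x \leq t_3$ holds, and then use convexity inside the halfplane that covers the segment between these two distinguished vertices, together with $n_2 = \alpha n_1 + \beta n_3$, to extract the promised point that lies in $h_2$ (or the halfplane that contains $p_2$), which is the desired contradiction.
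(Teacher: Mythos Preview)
Your setup is sound and essentially matches the paper's: rotating so that every normal lies in the closed first quadrant is exactly the paper's normalization, and your constraints (i)--(iv) are correct. The gap is in your proposed mechanism for extracting the contradiction. Writing $n_2=\alpha n_1+\beta n_3$ with $\alpha,\beta\ge 0$ constrains only the \emph{directions} of the three boundary lines, not their thresholds. You can recover one inequality among thresholds from $p_2\in h_2$ and $p_2\notin h_1,h_3$, namely $\alpha t_1+\beta t_3<t_2$; this would give $q\in h_2$ whenever $q\in h_1\cap h_3$. But nothing in your sweep guarantees a vertex of the chain lies in $h_1\cap h_3$: the set of chain-parameters where $n_1\cdot x\le t_1$ need not overlap the set where $n_3\cdot x\le t_3$, and even if they overlap on a segment interior point, that point is not a vertex of $G$ and does not by itself force any $f_s$ to contain $p_2$. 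Your double sweep (last vertex in $h_1$, first vertex in $h_3$) does not resolve this, because ``last'' and ``first'' refer to indices, and the chain can weave in and out of $h_1$ and $h_3$ independently.

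The paper's argument avoids this by ordering the \emph{points} rather than the normals. After the first-quadrant normalization, every halfplane in $H$ (not just $h_1,h_2,h_3$) is downward-closed in the coordinatewise order. One then orders $p_1,p_2,p_3$ by $x$-coordinate, finds consecutive vertices $q_s,q_{s+1}$ of the $p_1$--$p_3$ path straddling $x_{p_2}$, and interpolates to the point $(x_{p_2},y^*)$ on the segment $[q_s,q_{s+1}]$. Convexity of $\overline{h_2}$ gives $y^*\ge y_{p_2}$, and convexity of the intermediate halfplane $f_s$ puts $(x_{p_2},y^*)\in f_s$; the downward-closure of $f_s$ then forces $p_2\in f_s$, contradicting (iv). The decisive point you are missing is to exploit the positivity (hence monotonicity) of the \emph{intermediate} halfplane $f_s$, not only of $h_1,h_2,h_3$.
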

\begin{proof}
	Let $G$ be a positive point-halfplane incidence graph and let $P$ and $H$ be sets of points and halfplanes respectively
	whose incidence graph is isomorphic to $G$. For a point $p \in P$ we denote by $x_p$ and $y_p$ its
coordinates respectively; for a halfplane $h \in H$ we denote by $a_h, b_h, t_h$ the coefficients of the halfplane inequality, i.e. 
	$h  = \{ (x,y) \in \bR^2 ~|~ a_h x + b_h y \leq t_h \}$.
	Without loss of generality, we can assume that no point in $P$ lies on the boundary of any $h \in
H$.
	Since $G$ is positive, using translation and rotation, we can further assume that for every $h \in
H$ both $a_h$ and $b_h$ are non-negative. This latter assumption implies the following useful claim which is straightforward to verify.
	
	\textbf{Claim 1.} \textit{For every $h \in H$ it holds that if $(x,y) \in h$, then $(x',y') \in h$ for every $x' \leq x$ and $y' \leq y$.}
	
  Suppose now, towards a contradiction, that $G$ contains an edge-asteroid triple, and let $p_1h_1,
p_2h_2,p_3h_3$ be its edges, where $p_i \in P, h_i \in H$. Note that the points $p_1,p_2,p_3$ are
pairwise incomparable with respect to the coordinatewise order. Indeed, if for example $x_{p_1} \leq
x_{p_2}$ and $y_{p_1} \leq y_{p_2}$, then by Claim 1 we would have $p_1 \in h_2$, i.e. $p_1$ and
$h_2$ would be adjacent in $G$, which would contradict the assumption that the three edges form an
edge-asteroid triple.
	
	Thus, without loss of generality, we assume that $x_{p_1} \leq x_{p_2} \leq x_{p_3}$ and $y_{p_1} \geq y_{p_2} \geq y_{p_3}$. 
	
	Let $Q=(q_1,f_1,q_2,f_2,q_3, \ldots, f_{k-1},q_k)$, $q_i \in P$, $f_i \in H$, $q_1 = p_1$ and $q_k = p_3$, be a path containing $p_1$ and $p_3$ that avoids the neighbourhoods of both $p_2$ and $h_2$.
	Since $x_{q_1} \leq x_{p_2} \leq x_{q_k}$, there exists $s \in [k-1]$ such that $x_{q_s} \leq x_{p_2} \leq x_{q_{s+1}}$.
	As above, using Claim 1, we can conclude $q_s$ is incomparable with $p_2$, as otherwise 
	$f_s$ would be adjacent to $p_2$ or $h_2$ would be adjacent to $q_s$, contradicting the choice of $Q$. Similarly, $q_{s+1}$ is incomparable with $p_2$. Hence, we have that $y_{q_s} \geq y_{p_2} \geq y_{q_{s+1}}$.
	
	Let now $\overline{h_2}$ be the closure of the complement of $h_2$, i.e. 
	$\overline{h_2} = \{ (x,y) \in \bR^2 ~|~ a_{h_2} x + b_{h_2} y \geq t_{h_2} \}$, and let
	\begin{itemize}
		\item[] $A_{-+} = \{ (x,y) \in \overline{h_2} ~|~ x \leq x_{p_2}, y \geq y_{p_2} \}$, 
		\item[] $A_{+-} = \{ (x,y) \in \overline{h_2} ~|~ x \geq x_{p_2}, y \leq y_{p_2} \}$.
	\end{itemize}

	Note that $q_s \in A_{-+}$ and $q_{s+1} \in A_{+-}$. Hence the segment connecting $q_s$ and $q_{s+1}$ intersects the line $x = x_{p_2}$ that separates the two sets. Let $(x_{p_2}, y^*) \in \bR^2$ be the point of intersection.
	Since both $q_s$ and $q_{s+1}$ are in $\overline{h_2}$, so is $(x_{p_2}, y^*)$, which together with Claim 1 implies that $y_{p_2} \leq y^*$.
	Similarly, $(x_{p_2}, y^*)$ is in $f_s$ because both $q_s$ and $q_{s+1}$ are in $f_s$.
	Consequently, by Claim 1, $p_2$ is also contained in $f_s$. This contradiction completes the proof.
\end{proof}

\begin{proposition}\label{prop:positive-comp}
	Let $G(P,H)$ be a positive point-halfplane incidence graph. 
	Then the bipartite complement of $G(P,H)$ is also a positive point-halfplane incidence graph.
\end{proposition}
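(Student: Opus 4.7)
The plan is to produce an explicit positive point-halfplane representation of $\bc{G(P,H)}$ that reuses the very same point set $P$ and simply ``flips'' each halfplane. Write each $h \in H$ as $h = \{(x,y) \in \bR^2 : a_h x + b_h y \leq t_h\}$ with norm vector $w_h = (a_h, b_h)$. By a standard perturbation of the thresholds $t_h$ (small enough not to change the combinatorial incidence pattern), we may assume, exactly as is done in \cref{lem:eat-free}, that no point of $P$ lies on the boundary of any $h \in H$.

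Next, for each $h \in H$, define the flipped halfplane
\[
h' \define \{(x,y) \in \bR^2 : -a_h x - b_h y \leq -t_h\},
\]
whose norm vector is $w_{h'} = -w_h$. Under the no-boundary assumption, the strict inequality $a_h x_p + b_h y_p \neq t_h$ holds for every $p \in P$, and therefore $p \in h' \iff a_h x_p + b_h y_p > t_h \iff p \notin h$. Setting $H' \define \{h' : h \in H\}$, this identity shows that the incidence graph $G(P, H')$ coincides with the bipartite complement $\bc{G(P,H)}$.

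Finally, positivity is preserved under the flip: for any $h_1, h_2 \in H$,
\[
\langle w_{h_1'}, w_{h_2'} \rangle \;=\; \langle -w_{h_1}, -w_{h_2} \rangle \;=\; \langle w_{h_1}, w_{h_2} \rangle \;\geq\; 0,
\]
using the positivity of the original representation. Hence $G(P, H')$ is a positive point-halfplane incidence graph equal to $\bc{G(P,H)}$, completing the argument. There is no substantial obstacle here; the only minor technical point is the perturbation step, which is identical to the one already implicitly invoked in the proof of \cref{lem:eat-free}.
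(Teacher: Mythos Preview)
Your proof is correct and follows essentially the same approach as the paper. The only cosmetic difference is that the paper negates the points (sending $p$ to $-p$) and keeps the norm vectors $(a_h,b_h)$ fixed while negating the threshold, whereas you keep the points fixed and negate the norm vectors; in both cases the pairwise dot products of the norm vectors are unchanged, so positivity is preserved.
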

\begin{proof}
	Without loss of generality, we can assume that no point in $P$ lies on the boundary of any $h \in
	H$.
	For every point $p = (x_p,y_p) \in P$ we define $p' := (-x_p,-y_p)$, and for every 
	$h = \{ (x,y) \in \bR^2 ~|~ a_h x + b_h y < t_h \} \in H$ we define $h' = \{ (x,y) \in \bR^2 ~|~ a_h x + b_h y < -t_h \}$.
 	Let $P' = \{ p' ~|~ p \in P \}$ and $H' = \{ h' ~|~ h \in H \}$.
 	We claim that $G(P',H')$ is the bipartite complement of $G(P,H)$.
 	Indeed, for any $p \in P$ and $h \in H$ we have that 
 	$p \in h 
 	\iff a_h x_p + b_h y_p < t_h 
 	\iff a_h (-x_p) + b_h (-y_p) > -t_h 
 	\iff p' \not\in h'$. 
 	Finally, notice that the norm vector of the hyperplane defining a halfspace $h \in H$ is the same as the norm vector of the hyperplane defining $h' \in H'$. Hence, $G(P',H')$ is a positive point-halfplane incidence graph.
\end{proof}

\begin{lemma}
\label{lemma:sign-rank-3-decomposition}
Let $G = (X,Y,E)$ be a bipartite graph of sign-rank 3. Then there exists a partition $Y =
\bigcup_{i=1}^{2^3} Y_i$ such that each $G[X,Y_i]$ is both $(S_{3, 3}, \bc{S}_{3, 3})$-free, and $(C_t, \bc{C}_t)$-free for all $t \geq 10$.
\end{lemma}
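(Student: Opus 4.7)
The plan is to assemble the lemma directly from the four preceding ingredients: \cref{cor:srk-vs-pphi} (dimension reduction plus positivity), \cref{lem:eat-free} (positive point-halfplane incidence graphs are edge-asteroid-triple-free), \cref{prop:positive-comp} (this class is closed under bipartite complementation), and \cref{obs:edge-asteroid-triple} (edge-asteroid-triple-freeness excludes $S_{3,3}$ and all long even cycles).

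First I would apply \cref{cor:srk-vs-pphi} with the roles of the two sides swapped, using that sign-rank is symmetric in the two parts. Since $d=3$, this yields a partition $Y = \bigcup_{i=1}^{2^3} Y_i$ such that each semi-induced subgraph $G[X, Y_i]$ is a \emph{positive} point-halfplane incidence graph in $\bR^{2}$. (If the reader prefers, one can instead first invoke \cref{prop:dimension-reduction} to split $Y$ into $2$ parts realizing point-halfplane incidences, then apply \cref{lem:phi-vs-pphi} to each part to refine into $2 \cdot 2^2 = 8$ positive pieces; the final partition has the same size.)

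Next I would verify the forbidden-induced-subgraph conclusions for a single piece $G_i \define G[X, Y_i]$. By \cref{lem:eat-free}, $G_i$ is edge-asteroid-triple-free, and by \cref{obs:edge-asteroid-triple} this immediately gives that $G_i$ is $S_{3,3}$-free and $C_t$-free for every $t \geq 10$. To handle the bipartite complements, apply \cref{prop:positive-comp}: the bipartite complement $\bc{G_i}$ is itself a positive point-halfplane incidence graph, hence by the same two results $\bc{G_i}$ is $S_{3,3}$-free and $C_t$-free for every $t \geq 10$. Equivalently, $G_i$ contains neither $\bc{S}_{3,3}$ nor $\bc{C}_t$ for $t \geq 10$ as an induced subgraph. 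Combining, each $G_i$ has all of the desired forbidden induced subgraph properties simultaneously.

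There is essentially no obstacle beyond correctly chaining the existing results; the only thing to be careful about is that the lemma partitions $Y$ rather than $X$, so one must invoke the corollary on the correct side of the bipartition (which is valid because the notions of sign-rank, positive point-halfspace incidence, and bipartite complementation are all symmetric between the two parts). Since $2^3 = 8$, the size of the partition matches the one stated in the lemma, completing the proof.
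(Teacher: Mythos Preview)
Your proof is correct and follows essentially the same approach as the paper: apply \cref{cor:srk-vs-pphi} to get eight positive point-halfplane pieces, use \cref{prop:positive-comp} so that both each piece and its bipartite complement are positive point-halfplane incidence graphs, and then conclude via \cref{lem:eat-free} and \cref{obs:edge-asteroid-triple}. You are slightly more explicit than the paper about the side-swap needed to partition $Y$ rather than $X$, which is a fair point and harmless.
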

\begin{proof}
We claim that the partition $Y = \bigcup_{i=1}^{2^3} Y_i$ given by \cref{cor:srk-vs-pphi} is a desired one. 
Indeed, by \cref{cor:srk-vs-pphi} and \cref{prop:positive-comp}, we conclude that
for each $i \in [2^3]$, the graph $G_i \define G[X,Y_i]$ and its bipartite complement are both positive point-halfplane incidence graphs. Hence, the lemma follows from \cref{lem:eat-free} and \cref{obs:edge-asteroid-triple}.
\end{proof}

\begin{restatable}{theorem}{thmintrosignrank}
\label{thm:intro-signrank}
Let $\cG$ be a graph class with sign-rank at most 3. Then $\DEQ(\cG) = O(1)$ if and only
if $\cG$ is stable.
\end{restatable}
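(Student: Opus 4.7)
The plan is to handle the two directions separately. The only-if direction is short: if $\DEQ(\cG) = O(1)$, then by \cref{prop:const-cost-reduction} we have $\R(\cG) = O(1)$, and the contrapositive of \cref{prop:gt-lower-bound} then forces $\cG$ to be stable.

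For the if direction, I reduce from general graphs to bipartite graphs so that the machinery of \cref{lemma:sign-rank-3-decomposition} and \cref{thm:intro-main} can be applied. Given $G=(V,E)\in\cG$ with a sign-rank-$3$ representation $p_x,h_y\in\bR^3$, define the bipartite graph $G'=(V_L,V_R,E')$ with $V_L=V_R=V$ and $xy\in E'$ iff $\sign(\inn{p_x,h_y})=1$. Then $G'$ has sign-rank at most $3$, and for $x\neq y$ the edge $xy$ is present in $G'$ exactly when $xy\in E$. I then verify that the bipartite chain-index of $G'$ is bounded in terms of $\ch(G)$: given a bipartite chain of length $m$ in $G'$, consider the auxiliary graph $H$ on $[m]$ with $i\sim j$ iff $a_i=b_j$ or $a_j=b_i$ as elements of $V$. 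Each index has at most two non-loop neighbours in $H$ (by distinctness of the $a$'s and of the $b$'s), and there is at most one self-loop (two self-loops $i<j$ with distinct $v_i=a_i=b_i$ and $v_j=a_j=b_j$ would force both $v_iv_j\in E$ from $a_ib_j\in E'$ and $v_jv_i\notin E$ from $a_jb_i\notin E'$, contradicting undirectedness of $G$). Removing the self-loop index leaves a graph of maximum degree $2$, which admits an independent set of size at least $(m-1)/3$; the corresponding subchain lives entirely in $G$, so $\ch(G')\leq 3\ch(G)+1$.

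Next I apply \cref{lemma:sign-rank-3-decomposition} to partition $V_R$ into eight parts $Y_1,\dotsc,Y_8$ so that each semi-induced bipartite graph $G'[V_L,Y_i]$, together with its bipartite complement, is $(S_{3,3},\bc{S}_{3,3})$-free and avoids all cycles of length at least $10$. Since semi-induced subgraphs and bipartite complements preserve chain-index up to a constant factor, the collection $\{G'[V_L,Y_i]:G\in\cG,\,i\in[8]\}$ forms a stable class of bipartite graphs satisfying the hypotheses of \cref{thm:intro-main}, giving $\DEQ=O(1)$ on the whole collection. The final protocol on input $(x,y)$ is: Alice and Bob first use one $\Eq$ query to test $x=y$, outputting $0$ in that case (since $G$ has no self-loops); otherwise Bob transmits the three-bit index $i$ with $y\in Y_i$, and they run the constant-cost protocol for $G'[V_L,Y_i]$ on $(x,y)$, whose output is the adjacency of $x,y$ in $G'$, and hence in $G$.

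The main conceptual obstacle I expect is the stability transfer from $G$ to $G'$: the doubled ground set could in principle inflate the chain-index, and the max-degree-two auxiliary-graph argument together with the single-self-loop bound, which exploits undirectedness of $G$, is the crucial ingredient. The remaining work -- applying the decomposition and assembling the two-level protocol -- is largely bookkeeping built on top of \cref{lemma:sign-rank-3-decomposition} and \cref{thm:intro-main}.
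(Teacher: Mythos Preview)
Your proposal is correct and follows the same core approach as the paper: partition into eight positive point--halfplane pieces via \cref{lemma:sign-rank-3-decomposition}, send three bits to agree on the piece, and invoke \cref{thm:intro-main}. The paper's proof is terser and writes $G=(X,Y,E)$ directly as a bipartite graph, whereas you explicitly treat the general case by passing to the bipartite double $G'$ and verifying the stability transfer $\ch(G')\le 3\ch(G)+1$; your auxiliary-graph argument for this bound is sound, and this extra step fills in a detail the paper leaves implicit.
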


\begin{proof}
It suffices to prove that $\DEQ(\cG) = O(1)$ when $\cG$ is stable, due to
\cref{prop:gt-lower-bound}.  On graph $G = (X,Y,E)$ and inputs $x \in X$, $y \in Y$, the players
compute the decomposition $Y = \bigcup_{i=1}^{8} Y_i$ given by
\cref{lemma:sign-rank-3-decomposition} and use 3 bits of communication to agree on the value $i$
such that $y \in Y_i$. Then they compute adjacency in $G[X,Y_i]$ by applying the protocol in
\cref{thm:intro-main}.
\end{proof}

\subsection{Unit Disk Graphs}
\label{section:unit-disk-graphs}

In this section we prove our result for unit disk graphs.  A graph $G$ is \emph{unit disk} if there
exists  a mapping $\phi : V(G) \to \bR^2$ such that $xy \in E(G)$ if and only if
$\|\phi(x)-\phi(y)\|_2 < 2$.  The mapping $\phi$ is called a \emph{realisation} of $G$. Note that
the constant 2 may be replaced with any other constant.  We start by observing that unit disk graphs
have sign-rank at most 4.  

\begin{fact}\label{cl:UDG-srk-4}
Any unit disk graph $G$ has sign-rank at most 4.
\end{fact}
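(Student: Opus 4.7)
The plan is to use the standard linearization trick that converts the Euclidean distance condition defining a unit disk graph into a bilinear form in a lifted space of dimension $4$.

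Fix a realization $\phi \colon V(G) \to \bR^2$ of $G$, written $\phi(x) = (a_x, b_x)$, so that $xy \in E(G)$ iff $(a_x - a_y)^2 + (b_x - b_y)^2 < 4$. Expanding this quantity gives
\[
4 - (a_x - a_y)^2 - (b_x - b_y)^2 = (2 - a_y^2 - b_y^2) + 2 a_x a_y + 2 b_x b_y + (2 - a_x^2 - b_x^2),
\]
which is a bilinear form in the variables of $x$ and $y$. Hence if I define
\[
p_x \define (1,\; a_x,\; b_x,\; 2 - a_x^2 - b_x^2) \in \bR^4, \qquad h_y \define (2 - a_y^2 - b_y^2,\; 2a_y,\; 2b_y,\; 1) \in \bR^4,
\]
then $\langle p_x, h_y \rangle = 4 - \|\phi(x) - \phi(y)\|_2^2$. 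The matrix $R$ with entries $R(x,y) = \langle p_x, h_y\rangle$ has rank at most $4$ since its rows lie in the $4$-dimensional span of the coordinate functionals applied to the $p_x$'s.

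For any pair $x \neq y$ with $xy \in E(G)$ we get $R(x,y) > 0$, and for $xy \notin E(G)$ we get $R(x,y) \leq 0$. The only subtlety is that a non-edge could have $R(x,y) = 0$ exactly, in which case $\sign(R(x,y))$ is not $-1$. I would handle this by perturbing $\phi$ slightly before lifting: since $G$ has finitely many vertices, a generic arbitrarily small perturbation of $\phi$ keeps every edge at distance $< 2$ and every non-edge at distance $> 2$, and simultaneously avoids all pairwise distances being equal to $2$. After such a perturbation, $R(x,y) \neq 0$ for all $x \neq y$ and $\sign(R(x,y)) = \Adj_G^*(x,y)$ on all off-diagonal entries, as required by the definition of sign-rank for general graphs. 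This yields $\rank_\pm(G) \leq 4$.

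There is no real obstacle here; the only care needed is the perturbation argument to rule out the boundary case $\|\phi(x)-\phi(y)\|_2 = 2$, and to note that the definition of sign-rank for a general graph $G$ places no constraints on the diagonal (so the $x = y$ case, which would give positive inner product, is irrelevant).
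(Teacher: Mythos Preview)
Your proof is correct and is essentially the same linearization as the paper's: lift each vertex to a vector in $\bR^4$ so that the inner product equals the (shifted) negative squared distance, giving a rank-$4$ matrix whose sign encodes adjacency. Your treatment is in fact slightly more careful than the paper's, since you explicitly handle the degenerate case $\|\phi(x)-\phi(y)\|_2 = 2$ by perturbation, whereas the paper leaves this implicit.
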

\begin{proof}
	Let $v \mapsto (x_v, y_v) \in \bR^2$ for $v \in V(G)$, be a realisation of $G$, such that for any two
distinct vertices $a,b \in V$, $ab \in E(G)$ if and only if
\[
	(x_a - x_b)^2 + (y_a - y_b)^2 < \sqrt{2}
	\iff
	x_a^2 - 2x_a x_b + x_b^2 + y_a^2 - 2y_a y_b + y_b^2 - \sqrt{2} < 0
\]
Then, by defining 
$\sigma : v \mapsto (-1, 2x_v, 2y_v, -x_v^2 - y_v^2) \in \bR^4$,
$\psi : v \mapsto (x_v^2 + y_v^2 -\sqrt{2}, x_v, y_v, 1) \in \bR^4$
for $v \in V(G)$, we see that for any \emph{distinct} $a,b \in V(G)$
\[
  \inn{\sigma(a), \psi(b)} > 0 \iff (x_a - x_b)^2 + (y_a - y_b)^2 < \sqrt{2} \,,
\]
so $\sign(\inn{\sigma(a), \psi(b)}) = 1$ if and only if $ab \in E$, as desired.
\end{proof}

The main tool for our application to unit disk graphs is the following lemma of \cite{AZ18}.  A
graph $G$ is \emph{co-bipartite} if its complement $\overline G$ is bipartite. 

\begin{lemma}[\cite{AZ18}]
\label{lemma:az18}
Let $G$ be a co-bipartite unit disk graph. Then the bipartite graph
$\overline G$ and its bipartite complement do not contain any edge-asteroid triples. In particular,
due to \cref{obs:edge-asteroid-triple}, $\overline G$ is both 
$(S_{3, 3}, \bc{S}_{3, 3})$-free, and $(C_t, \bc{C_t})$-free for all $t \geq 10$.
\end{lemma}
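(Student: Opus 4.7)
The plan is to exploit the unit disk realization of $G$ directly. Fix a realization $\phi:V(G)\to\bR^2$ with $uv\in E(G)$ iff $\|\phi(u)-\phi(v)\|<2$, and let $V_1,V_2$ be the two cliques forming the bipartition of $\overline G$. Since each $V_i$ is a clique in $G$, the point set $\phi(V_i)$ has diameter strictly less than $2$, so each $\phi(V_i)$ is contained in some open disk of radius $1$. Edges of $\overline G$ correspond to ``far'' cross-pairs (distance $\geq 2$) and edges of the bipartite complement of $\overline G$ correspond to ``near'' cross-pairs (distance $<2$). The task is to show that neither of these two bipartite graphs contains three edges with the edge-asteroid triple property.

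The key geometric observation is that the far-pair relation and the near-pair relation between two small-diameter point sets are each highly constrained. Concretely, if $\phi(V_1)$ sits in a disk of radius $1$ around $c_1$ and $\phi(V_2)$ in a disk of radius $1$ around $c_2$, then for any $a\in V_1$ and $b\in V_2$ that are at distance $\geq 2$, the vectors $\phi(a)-c_1$ and $\phi(b)-c_2$ must both have large projection onto the line $\ell$ through $c_1,c_2$, in opposite senses (or in the $c_1=c_2$ degenerate case, onto some line that separates $a$ from $b$). Orienting $\ell$, one obtains a linear order on $V_1$ and on $V_2$ such that edges of $\overline G$ satisfy a monotonicity condition reminiscent of the one exploited in Lemma~3.11 (about positive point-halfplane incidences): swapping either endpoint ``inward'' along $\ell$ only destroys edges.

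Given this monotone structure, I would assume for contradiction that $\overline G$ contains an edge-asteroid triple $\{a_1b_1,a_2b_2,a_3b_3\}$ with $a_i\in V_1,b_i\in V_2$. For each pair $(i,j)$ one has a path $P_{ij}$ alternating between $V_1$ and $V_2$ along ``far'' jumps, that contains both edges and avoids the $\overline G$-neighborhoods of $a_k$ and $b_k$. The avoidance constraint forces every vertex of $P_{ij}\cap V_1$ to be \emph{close} (distance $<2$) to $b_k$, and every vertex in $P_{ij}\cap V_2$ to be \emph{close} to $a_k$. Intersecting these conditions with the linear order on $V_1$ and $V_2$ pins down the position of $a_k,b_k$ relative to the endpoints of $P_{ij}$, and a short combinatorial argument shows the three such constraints (for the three choices of $k$) cannot be simultaneously satisfied, since they would force two of the $a_i$'s (or $b_i$'s) to be both $\geq 2$ apart and $<2$ apart from a common partner. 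For the bipartite complement of $\overline G$, one runs the dual argument with ``near'' and ``far'' swapped and the inequalities reversed; the same monotone-linear-order framework applies to near cross-pairs, and again no edge-asteroid triple survives.

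The main obstacle, and the reason a careful geometric proof is needed, is the degenerate regime where the two disks containing $\phi(V_1),\phi(V_2)$ heavily overlap: here the centers $c_1,c_2$ nearly coincide and the ``axis'' $\ell$ used to order the vertices is not canonical. I would handle this by choosing $\ell$ to be a direction maximizing the spread of the putative asteroid-triple endpoints $a_1,a_2,a_3,b_1,b_2,b_3$ specifically, rather than a global direction; this guarantees the monotonicity statements continue to hold for the relevant vertices. Once this case is absorbed, the in-particular clause is immediate from Observation~4.2.
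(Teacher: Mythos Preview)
The paper does not prove this lemma; it is quoted from \cite{AZ18} and used as a black box, so there is no in-paper proof to compare against. I can only assess your sketch on its own merits.

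Your outline has a concrete error and two genuine gaps. First, the claim that a point set of diameter $<2$ lies in an open disk of radius $1$ is false: by Jung's theorem a planar set of diameter $d$ is only guaranteed to fit in a disk of radius $d/\sqrt{3}$, and an equilateral triangle of side close to $2$ shows this is tight. This undermines the projection inequality you state next. Second, the ``monotonicity'' you need --- that far pairs force large opposite-sense projections onto a fixed axis $\ell$, yielding linear orders on $V_1,V_2$ under which sliding an endpoint inward can only destroy $\overline G$-edges --- is asserted, not proved, and is not obvious once the radius-$1$ containment fails. In the degenerate regime you yourself flag (centers nearly coinciding), your fix of choosing $\ell$ to maximize spread of the six asteroid endpoints gives no reason the monotonicity holds for the \emph{other} vertices on the paths $P_{ij}$, which is precisely where you need it. Third, ``a short combinatorial argument shows the three constraints cannot be simultaneously satisfied'' is the entire content of the lemma, and you have not supplied it; the analogous argument in Lemma~\ref{lem:eat-free} is short but relies on a specific convexity claim (Claim~1 there), and you have not produced a substitute.

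The high-level plan --- extract a one-dimensional order from the two-clique geometry and run an argument in the spirit of Lemma~\ref{lem:eat-free} --- is reasonable and may well resemble what \cite{AZ18} does, but what you have written is a plan rather than a proof: the radius claim needs correcting, the monotonicity needs a precise statement and proof robust to the degenerate case, and the final contradiction needs to be written out.
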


Our upper bound on the communication complexity of stable unit disk graphs will follow from a fairly
straightforward decomposition of a unit disk graph into unit-length grid cells, such that between
any two grid cells the graph is co-bipartite.

\begin{restatable}{theorem}{thmintroudg}
\label{thm:intro-udg}
Let $\cG$ be a subclass of unit disk graphs. Then $\DEQ(\cG) = O(1)$ if and only if $\cG$ is stable.
\end{restatable}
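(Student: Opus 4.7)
The forward direction is immediate from \cref{prop:gt-lower-bound}. For the converse, suppose $\cG$ is a stable subclass of UDGs with $\ch(G) \leq k$ for all $G \in \cG$. My strategy is to realize each $G$ geometrically and partition its vertices by a unit grid; an adjacency query will reduce, via $O(1)$ $\Eq$ queries that identify the offset between Alice's and Bob's grid cells, to the bipartite protocol of \cref{thm:intro-main}.

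Fix a realization $\phi : V(G) \to \bR^2$ and set $V_{(i,j)} \define \{v : \phi(v) \in [i,i+1) \times [j,j+1)\}$ for each $(i,j) \in \bZ^2$. Within one cell, all pairwise distances are at most $\sqrt{2} < 2$, so each $V_{(i,j)}$ is a clique; if $|i-i'| \geq 3$ or $|j-j'| \geq 3$, all pairwise distances between the two cells are at least $2$, so no edges exist. The remaining non-trivial interactions thus occur only at offsets $\delta \in D \define \{-2,\ldots,2\}^2 \setminus \{(0,0)\}$, a set of constant size. For any such pair, $G[V_{(i,j)} \cup V_{(i',j')}]$ is a co-bipartite UDG, so \cref{lemma:az18} implies that both $\overline{G[V_{(i,j)} \cup V_{(i',j')}]}$ (viewed as a bipartite graph between $V_{(i,j)}$ and $V_{(i',j')}$) and its bipartite complement --- the latter being exactly the semi-induced graph $G[V_{(i,j)}, V_{(i',j')}]$ --- are $(S_{3,3}, \bc{S}_{3,3})$-free and $(C_t, \bc{C}_t)$-free for every $t \geq 10$. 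Let $\cF$ be the closure of the collection of all such semi-induced bipartite graphs under induced subgraphs and bipartite complementation, taken over every $G \in \cG$ and every close cell pair. Then $\cF$ remains $(S_{3,3}, \bc{S}_{3,3})$-free and $(C_t, \bc{C}_t)$-free for $t \geq 10$, and has chain-index bounded in terms of $k$, so by \cref{thm:intro-main}, $\DEQ(\cF) = O(1)$.

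The protocol on inputs $x,y$ works as follows. Both parties locally compute $c(x), c(y) \in \bZ^2$. For each of the $25$ candidate offsets $\delta \in D \cup \{(0,0)\}$, they issue one $\Eq$ query with Alice's value $c(x)$ and Bob's value $c(y) + \delta$; a \emph{yes} answer certifies $c(x) - c(y) = \delta$, after which both parties know both cells. If all queries return \emph{no}, the cells are too far for any edge and they output non-adjacent. If the $\delta=(0,0)$ query succeeds, one extra $\Eq$ on $x$ versus $y$ decides (adjacent iff $x \neq y$). Otherwise they invoke the constant-cost $\cF$-protocol on $G[V_{c(x)}, V_{c(y)}]$. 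The total cost is a constant, as required.

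The main obstacle is the structural step feeding \cref{thm:intro-main}: one must verify that each close pair of cells induces a co-bipartite subgraph so that \cref{lemma:az18} applies, and that the conclusion of that lemma controls \emph{both} $G[V_{(i,j)}, V_{(i',j')}]$ \emph{and} its bipartite complement --- essential because \cref{lemma:gyarfas-protocol}, which underlies \cref{thm:intro-main}, requires closure under bipartite complementation. The unit side length of the grid is precisely what makes each cell a clique and hence each close pair of cells co-bipartite, which is what makes the whole reduction go through.
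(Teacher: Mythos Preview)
Your proposal is correct and follows essentially the same route as the paper: partition a geometric realization by a unit grid, use $O(1)$ \textsc{Equality} queries to determine the offset between the two cells, output directly when the cells coincide or are far apart, and otherwise invoke \cref{thm:intro-main} on the semi-induced bipartite graph between two nearby cells, which is legitimate by \cref{lemma:az18}. Your write-up is in fact slightly more careful than the paper's in two places: you handle the diagonal $x=y$ explicitly, and you make the closure $\cF$ under induced subgraphs and bipartite complementation explicit (which is what \cref{lemma:gyarfas-protocol} actually needs).
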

\begin{proof}
It suffices to show that if $\cG$ is stable, then $\DEQ(\cG) = O(1)$, due to
\cref{prop:gt-lower-bound}.  Since $\cG$ is stable, there exists a constant $k$ such that for all $G
\in \cG$, $\ch(G) < k$. Fix any $G \in \cG$ together with its realisation $\phi : V(G) \to \bR^2$. 
For convenience, we will identify the vertices $x \in V(G)$ with the corresponding points $\phi(x) \in
\bR^2$. On inputs $x,y \in V(G)$, Alice and Bob will perform the following protocol.
\begin{enumerate}
\item Alice and Bob each partition $\bR^2$ into a grid with cells $C_{i,j}$ for $i,j \in \bZ$, where
$C_{i,j} \define \{ (z_1,z_2) \in \bR^2 : i \leq z_1 < i+1, j \leq z_2 < j+1 \}$. Observe that if $x,y$ are
adjacent, then if $x \in C_{i,j}$ we must have $y \in C_{i+a, j+b}$ for some $a,b \in \{-2,-1,0,1,2\}$;
and if $x,y \in C_{i,j}$ then $\|x-y\|_2 < \sqrt 2$ so $x,y$ are adjacent.  Let $i_x, j_x \in \bZ$
be such that $x \in C_{i_x, j_x}$ and let $i_y, j_y \in \bZ$ be such that $y \in C_{i_y, j_y}$.
\label{item:protocol-udg-1}
\item Using 1 call to the $\Eq$ oracle, Alice and Bob check if $(i_x, j_x) = (i_y,j_y)$. If so, they
output 1 and the protocol terminates. In this case, the protocol is correct due to the observation
above.
\item For each $(a,b), (a',b') \in \{-2,-1,0,1,2\}^2$ such that $(a,b) \neq (0,0)$ and $(a',b') \neq
(0,0)$, Alice and Bob use 2 calls to the $\Eq$ oracle to check if 
both $(i_x+a,j_x+b) = (i_y,j_y)$. If so, then Alice and Bob compute adjacency in the semi-induced
bipartite graph $G[X,Y]$ where $X \define C_{i_x,j_x} \cap V(G)$ and $Y \define C_{i_y,j_y} \cap
V(G)$. This is possible because:
\begin{itemize}
\item Alice and Bob each know $X$ and $Y$: Alice knows $(i_x+a,j_x+b) = (i_y,j_y)$ and Bob knows
$(i_y+a',j_y+b')=(i_x,j_x)$; and
\item The graph $G[X,Y]$ has $\ch(G[X,Y]) \leq \ch(G) \leq k$, and it is
$(S_{3, 3}, \bc{S}_{3, 3})$-free, and $(C_t, \bc{C_t})$-free for all $t \geq 10$, by \cref{lemma:az18}, so we may apply \cref{thm:intro-main}. 
\end{itemize}
\item If $(i_x + a,j_x + b) \neq (i_y,j_y)$ for all $(a,b) \in \{-2,-1,0,1,2\}^2$, then Alice and Bob
output 0. In this case the protocol is correct by the observation in step \ref{item:protocol-udg-1}.
\end{enumerate}
This concludes the proof.
\end{proof}

\section{The Sign-Rank Hierarchy}
\label{section:hierarchy}

We have now determined exactly the conditions required for graphs of sign-rank 3, and some graphs of
sign-rank 4, to have constant randomized communication cost (equivalently, constant margin). Let us
now consider sign-ranks 5 and above. We will see that our techniques for the lower sign-ranks,
specifically the reduction to \textsc{Equality}, will surely fail.  This witnesses a certain
threshold between sign-ranks 3 and 5.

It is common to study the bipartite graphs which are $K_{t,t}$-free, for some constant $t$. If a
class of graphs is $K_{t,t}$-free it is called \emph{weakly-sparse}. This is a much stronger
condition than stability: if $G$ is $K_{t,t}$-free then it must satisfy $\ch(G) \leq 2t$.

A hereditary graph class $\cG$ has \emph{bounded degeneracy} (equivalently, \emph{bounded
arboricity}) if there exists a constant $d$ such that every $G \in \cG$ has a vertex of degree at
most $d$ (equivalently, if there exists a constant $a$ such that every $G \in \cG$ on $N$ vertices
has at most $a \cdot N$ edges). The following is well-known and easy to prove (see \eg
\cite{Har20,HWZ22}).

\begin{proposition}
If a hereditary graph class $\cG$ has bounded degeneracy then $\DEQ(\cG) = O(1)$.
\end{proposition}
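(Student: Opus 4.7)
The plan is to exhibit an explicit constant-cost deterministic protocol with $\Eq$ oracles for every graph in $\cG$, relying on the standard fact that bounded degeneracy (together with hereditariness) guarantees a \emph{degeneracy ordering} of the vertex set.

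First I would invoke the characterization of bounded degeneracy: if every induced subgraph of every $G \in \cG$ has a vertex of degree at most $d$, then for each $G = (V,E) \in \cG$ there exists a linear ordering $v_1, v_2, \ldots, v_N$ of $V$ such that each $v_i$ has at most $d$ neighbors among $\{v_{i+1}, \ldots, v_N\}$. This ordering is obtained by iteratively selecting a minimum-degree vertex in the graph remaining after previous deletions. For each vertex $v$, let $N^+(v) \subseteq V$ denote the set of neighbors of $v$ that appear later than $v$ in this ordering, so $|N^+(v)| \leq d$. The key observation is that for any two distinct vertices $x,y \in V$,
\[
  xy \in E \quad \iff \quad y \in N^+(x) \ \text{or}\ x \in N^+(y).
\]

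Next I would describe the protocol. Both players know $G$, hence both know the ordering and the sets $N^+(\cdot)$. Enumerate $N^+(v) = \{n^+_1(v), \ldots, n^+_d(v)\}$, padding with pairwise distinct ``null'' symbols that are guaranteed not to coincide with any vertex of $V$ whenever $|N^+(v)| < d$. The protocol consists of $2d$ oracle calls: for each $i \in [d]$, Alice submits $n^+_i(x)$ and Bob submits $y$ to an $\Eq$ oracle, and symmetrically for each $i \in [d]$ Bob submits $n^+_i(y)$ and Alice submits $x$. The players output $1$ iff at least one of these $2d$ queries reports equality; the $\textsc{Or}$ of the oracle outputs is computed by the deterministic tree that has the $2d$ oracle calls as its inner nodes. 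Correctness is immediate from the equivalence above, and the cost is $2d = O(1)$.

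Finally, I would verify that each query matrix is in $\QS(\Eq)$, which follows directly because each query has the form ``$a_v(x) = b_v(y)$?'' for explicit functions $a_v, b_v : V \to V \cup \{\text{nulls}\}$, and such matrices are exactly the bipartite adjacency matrices of disjoint unions of bicliques. There is no real obstacle here: the only mildly delicate point is the padding with distinct null values so that the $\Eq$ oracle correctly rejects when $|N^+(v)| < d$, but since Alice and Bob may choose these null tokens from a pool disjoint from $V$ (and disjoint across distinct vertices), this is handled trivially. Hence $\DEQ(\cG) \leq 2d = O(1)$, as claimed.
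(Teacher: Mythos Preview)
Your proof is correct and is precisely the standard argument that the paper alludes to when it calls the proposition ``well-known and easy to prove'' (the paper itself gives no proof, only a citation). The degeneracy ordering together with the $2d$ \textsc{Equality} queries on forward-neighbor lists is exactly the expected approach.
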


Recent results of \cite{CH23} show that, for any constant $t$, the $K_{t,t}$-free point-halfspace
incidence graphs in dimension 3 have bounded degeneracy. Since any graph of sign-rank 4 can be
written as a union of two point-halfspace incidence graphs in dimension 3 (see
\cref{prop:dimension-reduction}), we obtain the following:

\begin{theorem}
Let $\cG$ be a hereditary graph class that is weakly-sparse and has sign-rank at most 4. Then
$\DEQ(\cG) = O(1)$.
\end{theorem}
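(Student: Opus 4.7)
The plan is to combine three ingredients that are already in place: the dimension-reduction result \cref{prop:dimension-reduction}, the degeneracy bound of \cite{CH23} for weakly-sparse point-halfspace incidence graphs in $\bR^{3}$, and the observation that bounded degeneracy implies $\DEQ = O(1)$.

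First, given $G \in \cG$, view the adjacency relation bipartitely: let $X$ and $Y$ be two copies of $V(G)$ and regard $\Adj_G$ as a bipartite adjacency matrix from $X$ to $Y$. This matrix still has sign-rank at most $4$, so by \cref{prop:dimension-reduction} there is a partition $X = X_1 \cup X_2$ such that each semi-induced bipartite graph $H_i \define G[X_i,Y]$ is a point-halfspace incidence graph in $\bR^{3}$. Second, since $\cG$ is weakly-sparse, there is a constant $t$ such that no $G \in \cG$ contains $K_{t,t}$ as a subgraph; this property is inherited by every semi-induced bipartite subgraph, so each $H_i$ is $K_{t,t}$-free. Third, by the result of \cite{CH23}, $K_{t,t}$-free point-halfspace incidence graphs in $\bR^{3}$ have degeneracy bounded by some constant $d = d(t)$, and the proposition preceding the theorem tells us that bounded degeneracy implies $\DEQ(\cH) = O(1)$ for any hereditary class $\cH$ of such graphs.

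Finally, the communication protocol for $G$ is immediate: on input $(x,y) \in V(G)^2$, Alice sends one bit indicating whether $x \in X_1$ or $x \in X_2$, the players agree on the corresponding bipartite graph $H_i$, and then they run the constant-cost $\Eq$-protocol for $H_i$ guaranteed by bounded degeneracy. The total cost is $1 + O(1) = O(1)$.

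The only mildly delicate point is that \cref{prop:dimension-reduction} is stated for bipartite graphs, while the theorem is stated for a hereditary graph class; this is handled by the bipartite view described above, and no restriction on the diagonal of $\Adj_G^*$ interferes, since the protocol may simply read off $\Adj_G(x,x)$ when $x = y$. Everything else in the argument is black-box.
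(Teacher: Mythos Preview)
Your proof is correct and follows essentially the same approach as the paper: split a sign-rank~4 bipartite matrix into two point-halfspace incidence graphs in $\bR^3$ via \cref{prop:dimension-reduction}, invoke \cite{CH23} to get bounded degeneracy of the weakly-sparse pieces, and conclude $\DEQ = O(1)$ from the preceding proposition. The only minor imprecision is that the sign-rank~$4$ completion of $\Adj_G^*$ may place $+1$ on some diagonal entries, so the resulting bipartite pieces are only guaranteed to be $K_{2t,2t}$-free rather than $K_{t,t}$-free; this is harmless for the argument, and the paper's one-line justification glosses over the same detail.
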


Our \cref{thm:intro-signrank} and \cref{thm:intro-udg} are strengthenings of this theorem for the
special cases of sign-rank 3 and unit disk graphs, where we replace the \emph{weakly-sparse}
condition with the much less restrictive stability condition.  The proof of \cite{CH23} uses a
technique based on ``shallow cuttings''\!. In \cref{section:caratheodory}, we give a simpler proof,
using elementary geometry, of the weaker statement that $K_{2,t}$-free point-halfspace incidence
graphs in dimension 3 have bounded degeneracy.

However, it is not possible to extend these results even to sign-rank 5. To see this, we first
require a theorem which shows that, for weakly-sparse classes, bounded degeneracy is equivalent to
the existence of a reduction to \textsc{Equality}.
\iftoggle{anonymous}{
}{
The proof generalizes a theorem of \cite{HWZ22} and is due to Bonamy, Esperet, \& Gir{\~ao}, which
we include here with permission and gratitude.
}

\thmeqlowerbound*

This theorem follows from the next two lemmas. The first lemma is implicit in \cite{HWZ22}.

\begin{lemma}
Let $\cG$ be a class of bipartite graphs satisfying the following Ramsey property: for any $k, \ell
\in \bN$ there exists a graph $G \in \cG$ such that, for any coloring of the edges of $G$ with at
most $k$ colors, there exists a monochromatic induced path on $\ell$ vertices. Then $\D^\Eq(\cG) =
\omega(1)$.
\end{lemma}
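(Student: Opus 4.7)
The plan is to proceed by contradiction. Suppose $\DEQ(\cG) \leq c$ for some constant $c$. By the structural characterization of constant-cost reductions to $\Eq$, every $G \in \cG$ admits a representation $\Adj_G = f_G(Q_1^G, \ldots, Q_{t_G}^G)$ with $t_G \leq 2^c$ and each $Q_i^G$ the bipartite adjacency matrix of a bipartite equivalence graph. Since only finitely many pairs $(t,f)$ arise, I pass to an infinite subclass $\cG' \subseteq \cG$ on which $t$ and $f$ are fixed (the Ramsey hypothesis is preserved). I then apply the Ramsey property with $k = 2^t$ colors and $\ell$ to be chosen, coloring each edge $(u,w)$ by its transcript $(Q_1(u,w), \ldots, Q_t(u,w)) \in \{0,1\}^t$. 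This yields $G \in \cG'$ containing an induced path $P = v_0, v_1, \ldots, v_{\ell-1}$ whose edges share a common transcript $\sigma \in \{0,1\}^t$.

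The central structural step is that for each coordinate $i$ with $\sigma_i = 1$, monochromaticity combined with $Q_i(u,w) = 1 \iff a_i(u) = b_i(w)$ forces the chain $a_i(v_0) = b_i(v_1) = a_i(v_2) = b_i(v_3) = \cdots$, placing all even-indexed path vertices into a single $Q_i$-class on the $U$-side and all odd-indexed into a single $Q_i$-class on the $W$-side; hence $Q_i \equiv 1$ on every bipartite pair of $P$. Consequently, for each non-edge of $G$ among the bipartite pairs of $P$, namely $(v_a,v_b)$ with $|a-b| \geq 3$, its transcript agrees with $\sigma$ on all coordinates where $\sigma_i = 1$, yet $f$ applied to it returns $0$ while $f(\sigma) = 1$; the transcript must therefore differ from $\sigma$ at some coordinate $i \in I := \{i : \sigma_i = 0\}$, i.e., $Q_i(v_a,v_b) = 1$ for some $i \in I$. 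Thus $\bar P_\ell$, viewed as the bipartite graph with even-indexed path vertices on one side and odd-indexed on the other, equals the union $\bigcup_{i \in I} Q_i|_P$ of at most $t$ bipartite equivalence graphs.

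The main obstacle is the combinatorial fact that the minimum number of bipartite equivalence graphs whose union equals $\bar P_\ell$ grows without bound as $\ell \to \infty$. I would argue this by noting that $\bar P_\ell$ contains, as a semi-induced bipartite subgraph obtained by choosing the evenly spaced rows $\{v_{4i}\}$ and columns $\{v_{4j+3}\}$, a copy of $K_{s,s}$ minus a perfect matching for $s = \Theta(\ell)$. Because restrictions of bipartite equivalence graphs remain bipartite equivalence graphs, this covering number is monotone under semi-induced subgraphs, so it suffices to show that $K_{s,s}$ minus a perfect matching requires unboundedly many bipartite equivalence graphs to be covered. This can be shown by induction on the number of equivalence graphs in the cover: if $E_1$ is the first equivalence graph and $R$ is its largest row class, then $E_1|_{R\times R}$ is empty (because in $K_{s,s}-M$ one always has $i \notin \mu_1(R_1^U(i))$ for $i \in R$, since the diagonal is $0$), so restricting to $R$ yields a cover of $K_{|R|,|R|}$ minus a matching by one fewer equivalence graph; a careful case analysis on whether $E_1$'s row partition has few or many classes guarantees $|R| \to \infty$ with $s$, completing the induction with base case $t = 1$ failing already for $|R| \geq 3$. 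Choosing $\ell$ large enough that this covering number exceeds $t$ produces the desired contradiction.
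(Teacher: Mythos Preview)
The paper does not give its own proof of this lemma; it only records that the statement is implicit in \cite{HWZ22}. So there is nothing to compare against, and I will just comment on your argument.

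Your overall strategy is correct and is essentially the intended one: colour the edges of $G$ by their transcript $(Q_1(u,w),\dotsc,Q_t(u,w))$, extract a monochromatic induced path $P$, observe that for every coordinate $i$ with $\sigma_i=1$ the chain $a_i(v_0)=b_i(v_1)=a_i(v_2)=\dotsb$ forces $Q_i|_P$ to be the all-ones matrix, and conclude that the bipartite complement $\bar P_\ell$ is exactly $\bigcup_{i\in I}Q_i|_P$. The reduction to $K_{s,s}$ minus a perfect matching via the rows $\{v_{4i}\}$ and columns $\{v_{4j+3}\}$ is also fine (up to an off-by-one that does not matter). One small remark: the detour through a subclass $\cG'$ with fixed $(t,f)$ is unnecessary, since $t$ is already uniform over $\cG$ and you only ever use $f$ and the $Q_i$ of the single graph $G$ supplied by the Ramsey hypothesis.

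There is, however, a genuine gap in your final inductive step. You correctly note that if $R$ is a row class of $E_1$ then $E_1|_{R\times R}=0$, so the remaining $t-1$ equivalence graphs cover $K_{|R|,|R|}-M$. But the induction only gains ground when $|R|$ is large, and nothing you wrote forces this: if all row classes of $E_1$ are singletons then $|R|=1$, and the phrase ``a careful case analysis on whether $E_1$'s row partition has few or many classes guarantees $|R|\to\infty$'' does not say what to do in the many-classes case. Looking at $E_1$ alone cannot work here (for instance $E_1$ could be a perfect matching). The fix is to run the same restriction argument for \emph{every} $E_i$: either some $E_i$ has a row class $R$ of size exceeding $g(t-1)$, in which case restricting to $R\times R$ gives a cover of $K_{|R|,|R|}-M$ by $t-1$ equivalence graphs and hence a contradiction; or every row class of every $E_i$ has size at most $g(t-1)$, in which case each $E_i$ covers at most $g(t-1)\cdot s$ entries, so $s(s-1)\le t\,g(t-1)\,s$ and $s\le t\,g(t-1)+1$. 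This yields $g(t)\le t\,g(t-1)+1$ with $g(0)=1$, which is finite and completes your proof.
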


This lemma was used in \cite{HWZ22} in conjunction with a result of \cite{ARSV06} that established
the required Ramsey property for induced subgraphs of hypercubes, which are $K_{2,3}$-free, to show
that hypercubes do not have constant-cost reductions to \textsc{Equality}.  The next lemma
generalizes this result.

\iftoggle{anonymous}{
\begin{lemma}
\label{lemma:bonamy-girao}
For any $k, t, \ell \in \bN$, there is an integer $d$ such that, if a $K_{t,t}$-free bipartite graph
$G$ has average degree at least $d$, and its edges are colored with at most $k$ colors, then $G$
contains a monochromatic induced path of length at least $\ell$.
\end{lemma}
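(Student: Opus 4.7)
The strategy is a three-step argument: pigeonhole on the coloring, a long induced path in a single color class via K\"uhn--Osthus, and chord elimination via K\H{o}v\'ari--S\'os--Tur\'an. First, partitioning the edges of $G$ into the $k$ color classes, some color $c$ has edge-subgraph $H \subseteq G$ containing at least $|E(G)|/k$ edges, hence of average degree at least $d/k$; $H$ is still $K_{t,t}$-free as a subgraph of $G$. Next, I would invoke the theorem of K\"uhn and Osthus, which states that for every graph $F$ and every integer $s$ there is a constant $d_{KO}(F,s)$ such that every $K_{s,s}$-free graph of average degree at least $d_{KO}(F,s)$ contains an induced subdivision of $F$. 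Applied to $H$ with $F = P_L$ for some $L$ to be fixed below, and using that every subdivision of $P_L$ is itself a path of length at least $L$, we obtain an induced path $P = v_0 v_1 \cdots v_{L'}$ in $H$ with $L' \geq L$, every edge of which has color $c$. Setting $d := k \cdot d_{KO}(P_L, t)$ makes this step go through.

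The path $P$ is induced in $H$ but may fail to be induced in $G$ because of \emph{chords}: edges of $G$ of colors other than $c$ between non-consecutive vertices of $P$. To extract an induced monochromatic $P_\ell$ inside $P$, I would use a sliding-window counting argument. There are $L' - \ell + 1$ windows of $\ell + 1$ consecutive vertices of $P$, and by K\H{o}v\'ari--S\'os--Tur\'an, the graph $G[V(P)]$ is $K_{t,t}$-free on $L' + 1$ vertices and therefore has at most $C_t (L'+1)^{2 - 1/t}$ edges, hence at most this many chords. A chord $v_a v_b$ with $b - a \leq \ell$ lies in at most $\ell$ windows, and chords of span greater than $\ell$ lie in no window, so at most $C_t \ell (L'+1)^{2-1/t}$ windows contain any chord. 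Choosing $L$ of order $(2 C_t \ell)^t$ ensures that this count is strictly less than $L' - \ell + 1$, forcing a chord-free window. Such a window is an induced $P_\ell$ in $G$ whose $\ell$ edges are all of color $c$, as required.

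The main obstacle is the chord issue in the final step: an induced path of the monochromatic subgraph $H$ need not be induced in $G$, and the naive remedy of simply deleting chord-endpoints destroys the contiguity of the path. The $K_{t,t}$-free assumption is what lets the argument succeed, and it is used \emph{twice}: once through K\"uhn--Osthus to secure a very long monochromatic induced path in $H$, and once through K\H{o}v\'ari--S\'os--Tur\'an to bound the chord density within that path, so that a shorter but $G$-induced monochromatic subpath survives after trimming. The quantitative dependence is polynomial in $C_t$ and $\ell$ with exponent $t$, absorbed into the constant $d_{KO}(P_L, t)$ from the K\"uhn--Osthus bound.
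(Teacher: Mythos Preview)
Your third step does not go through: the inequality is oriented the wrong way. The K\H{o}v\'ari--S\'os--Tur\'an bound on the number of edges of $G[V(P)]$ is $C_t(L'+1)^{2-1/t}$, which for every $t\geq 2$ grows \emph{faster} than $L'$. Hence the spoiled-window count $\ell\,C_t(L'+1)^{2-1/t}$ eventually exceeds the total number $L'-\ell+1$ of windows, and no choice of $L$ (large or small) makes the comparison work. Concretely, with $L=(2C_t\ell)^t$ one gets $\ell C_t L^{2-1/t}=L^2/2$, not something below $L$.

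Worse, the conclusion of your step 3 is false as a stand-alone statement. Take the path $v_0v_1\cdots v_L$ and add every span-$3$ chord $v_iv_{i+3}$; the resulting bipartite graph has maximum degree $4$, is $K_{2,4}$-free (no two vertices share four neighbours), yet every window of $\ell+1\geq 4$ consecutive vertices contains a chord. So a long $H$-induced path in a $K_{t,t}$-free $G$ need not contain any $G$-induced subpath of length $3$, and your trimming scheme cannot be rescued by a sharper count. In particular, you have not used the high average degree of $H$ anywhere in step~3, and without it the claim simply fails.

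The paper's proof avoids the chord-trimming problem entirely. It first reduces to the case $t=2$ via a result of McCarty (any $K_{t,t}$-free graph of large average degree contains a $K_{2,2}$-free induced subgraph of large average degree), then passes to a monochromatic subgraph $G'_c$ of large \emph{minimum} degree $b>\ell$, and builds the desired path greedily: having constructed a $G$-induced monochromatic path $v_1,\dots,v_{s-1}$, one argues that some $G'_c$-neighbour $v_s$ of $v_{s-1}$ is non-adjacent in $G$ to every earlier $v_i$. If not, the $\geq b>\ell>s-1$ candidates are each adjacent to one of the $s-2$ earlier vertices, so two of them share a common earlier neighbour, which together with $v_{s-1}$ yields a $K_{2,2}$. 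Thus $K_{2,2}$-freeness is used locally, at every extension step, to keep the path induced in $G$ from the outset rather than trying to repair it afterwards.
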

}{
\begin{lemma}[Bonamy, Esperet, \& Gir{\~a}o]
\label{lemma:bonamy-girao}
For any $k, t, \ell \in \bN$, there is an integer $d$ such that, if a $K_{t,t}$-free bipartite graph
$G$ has average degree at least $d$, and its edges are colored with at most $k$ colors, then $G$
contains a monochromatic induced path of length at least $\ell$.
\end{lemma}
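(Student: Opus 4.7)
My plan combines a Kühn-Osthus-type structural result for $K_{t,t}$-free graphs of high average degree with a Ramsey-theoretic extraction on the edge coloring. First, I pass to a subgraph of $G$ with minimum degree $\Omega(d)$ by iteratively deleting vertices of degree less than $d/4$; this is a standard trick that preserves $K_{t,t}$-freeness and the inherited $k$-edge-coloring. I then invoke a theorem of Kühn-Osthus type: for every $s \in \bN$, there exists a constant $d_0 = d_0(s, t, \ell)$ such that every $K_{t,t}$-free graph of average degree at least $d_0$ contains an induced subdivision $S$ of $K_s$ in which every subdivision path has length exactly $\ell$. (Control over the subdivision length, which is not given directly by the classical theorem, can be enforced either by iterating the original statement or by applying it to an auxiliary graph obtained by contracting short paths of length dividing $\ell$.) Choosing $d$ large in terms of $k, t, \ell, s$ produces such an $S$ inside $G$.

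Next, I apply Ramsey's theorem to the coloring restricted to $S$. Each of the $\binom{s}{2}$ subdivision paths $P_e$ (indexed by edges $e$ of the underlying $K_s$) carries a color pattern $\sigma_e \in [k]^\ell$; this is a coloring of $E(K_s)$ with at most $k^\ell$ color classes. Taking $s$ to be the multicolor Ramsey number $R_{k^\ell}(r)$ for a sufficiently large $r$, I obtain a clique $T \subseteq V(K_s)$ of size $r$ such that all subdivision paths with endpoints in $T$ share a common color pattern $\sigma \in [k]^\ell$. If $\sigma$ is the constant sequence $c^\ell$ for some color $c$, then any one of these subdivision paths is already a monochromatic induced path of length $\ell$ in $G$, and the proof is complete.

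The main obstacle lies in the non-constant-pattern case. Here I plan to exploit the rich parallel structure among the many $\sigma$-patterned induced paths emanating from a single branch vertex $v \in T$: a secondary pigeonhole on the colors of the first edges at $v$ combined with $K_{t,t}$-freeness (which limits how many of these parallel paths can share a common structure without forcing a forbidden complete bipartite subgraph) should eventually force either a monochromatic induced path of length $\ell$ directly, or a structural contradiction with the non-constancy of $\sigma$. I expect this step to be the most delicate part of the argument, and it may require strengthening Step 1 to additionally control the colors of first edges at branch vertices, or replacing the naive pattern-coloring of $E(K_s)$ with a more refined Ramsey-type coloring that already builds in the ``initial color'' information at each endpoint.
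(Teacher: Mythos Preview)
Your proposal has a genuine gap, which you yourself flag: the non-constant-pattern case is not handled, and the sketch you give for it does not close. Concretely, suppose Ramsey gives you a large set $T$ of branch vertices such that every subdivision path between vertices of $T$ carries the same pattern $\sigma = (c_1,\dotsc,c_\ell)$ with $\sigma$ non-constant. Concatenating two such paths through a branch vertex produces the pattern $(\sigma^{\mathrm{rev}},\sigma)$, which is still non-monochromatic. All you can extract in any single color at a branch vertex $v$ is a star (the $r-1$ first edges, all of color $c_1$), and $K_{t,t}$-freeness gives you no leverage to extend these into a long monochromatic \emph{induced} path inside $G$: the interior vertices of the subdivision have degree at most $2$ in the subdivision, so the ``rich parallel structure'' you hope for does not propagate past the first layer. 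I do not see a way to rescue this step by refining the Ramsey coloring; the obstruction is structural, not combinatorial. A second, independent gap is the appeal to a K\"uhn--Osthus-type theorem with all subdivision paths of prescribed length exactly $\ell$: the standard theorem does not give this, and neither of your suggested fixes (iteration, or contracting paths of length dividing $\ell$) preserves the \emph{induced} nature of the subdivision.

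For comparison, the paper's proof is short and avoids both difficulties. It first reduces to the $K_{2,2}$-free case via a result of McCarty (every $K_{t,t}$-free bipartite graph of large average degree contains a $K_{2,2}$-free induced subgraph of large average degree). In the $K_{2,2}$-free case, one passes by pigeonhole to a single color class of average degree at least $2b$ for $b>\ell$, then to a subgraph $G'_c$ of minimum degree at least $b$, and grows the path greedily: from the current endpoint $v_{s-1}$, if every $G'_c$-neighbor outside the path were adjacent in $G$ to some earlier $v_i$, then two of them would share the same $v_i$ (since $b>\ell>s-1$), producing a $K_{2,2}$. The $K_{2,2}$-freeness is exactly what makes the greedy extension go through; no subdivision or Ramsey machinery is needed.
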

}
\begin{proof}
We first reduce to the case $t=2$. Suppose $t > 2$ and let $d \geq t$. A result of \cite{McC21}
shows that there is a constant $d'$ such that any $K_{t,t}$-free bipartite graph of average degree
at least $d'$ contains a $K_{2,2}$-free induced subgraph of average degree at least $d$. Therefore
it suffices to consider the case $t=2$.

Choose $b > \ell$ and set $d = 2kb$.  Consider a $K_{2,2}$-free graph $G$, whose edges are colored
with at most $k$ colors. Then, if the average degree of $G$ is at least $d$, there exists a color $c
\in [k]$ such that the graph $G_c$ induced by the edges with color $c$ has average degree at least
$2b$. Then $G_c$ has an induced subgraph $G'_c$ with \emph{minimum} degree at least $b$.

We now construct a monochromatic induced path in $G$ by induction as follows. The base case, a
monochromatic path on 2 vertices, is trivial. Suppose we have obtained an induced path $P_{s-1} = \{
v_1, \dotsc, v_{s-1} \}$, for $s-1 < \ell$ where each $(v_i, v_{i+1})$ is an edge of $G'_c$. Let
$N'_c(v_{s-1})$ be the neighbors of $v_{s-1}$ in $G'_c$ and suppose for contradiction that all
vertices $u \in N'_c(v_{s-1}) \setminus P_{s-1}$ are adjacent in $G$ to some $v_i$ with $i < s-1$.
Since $v_{s-1}$ has at least $b > \ell > s-1$ neighbors, there are two vertices $u,w \in N'_c(v_{s-1})
\setminus P_{s-1}$ that are adjacent in $G$ to both $v_{s-1}$ and $v_i$ for some $i < s-1$. But then
$\{v_i, v_{s-1}, u, w\}$ form an induced $K_{2,2}$, which is a contradiction. Therefore there exists
a vertex $v_s \in N'_c(v_{s-1}) \setminus P_{s-1}$ which produces a monochromatic induced path $P_s
= \{ v_1, \dotsc, v_s \}$. This concludes the proof.
\end{proof}

\subsection{Sign-Rank 5: Point-Box Incidences}

To show that our techniques cannot extend to sign-rank 5, even if we ask for the much stronger
$K_{2,2}$-free condition instead of stability, it now suffices to show that there exists a
weakly-sparse class of bipartite graphs with sign-rank 5 and unbounded degeneracy. For this we use
the \emph{point-box incidence} graphs.

\begin{definition}[Point-Box Incidences]
Let $P$ be a set of points in $\bR^2$ and $H$ a set of axis-aligned rectangles in $\bR^2$. The
\emph{incidence graph} of $P$ and $H$ is the bipartite graph
\[
  G(P, H) \define (P, H, \{ ph | p \in P, h \in H, p \in h \}) \,.
\]
\end{definition}

The fact that these graphs have sign-rank 5 follows from a transformation of point-box incidences in
dimension $2$ to point-halfspace incidences in dimesion $4$, which appears in \cite{PT13,CMK19}. The sign-rank
of point-halfspace incidences in $\bR^4$ is at most 5.

\begin{lemma}[\cite{CMK19}]
\label{lemma:point-box-sign-rank}
The class of point-box incidence graphs has sign-rank at most 5.
\end{lemma}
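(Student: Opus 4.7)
The plan is to carry out the two-step reduction sketched in the paragraph preceding the statement, following \cite{PT13, CMK19}. First, I would exhibit a polynomial lifting $\sigma : \bR^2 \to \bR^4$ and an assignment of each box $B = [a_1,b_1] \times [a_2,b_2] \subseteq \bR^2$ to a halfspace $H_B \subseteq \bR^4$ such that $p \in B$ if and only if $\sigma(p) \in H_B$. The natural starting point is the identity
\[
p \in B \iff (p_1 - a_1)(b_1 - p_1) \ge 0 \text{ and } (p_2 - a_2)(b_2 - p_2) \ge 0,
\]
which expresses box-membership as a conjunction of two quadratic inequalities. Under the monomial lifting $\sigma(p) = (p_1, p_2, p_1^2, p_2^2) \in \bR^4$, each inequality becomes linear in $\sigma(p)$ with coefficients polynomial in $(a_1, b_1, a_2, b_2)$, and the remaining technical content is to combine the two linear conditions into a single halfspace of $\bR^4$ via the dualization/rescaling construction of \cite{PT13, CMK19}.

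Second, I would invoke the standard sign-rank bound for point-halfspace incidences in $\bR^d$. For a halfspace $H = \{x \in \bR^d : \inn{w,x} \ge t\}$, the real matrix $R(p, H) \define \inn{(w,-t),(p,1)}$ has rank at most $d+1$ and satisfies $\sign(R(p,H)) = +1$ precisely when $p \in H$, so point-halfspace incidences in $\bR^d$ have sign-rank at most $d+1$. Applied with $d = 4$, this yields sign-rank at most $5$ for point-halfspace incidence graphs in $\bR^4$. Chaining with Step 1 gives the bound for point-box incidences in $\bR^2$.

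The main obstacle is the first step. A direct use of the lifting $\sigma(p) = (p_1, p_2, p_1^2, p_2^2)$ linearizes each of the two quadratic inequalities individually, but the positivity region in $\bR^2$ of any single linear functional on the $4$-dimensional lifted space is a conic section (ellipse, hyperbola, or parabola), never an axis-aligned rectangle, so no single halfspace of $\bR^4$ directly cuts box-membership out of the image of $\sigma$. The resolution in \cite{PT13, CMK19} is to modify the embedding, or equivalently to pass to a suitable dual formulation, so that the AND of the two strip-conditions gets encoded by one linear inequality in $\bR^4$; once this construction is verified, the sign-rank bound follows immediately from the homogeneous $(d+1)$-dimensional encoding of halfspaces.
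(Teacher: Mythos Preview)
The paper does not give its own proof of this lemma; it is quoted from \cite{CMK19}, with only the one-sentence remark that the bound ``follows from a transformation of point-box incidences in dimension $2$ to point-halfspace incidences in dimension $4$, which appears in \cite{PT13,CMK19}.'' Your two-step plan mirrors this sketch exactly, and your Step~2 (point-halfspace incidences in $\bR^d$ have sign-rank at most $d+1$ via the homogenization $(p,H) \mapsto \inn{(w,-t),(p,1)}$) is correct and complete.

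The gap is your Step~1. You correctly observe that under the monomial lifting $\sigma(p) = (p_1, p_2, p_1^2, p_2^2)$ the preimage in $\bR^2$ of any single halfspace of $\bR^4$ is a conic region, never an axis-aligned rectangle, so this particular $\sigma$ cannot realize box-membership as a single linear inequality. But your proposed resolution---``the construction in \cite{PT13,CMK19} modifies the embedding, or equivalently passes to a suitable dual formulation''---is not a proof; it is a pointer to one. You have located precisely the non-trivial step and then deferred it. To close the argument you must actually exhibit maps from points in $\bR^2$ to $\bR^4$ and from boxes to halfspaces of $\bR^4$ that realize the incidence relation; what that construction is, and why it evades your conic-section obstruction, is the entire content of the lemma. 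Since the paper itself outsources this step to the citation, your proposal is at the same level of detail as the paper's treatment---but as a self-contained proof it is incomplete at the crucial point.
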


What remains is the claim that \emph{weakly-sparse} point-box incidence graphs on $N$ vertices can have
$\omega(N)$ edges.  This is true even under the strongest condition of being $K_{2,2}$-free.  The
lower bound of the next lemma was proved recently in \cite{BCSTT21}, and the upper bound in
\cite{CH23}.  We remark that the lemma remains true even if the boxes are restricted to be
\emph{dyadic}, \ie the product of intervals of the form $[s2^t, (s+1)2^t)$ with integers $s,t$.

\begin{lemma}[\cite{BCSTT21,CH23}]
\label{lemma:point-box-edges}
The maximum number of edges in a $K_{2,2}$-free point-box incidence graph is $\Theta\left( n \cdot
\frac{\log n}{\log\log n}\right)$. As a consequence, $K_{2,2}$-free point-box incidence graphs have
unbounded degeneracy.
\end{lemma}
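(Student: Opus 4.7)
The plan is to establish the two bounds on the edge count separately, and then derive the degeneracy consequence by a standard averaging argument. Both extremal bounds are attributed to the cited works, so I would follow their strategies and import the quantitative statements as black boxes, focusing my exposition on the consequence that we actually need (unboundedness of degeneracy).

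For the upper bound $O(n \log n / \log \log n)$ on the number of edges in an $N$-vertex $K_{2,2}$-free point-box incidence graph (from \cite{CH23}), the approach is to partition the axis-aligned boxes according to their dyadic widths and heights. Within each class of boxes with comparable dimensions, one can normalize to (almost) unit boxes and apply a shallow-cutting argument from computational geometry: the $K_{2,2}$-free condition forbids two points from lying in two common boxes, which bounds the local incidence count, and summing over $O(\log N)$ dyadic scales gives the claimed bound after a careful accounting that produces the $\log \log N$ denominator.

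For the matching lower bound $\Omega(n \log n / \log \log n)$ (from \cite{BCSTT21}), the plan is to exhibit an explicit construction using dyadic boxes over a grid of $n$ points. Roughly, one fixes a product grid in $\bR^2$, enumerates boxes of the form $I \times J$ where $I$ and $J$ range over dyadic intervals, and then selects a subfamily that simultaneously (i) avoids any $K_{2,2}$ in the resulting incidence graph and (ii) retains the claimed number of incidences. The $K_{2,2}$-free selection is controlled by a double-counting or probabilistic deletion argument against the potential $K_{2,2}$'s, and the $\log n / \log \log n$ factor arises from choosing the ``right'' number of scales to combine. I expect this to be the main obstacle of the proof: producing a construction that extracts the maximum benefit from the $\log^2 n$ dyadic scales while maintaining the forbidden-subgraph constraint requires a careful combinatorial design rather than a soft argument.

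Finally, to derive the ``As a consequence'' clause about degeneracy, I would argue as follows. The class of $K_{2,2}$-free point-box incidence graphs is hereditary (taking induced subgraphs preserves both point-box-representability and $K_{2,2}$-freeness). By the lower bound, for each $n$ there exists an $n$-vertex graph $G_n$ in the class with average degree $\bar d_n = \Omega(\log n / \log \log n)$. A standard iterative argument — repeatedly delete any vertex of degree below $\bar d_n / 2$, which over the whole process removes fewer than $n \bar d_n / 4$ edges and therefore cannot exhaust $G_n$ — produces an induced subgraph $H_n \subseteq G_n$ with minimum degree at least $\bar d_n / 2 = \Omega(\log n / \log \log n)$. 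Since degeneracy is the maximum minimum degree over induced subgraphs, the degeneracy of $G_n$ grows without bound in $n$, so no uniform constant $d$ bounds the degeneracy of the class.
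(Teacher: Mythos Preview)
The paper does not prove this lemma; it is stated as a cited result from \cite{BCSTT21,CH23} with no accompanying argument, and the ``As a consequence'' clause is left as an evident observation. Your proposal is therefore already more detailed than what appears in the paper: you correctly identify that the two extremal bounds are to be imported from the cited works and that the only thing to supply locally is the degeneracy consequence, which is exactly the paper's (implicit) stance.

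Your derivation of unbounded degeneracy is the standard one and is correct in spirit, but the arithmetic is slightly off: deleting a vertex of degree below $\bar d_n/2$ removes fewer than $\bar d_n/2$ edges, so over at most $n$ deletions you remove fewer than $n\bar d_n/2$ edges, not $n\bar d_n/4$. Since the total edge count is $m = n\bar d_n/2$, the process cannot exhaust the graph and a subgraph of minimum degree at least $\bar d_n/2$ survives. Alternatively (and more directly), a hereditary class of degeneracy at most $d$ has at most $dn$ edges in every $n$-vertex member, so any superlinear edge count immediately forces unbounded degeneracy; this one-line version is likely what the paper intends by ``As a consequence''.
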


Combining \cref{thm:eq-lower-bound,lemma:point-box-sign-rank,lemma:point-box-edges}, we get:

\begin{corollary}
There is a hereditary class $\cG$ of $K_{2,2}$-free bipartite graphs with sign-rank 5 and $\DEQ(\cG)
= \omega(1)$.
\end{corollary}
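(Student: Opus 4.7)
The plan is essentially to assemble the three preceding ingredients, since all the substantive work has already been established in \cref{thm:eq-lower-bound,lemma:point-box-sign-rank,lemma:point-box-edges}. I would take $\cG$ to be the hereditary closure of the class of $K_{2,2}$-free point-box incidence graphs, i.e., all bipartite graphs that arise as induced subgraphs of some $G(P,H)$ with $P$ a finite point set in $\bR^2$, $H$ a finite set of axis-aligned rectangles, and $G(P,H)$ containing no induced $K_{2,2}$.

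First I would verify that $\cG$ is a well-defined hereditary class of $K_{2,2}$-free bipartite graphs: it is closed under induced subgraphs by construction, and the $K_{2,2}$-free condition is hereditary. Next I would argue that $\cG$ has sign-rank at most $5$: by \cref{lemma:point-box-sign-rank}, every point-box incidence graph has sign-rank at most $5$, and sign-rank does not increase under taking induced subgraphs (restricting the point/hyperplane assignment to the induced vertex set preserves the sign pattern), so every graph in $\cG$ has sign-rank at most $5$.

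Then I would invoke \cref{lemma:point-box-edges}, which provides, for every $d$, a $K_{2,2}$-free point-box incidence graph on $n$ vertices with at least $c \cdot n \log n / \log\log n \geq d \cdot n$ edges for $n$ sufficiently large. Any such graph lies in $\cG$ and witnesses that $\cG$ contains graphs of average degree at least $d$; hence no uniform degeneracy bound holds for $\cG$.

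Finally, since $\cG$ is weakly-sparse (being $K_{2,2}$-free, i.e., $K_{t,t}$-free for $t=2$) and has unbounded degeneracy, \cref{thm:eq-lower-bound} yields $\DEQ(\cG) = \omega(1)$, completing the proof. There is no real obstacle here; the only nontrivial content has been pushed into the cited results, most notably the edge-count lower bound of \cite{BCSTT21} used in \cref{lemma:point-box-edges} and the Ramsey-style argument underpinning \cref{thm:eq-lower-bound} via \cref{lemma:bonamy-girao}. The corollary itself is just the syllogism: sign-rank $\leq 5$ from the geometric lift, $K_{2,2}$-free and super-linear edge density from the extremal construction, and then the equivalence ``weakly-sparse $+$ $\DEQ = O(1)$ $\Leftrightarrow$ bounded degeneracy'' forces $\DEQ(\cG) = \omega(1)$.
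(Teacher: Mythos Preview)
Your proposal is correct and matches the paper's approach exactly: the paper simply states that the corollary follows by combining \cref{thm:eq-lower-bound,lemma:point-box-sign-rank,lemma:point-box-edges}, and you have spelled out precisely that combination. One small simplification: the class of $K_{2,2}$-free point-box incidence graphs is already hereditary (deleting points or boxes preserves both properties), so taking the hereditary closure is unnecessary.
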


\subsection{Sign-Rank 6: Point-Line Incidences}

The above result shows that reductions to \textsc{Equality} cannot be used to prove $\R(\cG) = O(1)$
in general, even for weakly-sparse classes, let alone stable ones. This leaves open the possibility
that there is another method for obtaining constant-cost randomized communication protocols for
weakly-sparse or even stable graph classes with sign-rank 5, 6, or any constant. However, we discuss
here a recent conjecture of \cite{CHHS23} regarding point-line incidences suggesting that weakly-sparse graphs of sign-rank 6 have non-constant communication complexity.

\begin{definition}[Point-Line Incidences]
Let $P$ be a set of points in $\bR^2$ and $L$ be a set of lines in $\bR^2$. The \emph{incidence
graph} of $P$ and $L$ is the bipartite graph
\[
  G(P, L) \define (P, L, \{ ph | p \in P, \ell \in L, p \in \ell \}) \,.
\]
\end{definition}

Point-line incidence graphs are $K_{2,2}$-free by definition, and it is well-known that the
incidence graph between $N$ points and $N$ lines can have $\Theta(N^{4/3})$ edges; therefore,
\cref{thm:eq-lower-bound} guarantees that they do not reduce to \textsc{Equality}. 
Furthermore, it is known that point-line incidence graphs are point-halfspace incidence graphs in $\bR^5$ (see e.g. \cite{CH23}), and hence they have sign-rank at most 6:

\begin{proposition}
Point-line incidence graphs have sign-rank at most 6.
\end{proposition}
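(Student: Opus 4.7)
The plan is to realize any finite point-line incidence graph $G(P,L)$ as a single point-halfspace incidence graph in $\bR^5$, and then bound the sign-rank using the standard lift. The second step is essentially a one-line check: for a point $q \in \bR^5$ and a halfspace $h = \{x \in \bR^5 : \inn{w, x} \geq t\}$, setting $\sigma(q) \define (q, -1) \in \bR^6$ and $\psi(h) \define (w, t) \in \bR^6$ gives $\inn{\sigma(q), \psi(h)} = \inn{w, q} - t$, whose sign is $+1$ iff $q \in h$. So the whole content of the argument is the realization in $\bR^5$.

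For the realization, I first lift each point $p = (x_p, y_p) \in \bR^2$ via the quadratic embedding
\[
\sigma(p) \define (x_p,\, y_p,\, x_p^2,\, x_p y_p,\, y_p^2) \in \bR^5 .
\]
The key identity is that for any line $\ell : a_\ell x + b_\ell y + c_\ell = 0$, setting $w_\ell \define (-2 a_\ell c_\ell,\, -2 b_\ell c_\ell,\, -a_\ell^2,\, -2 a_\ell b_\ell,\, -b_\ell^2) \in \bR^5$, direct expansion of the square yields
\[
\inn{w_\ell, \sigma(p)} \;=\; c_\ell^2 - (a_\ell x_p + b_\ell y_p + c_\ell)^2 .
\]
Indeed, the six monomials appearing in $(a_\ell x_p + b_\ell y_p + c_\ell)^2$ are $x_p^2, x_p y_p, y_p^2, x_p, y_p, 1$, and after moving the constant monomial $c_\ell^2 \cdot 1$ to the left-hand side, the five remaining monomials match exactly the coordinates of $\sigma(p)$.

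The remaining step uses the trivial equivalence $p \in \ell \iff (a_\ell x_p + b_\ell y_p + c_\ell)^2 = 0$ to convert an equality into a halfspace membership. Since $G(P,L)$ is finite, the positive numbers $(a_\ell x_p + b_\ell y_p + c_\ell)^2$ over pairs $p \in P$, $p \notin \ell$ form a finite set bounded below by some $\mu_\ell > 0$, so I fix any $\epsilon_\ell \in (0, \mu_\ell)$. Then $(a_\ell x_p + b_\ell y_p + c_\ell)^2 \leq \epsilon_\ell$ is equivalent to $p \in \ell$, which via the identity above rewrites as $\inn{w_\ell, \sigma(p)} \geq c_\ell^2 - \epsilon_\ell$. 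Defining the halfspace $h_\ell \define \{ x \in \bR^5 : \inn{w_\ell, x} \geq c_\ell^2 - \epsilon_\ell \}$, this proves $\sigma(p) \in h_\ell \iff p \in \ell$, so $G(P,L)$ is isomorphic to the point-halfspace incidence graph of $\sigma(P)$ with $\{h_\ell\}_{\ell \in L}$ in $\bR^5$. Combined with the conversion in the first paragraph, this gives $\rank_\pm(G(P,L)) \leq 6$.

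The only subtle point in the argument is the passage from the codimension-$1$ equality condition $p \in \ell$ to a halfspace inequality, handled by squaring (so the incidence is a minimum of a non-negative quadratic) and introducing a per-line slack $\epsilon_\ell > 0$; the finiteness of $G(P,L)$ is exactly what guarantees such a slack exists.
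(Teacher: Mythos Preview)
Your proof is correct and follows exactly the route the paper sketches: the paper does not give a proof but simply cites that point-line incidence graphs are point-halfspace incidence graphs in $\bR^5$ (referring to \cite{CH23}), and then adds one coordinate to get sign-rank at most $6$. You have filled in the concrete realization in $\bR^5$ via the quadratic (Veronese) embedding together with the per-line slack $\epsilon_\ell$, and the standard lift to $\bR^6$; both the key identity and the slack argument check out.
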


The communication complexity of point-line incidence graphs was recently studied in \cite{CHHS23},
but it remains unknown whether they have constant-cost. It was conjectured that they do not:

\begin{conjecture}[\cite{CHHS23}]
\label{conj:point-line}
The class $\cG$ of point-line incidence graphs has $\R(\cG) = \omega(1)$.
\end{conjecture}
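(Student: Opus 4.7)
My plan is to attack the conjecture through the margin--discrepancy correspondence of \cref{prop:margin-discrepancy}, which reduces proving $\R(\cG) = \omega(1)$ to showing that the margin of point-line incidence matrices tends to $0$ as $N \to \infty$. The key tool will be the Szemer\'edi--Trotter theorem, which states that $n$ points and $n$ lines in the plane determine at most $O(n^{4/3})$ incidences; crucially, the same bound holds locally, so any ``combinatorial rectangle'' of $n_1$ points and $n_2$ lines has at most $O((n_1 n_2)^{2/3} + n_1 + n_2)$ incidences. This is precisely the kind of rectangle bound that feeds into discrepancy-based lower bounds.

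First, I would fix a point-line configuration saturating the Szemer\'edi--Trotter bound, such as Elekes's grid construction with $\Theta(N^{4/3})$ incidences, and let $M \in \pmset^{N \times N}$ be its bipartite adjacency matrix. Next, I would construct a probability distribution $\mu$ on $[N] \times [N]$ that balances the two sign classes, placing total mass $1/2$ on the incidence entries and $1/2$ on the non-incidence entries, so that $\mu$-weighted signed sums measure true imbalance rather than overall bias. Under this $\mu$, the signed discrepancy of a rectangle $R \times C$ is essentially
\[
\left| \frac{|E(R,C)|}{2|E|} - \frac{|R||C| - |E(R,C)|}{2(N^2 - |E|)} \right|,
\]
where $E(R,C)$ counts incidences inside $R \times C$. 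Using Szemer\'edi--Trotter on $|E(R,C)|$ and elementary volume estimates on the non-incidence term, I would try to show that this discrepancy is $o(1)$ uniformly over all rectangles. Then, via the Linial--Shraibman bridge from discrepancy to the $\gamma_2^\star$ factorization norm (which is equivalent to margin up to absolute constants), this gives $\mar(M) = o(1)$ and hence $\R(M) = \omega(1)$ by \cref{prop:margin-discrepancy}.

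The hardest step will be the ``balanced regime'' where $|R|, |C| = \Theta(N)$: here the rectangle contains $\Theta(N^2)$ entries but only $O(N^{4/3})$ of them are incidences, so the non-incidence term dominates and the signed discrepancy reduces to a difference of two quantities that are individually $\Theta(1)$. Controlling cancellation here requires more than just the Szemer\'edi--Trotter \emph{upper} bound; one likely needs a two-sided estimate showing incidences are ``spread'' in every sufficiently large rectangle, or else a direct SDP-duality computation of $\gamma_2^\star(M)$ exploiting the fact that incidence matrices with $\Theta(N^{4/3})$ ones cannot be approximated in Frobenius norm by matrices of low stable rank. I also considered the more combinatorial route of reducing a known hard problem such as \textsc{Greater-Than} or \textsc{Disjointness} into point-line incidences, but the $K_{2,2}$-free structure rules out the half-graphs and large cliques that those reductions rely on. Since the conjecture has been open even after \cite{CHHS23} studied it directly, I expect that a genuinely new ingredient beyond Szemer\'edi--Trotter and standard matrix-analytic bounds will be needed---most plausibly an information-complexity lower bound that exploits the specific incidence geometry of lines rather than an abstract rectangle property alone.
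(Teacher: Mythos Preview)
The statement you are attempting to prove is a \emph{conjecture}, not a theorem: the paper does not prove it, and indeed explicitly presents it as open (attributed to \cite{CHHS23}). There is therefore no ``paper's own proof'' to compare your attempt against. Your task description asked you to prove the statement, but no proof is known.

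Your proposal is not a proof but a plan, and you correctly identify where it breaks down. The discrepancy route requires exhibiting a distribution $\mu$ under which \emph{every} rectangle has discrepancy $o(1)$. The Szemer\'edi--Trotter bound controls the number of incidences inside a rectangle from above, but to bound the signed discrepancy under your balanced $\mu$ you need a \emph{two-sided} estimate: you must also rule out large rectangles with anomalously \emph{few} incidences, and nothing in your argument supplies this. Concretely, in the Elekes construction one can take $R$ to be half the points and $C$ to be a set of $\Theta(N)$ lines chosen to avoid those points as much as possible; Szemer\'edi--Trotter says nothing against $|E(R,C)|$ being far below its ``fair share'', and in that case your discrepancy expression is $\Theta(1)$, not $o(1)$. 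This is exactly the ``balanced regime'' obstacle you flag, and it is not a technicality: it is the reason the problem is open. Your closing paragraph essentially concedes that the proof is incomplete and that a new idea is required, which is an accurate assessment of the state of the art rather than a proof.
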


\iftoggle{anonymous}{
}{
\begin{center}
\Large\textbf{Acknowledgments}
\end{center}
We are grateful to Marthe Bonamy, Louis Esperet, and Antonio Gir{\~a}o for their proof of
\cref{lemma:bonamy-girao}, and to Louis Esperet for communicating this proof to us and allowing us
to include it here. We thank Lianna Hambardzumyan, Pooya Hatami, and Sebastian Wild for several
conversations on the topic of this paper. The general definition of constant-cost reductions given
in this paper has arisen partly out of collaboration with Yuting Fang, Lianna Hambardzumyan, and
Pooya Hatami. We thank M{\'o}nika Csik{\'o}s for telling us about \cref{lemma:point-box-sign-rank}.
}

\bibliographystyle{alpha}
\bibliography{references}

\newcommand{\etalchar}[1]{$^{#1}$}
\begin{thebibliography}{AAA{\etalchar{+}}23}

\bibitem[AAA{\etalchar{+}}23]{AAALZ23}
Bogdan Alecu, Vladimir~E Alekseev, Aistis Atminas, Vadim Lozin, and Viktor
  Zamaraev.
\newblock Graph parameters, implicit representations and factorial properties.
\newblock {\em Discrete Mathematics}, 346(10):113573, 2023.

\bibitem[ACHS23]{ACHS23}
Manasseh Ahmed, Tsun-Ming Cheung, Hamed Hatami, and Kusha Sareen.
\newblock Communication complexity of half-plane membership.
\newblock Technical Report TR23-50, Electronic Colloquium on Computational
  Complexity (ECCC), 2023.

\bibitem[ACLZ15]{ACLZ15}
Aistis Atminas, Andrew Collins, Vadim Lozin, and Victor Zamaraev.
\newblock Implicit representations and factorial properties of graphs.
\newblock {\em Discrete Mathematics}, 338(2):164--179, 2015.

\bibitem[Alo22]{Alo22}
Noga Alon.
\newblock Implicit representation of sparse hereditary families.
\newblock {\em arXiv preprint arXiv:2201.00328}, 2022.

\bibitem[ARSV06]{ARSV06}
Noga Alon, Rado{\v{s}} Radoi{\v{c}}i{\'c}, Benny Sudakov, and Jan Vondr{\'a}k.
\newblock A {Ramsey-type} result for the hypercube.
\newblock {\em Journal of Graph Theory}, 53(3):196--208, 2006.

\bibitem[AY22]{AY22}
Daniel Avraham and Amir Yehudayoff.
\newblock On block ranks of matrices.
\newblock Technical Report TR22-137, Electronic Colloquium on Computational
  Complexity (ECCC), 2022.

\bibitem[AZ18]{AZ18}
Aistis Atminas and Viktor Zamaraev.
\newblock On forbidden induced subgraphs for unit disk graphs.
\newblock {\em Discrete \& Computational Geometry}, 60(1):57--97, 2018.

\bibitem[BCS{\etalchar{+}}21]{BCSTT21}
Abdul Basit, Artem Chernikov, Sergei Starchenko, Terence Tao, and Chieu-Minh
  Tran.
\newblock Zarankiewicz’s problem for semilinear hypergraphs.
\newblock In {\em Forum of Mathematics, Sigma}, volume~9, page e59. Cambridge
  University Press, 2021.

\bibitem[BFS86]{BFS86}
L{\'a}szl{\'o} Babai, Peter Frankl, and Janos Simon.
\newblock Complexity classes in communication complexity theory.
\newblock In {\em 27th Annual Symposium on Foundations of Computer Science
  (SFCS 1986)}, pages 337--347. IEEE, 1986.

\bibitem[CH23]{CH23}
Timothy~H Chan and Sariel {Har-Peled}.
\newblock On the number of incidences when avoiding an induced biclique in
  geometric settings.
\newblock In {\em Proceedings of the Symposium on Discrete Algorithms (SODA
  2023)}, 2023.

\bibitem[Cha23]{Chan23}
Maurice Chandoo.
\newblock Logical labeling schemes.
\newblock {\em Discrete Mathematics}, 346(10):113565, 2023.

\bibitem[CHHS23]{CHHS23}
Tsun-Ming Cheung, Hamed Hatami, Kaave Hosseini, and Morgan Shirley.
\newblock Separation of the factorization norm and randomized communication
  complexity.
\newblock In {\em Proceedings of the Computational Complexity Conference (CCC
  2023)}, 2023.
\newblock Available at ECCC, TR22-165.

\bibitem[CLV19]{CLV19}
Arkadev Chattopadhyay, Shachar Lovett, and Marc Vinyals.
\newblock Equality alone does not simulate randomness.
\newblock In {\em Proceedings of the Computational Complexity Conference (CCC
  2019)}. Schloss Dagstuhl-Leibniz-Zentrum fuer Informatik, 2019.

\bibitem[CMK19]{CMK19}
M{\'o}nika Csik{\'o}s, Nabil~H Mustafa, and Andrey Kupavskii.
\newblock Tight lower bounds on the vc-dimension of geometric set systems.
\newblock {\em The Journal of Machine Learning Research}, 2019.

\bibitem[EHK22]{EHK22}
Louis Esperet, Nathaniel Harms, and Andrey Kupavskii.
\newblock Sketching distances in monotone graph classes.
\newblock In {\em Approximation, Randomization, and Combinatorial Optimization.
  Algorithms and Techniques (APPROX/RANDOM 2022)}. Schloss
  Dagstuhl-Leibniz-Zentrum f{\"u}r Informatik, 2022.

\bibitem[FHH99]{FHH99}
Tom{\'a}s Feder, Pavol Hell, and Jing Huang.
\newblock List homomorphisms and circular arc graphs.
\newblock {\em Combinatorica}, 19(4):487--505, 1999.

\bibitem[Fit19]{Fit19}
Matthew Fitch.
\newblock Implicit representation conjecture for semi-algebraic graphs.
\newblock {\em Discrete Applied Mathematics}, 259:53--62, 2019.

\bibitem[FK09]{FK09}
Pierre Fraigniaud and Amos Korman.
\newblock On randomized representations of graphs using short labels.
\newblock In {\em Proceedings of the twenty-first annual symposium on
  Parallelism in algorithms and architectures}, pages 131--137, 2009.

\bibitem[GPT22]{GPT22}
Jakub Gajarsk{\`y}, Micha{\l} Pilipczuk, and Szymon Toru{\'n}czyk.
\newblock Stable graphs of bounded twin-width.
\newblock In {\em Proceedings of the 37th Annual ACM/IEEE Symposium on Logic in
  Computer Science}, pages 1--12, 2022.

\bibitem[Har20]{Har20}
Nathaniel Harms.
\newblock Universal communication, universal graphs, and graph labeling.
\newblock In {\em 11th Innovations in Theoretical Computer Science Conference
  (ITCS 2020)}. Schloss Dagstuhl-Leibniz-Zentrum f{\"u}r Informatik, 2020.

\bibitem[HH22]{HH22}
Hamed Hatami and Pooya Hatami.
\newblock The implicit graph conjecture is false.
\newblock In {\em 2022 IEEE 63rd Annual Symposium on Foundations of Computer
  Science (FOCS)}, pages 1134--1137. IEEE, 2022.

\bibitem[HHH22a]{HHH22counter}
Lianna Hambardzumyan, Hamed Hatami, and Pooya Hatami.
\newblock A counter-example to the probabilistic universal graph conjecture via
  randomized communication complexity.
\newblock {\em Discrete Applied Mathematics}, 322:117--122, 2022.

\bibitem[HHH22b]{HHH22dimfree}
Lianna Hambardzumyan, Hamed Hatami, and Pooya Hatami.
\newblock Dimension-free bounds and structural results in communication
  complexity.
\newblock {\em Israel Journal of Mathematics}, pages 1--62, 2022.

\bibitem[HHL22]{HHL22}
Kaave Hosseini, Hamed Hatami, and Shachar Lovett.
\newblock Sign-rank vs. discrepancy.
\newblock {\em Theory of Computing}, 18(1):1--22, 2022.

\bibitem[HHM23]{HHM23}
Hamed Hatami, Kaave Hosseini, and Xiang Meng.
\newblock A {Borsuk-Ulam} lower bound for sign-rank and its applications.
\newblock In {\em Proceedings of the 55th Annual ACM SIGACT Symposium on Theory
  of Computing (STOC 2023)}, 2023.

\bibitem[HHP{\etalchar{+}}22]{HHPTZ22}
Hamed Hatami, Pooya Hatami, William Pires, Ran Tao, and Rosie Zhao.
\newblock Lower bound methods for sign-rank and their limitations.
\newblock In {\em Approximation, Randomization, and Combinatorial Optimization.
  Algorithms and Techniques (APPROX/RANDOM 2022)}. Schloss
  Dagstuhl-Leibniz-Zentrum f{\"u}r Informatik, 2022.

\bibitem[HWZ22]{HWZ22}
Nathaniel Harms, Sebastian Wild, and Viktor Zamaraev.
\newblock Randomized communication and implicit graph representations.
\newblock In {\em Proceedings of the 54th Annual ACM SIGACT Symposium on Theory
  of Computing (STOC 2022)}, pages 1220--1233, 2022.

\bibitem[KN96]{KN96}
Eyal Kushilevitz and Noam Nisan.
\newblock {\em Communication Complexity}.
\newblock Cambridge University Press, 1996.

\bibitem[KNR92]{KNR92}
Sampath Kannan, Moni Naor, and Steven Rudich.
\newblock Implicit representation of graphs.
\newblock {\em SIAM Journal on Discrete Mathematics}, 5(4):596--603, 1992.

\bibitem[LS09]{LS09}
Nati Linial and Adi Shraibman.
\newblock Learning complexity vs communication complexity.
\newblock {\em Combinatorics, Probability and Computing}, 18(1-2):227--245,
  2009.

\bibitem[LZ17]{LZ17}
Vadim Lozin and Viktor Zamaraev.
\newblock The structure and the number of {$P_7$}-free bipartite graphs.
\newblock {\em European Journal of Combinatorics}, 65:143--153, 2017.

\bibitem[McC21]{McC21}
Rose McCarty.
\newblock Dense induced subgraphs of dense bipartite graphs.
\newblock {\em SIAM Journal on Discrete Mathematics}, 35(2):661--667, 2021.

\bibitem[MM13]{MM13}
Colin McDiarmid and Tobias M{\"u}ller.
\newblock Integer realizations of disk and segment graphs.
\newblock {\em Journal of Combinatorial Theory, Series B}, 103(1):114--143,
  2013.

\bibitem[POS22]{POS22}
Micha{\l} Pilipczuk, Patrice {Ossona de Mendez}, and Sebastian Siebertz.
\newblock Transducing paths in graph classes with unbounded shrubdepth.
\newblock {\em European Journal of Combinatorics}, page 103660, 2022.

\bibitem[PS86]{PS86}
Ramamohan Paturi and Janos Simon.
\newblock Probabilistic communication complexity.
\newblock {\em Journal of Computer and System Sciences}, 33(1):106--123, 1986.

\bibitem[PSS23]{PSS23}
Toniann Pitassi, Morgan Shirley, and Adi Shraibman.
\newblock The strength of equality oracles in communication.
\newblock In {\em 14th Innovations in Theoretical Computer Science Conference
  (ITCS 2023)}, 2023.

\bibitem[PT13]{PT13}
J{\'a}nos Pach and G{\'a}bor Tardos.
\newblock Tight lower bounds for the size of epsilon-nets.
\newblock {\em Journal of the American Mathematical Society}, 26(3):645--658,
  2013.

\bibitem[RY20]{RY20}
Anup Rao and Amir Yehudayoff.
\newblock {\em Communication Complexity: and Applications}.
\newblock Cambridge University Press, 2020.

\bibitem[Spi03]{Spin03}
Jeremy~P Spinrad.
\newblock {\em Efficient Graph Representations}, volume~19.
\newblock American Mathematical Society, 2003.

\bibitem[STU10]{STU10}
Anish Man~Singh Shrestha, Satoshi Tayu, and Shuichi Ueno.
\newblock On orthogonal ray graphs.
\newblock {\em Discrete Applied Mathematics}, 158(15):1650--1659, 2010.

\bibitem[Zie95]{Ziegler95}
G{\"u}nter~M. Ziegler.
\newblock {\em Lectures on Polytopes}.
\newblock Vol. 152. Springer, New York, graduate texts in mathematics edition,
  1995.

\end{thebibliography}

\appendix

\section{On the number of edges in weakly-sparse point-halfspace incidence graphs}
\label{section:caratheodory}

In this section we show that $K_{2,s}$-free point-halfspace incidence graphs in 
dimensions 1,2, and 3 have linearly many edges. 
The same result was recently obtained by Chan and Har-Peled in \cite{CH23} for more
general classes of $K_{s,s}$-free graphs. 
We present our results for two reasons. First, the proof technique is completely different and might be of independent interest. Second, our bounds are more specific for the considered cases.

To prove our upper bounds, we will show that every graph in a class has a vertex of bounded degree.
Since the classes are hereditary, this will imply linear bounds on the number of edges.

\subsection{On the line}

In this section we will show that the $K_{s,s}$-free point-halfline incidence graphs on $\bR$ have linear number of edges.
In fact we will show a linear bound on the number of edges in the more general class of the $K_{s,s}$-free point-interval incidence graphs. 
For this latter class, \cite{CH23} shows that an $n$-vertex $K_{s,s}$-free point-interval incidence graphs with $n_p$ points and $n_i$ intervals contains at most $s(n_p+3n_i)$ edges. 
Our bound of $(s-1)n = (s-1)(n_p+n_i)$ is a slight improvement over the bound from \cite{CH23}.

\begin{lemma}\label{lem:R1}
	Let $G$ be a $K_{s,s}$-free $n$-vertex point-interval incidence graph. Then $G$ has at most $(s-1)n$ edges.
\end{lemma}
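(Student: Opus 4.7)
\begin{proof}[Proof sketch]
The plan is to prove by induction on $n$ that every $K_{s,s}$-free point-interval incidence graph has a vertex of degree at most $s-1$. Once such a vertex exists, removing it from the graph yields another point-interval incidence graph (the class is clearly closed under vertex deletion), so the inductive step gives at most $(s-1)(n-1) + (s-1) = (s-1)n$ edges, with the base case $n=0$ being trivial.

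The key step is locating a low-degree vertex. Suppose the graph has at least one interval; otherwise there are no edges. Among all intervals, let $I^{*} = [a^{*},b^{*}]$ be one whose right endpoint $b^{*}$ is minimum (breaking ties arbitrarily). If $I^{*}$ has at most $s-1$ neighbours, we are done. Otherwise $I^{*}$ contains at least $s$ points; let $p_{1} < p_{2} < \cdots < p_{s}$ be any $s$ of them, all lying in $[a^{*},b^{*}]$. I claim that $p_{1}$ then has degree at most $s-1$.

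To see this, take any interval $I' = [a',b']$ incident to $p_{1}$. By the choice of $I^{*}$ we have $b' \geq b^{*}$, and since $p_{s} \leq b^{*}$ we get $b' \geq p_{s} > p_{j}$ for every $j \leq s$. Also $a' \leq p_{1} < p_{j}$ for $j \geq 2$, so $I'$ contains each of $p_{1}, p_{2}, \dotsc, p_{s}$. Thus every neighbour of $p_{1}$ is automatically a common neighbour of all of $p_{1}, \dotsc, p_{s}$. If $p_{1}$ had $s$ or more neighbours, these together with $\{p_{1}, \dotsc, p_{s}\}$ would form an induced $K_{s,s}$ in $G$, contradicting the hypothesis. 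Hence $\deg(p_{1}) \leq s-1$, completing the proof.
\end{proof}

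The only real content is the minimum-right-endpoint trick; once that is in place, the $K_{s,s}$-freeness forces either the chosen interval or its leftmost contained point to have small degree, and the induction is routine. I do not anticipate any obstacle beyond confirming that the degenerate case (no intervals present, or $I^{*}$ already of degree ${<}\,s$) is handled, which it is.
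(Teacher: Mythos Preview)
Your proof is correct and follows essentially the same approach as the paper: locate a vertex of degree at most $s-1$ via an extremal choice on the line, then appeal to the hereditary property to conclude the edge bound. The only cosmetic difference is the extremal object chosen---the paper takes the leftmost point and argues directly, whereas you take the interval with minimum right endpoint and then its leftmost contained point; both lead to the same $K_{s,s}$ contradiction.
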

\begin{proof}
	Let $P$ be a set of points and $I$ be a set of intervals on the real line such that 
	$G \simeq G(P, I)$.
	To prove the statement we will show that $G$ has a vertex of degree at most $s-1$. Suppose that all vertices of $G$ have degree at least $s$ and let $p$ be the leftmost point in $P$. The degree assumption implies that $p$ belongs
	to at least $s$ intervals, which we denote $i_1, i_2, \ldots, i_s$. For the same reason, each of these intervals should contain the $s-1$ points in $P$ closest to $p$, which we denote $p_1, p_2, \ldots, p_{s-1}$.
	But then the vertices corresponding to $i_1, i_2, \ldots, i_s$ and $p, p_1, p_2, \ldots, p_{s-1}$ induce
	the forbidden $K_{s,s}$.
\end{proof}

\subsection{On the plane}
\label{section:PH-on-the-plane}

In dimensions 2 and 3, the bounds for $K_{s,s}$-free graphs from \cite{CH23} are $O(sn)$, and the constants in the big-$O$ are not specified.

Our bounds in dimension 2 and 3 are respectively $3(s-1)n$ and $5(s-1)n$.  
To obtain them we will use the following lemma that reduces the
analysis to the case where the points are in convex position.

\begin{lemma}\label{lem:Caratheodory}
	Let $G \simeq G(P, H)$ be the incidence graph of a set $P$ of points and a set $H$ of
	halfspaces in $\bR^d$. If $G$ is $K_{2,s}$-free and $P$ is not in convex position, then $G$
	has a vertex of degree at most $(d+1)(s-1)$.
\end{lemma}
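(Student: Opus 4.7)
The plan is to find the low-degree vertex by exploiting non-convex position via Carath\'eodory's theorem, and then use pigeonhole against the forbidden $K_{2,s}$.

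Since $P$ is not in convex position, there exists a point $p \in P$ that lies in the convex hull of $P \setminus \{p\}$. By Carath\'eodory's theorem, $p$ can be written as a convex combination of at most $d+1$ points of $P \setminus \{p\}$; call these points $p_1, \dotsc, p_{d+1}$. I claim that $p$ is the desired low-degree vertex.

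The key geometric observation is that every halfspace $h \in H$ containing $p$ must also contain at least one of $p_1, \dotsc, p_{d+1}$. Indeed, writing $h = \{x \in \bR^d : \langle x, w \rangle \geq t\}$ and $p = \sum_i \lambda_i p_i$ with $\lambda_i \geq 0$, $\sum_i \lambda_i = 1$, the inclusion $p \in h$ gives $\sum_i \lambda_i \langle p_i, w \rangle \geq t$, which forces $\langle p_j, w \rangle \geq t$ for at least one $j$.

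Now suppose for contradiction that $\deg_G(p) > (d+1)(s-1)$. Then there are more than $(d+1)(s-1)$ halfspaces in $H$ containing $p$. By the above observation, each such halfspace also contains some $p_j$ with $j \in [d+1]$, and by pigeonhole there exist an index $j$ and $s$ distinct halfspaces $h_1, \dotsc, h_s \in H$ all containing both $p$ and $p_j$. Then the vertices $\{p, p_j\}$ together with $\{h_1, \dotsc, h_s\}$ induce a $K_{2,s}$ in $G$, contradicting the assumption that $G$ is $K_{2,s}$-free. Hence $\deg_G(p) \leq (d+1)(s-1)$, as required.

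I do not expect any genuine obstacle here: Carath\'eodory gives the structural reduction ``every witnessing halfspace hits one of $d+1$ anchor points,'' and the pigeonhole against $K_{2,s}$ is immediate. The only subtlety to double-check is the direction of the halfspace inequality (and what happens on the boundary / with strict vs.\ non-strict halfspaces), but this is handled uniformly by the linear-combination argument above regardless of the convention used for halfspaces in the paper.
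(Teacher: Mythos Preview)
Your proof is correct and follows essentially the same approach as the paper: pick a non-extremal point $p$, apply Carath\'eodory to get $\le d+1$ anchor points, observe that any halfspace through $p$ must hit one of them, and finish with pigeonhole against $K_{2,s}$. Your explicit linear-combination verification of the ``halfspace hits some $p_j$'' step is in fact a touch cleaner than the paper's appeal to $p$ lying in the interior of $\conv(\{p_1,\dots,p_k\})$.
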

\begin{proof}
	Suppose that $P$ is not in convex position, and let $p \in P$ be a non-extremal point of the
	convex hull $\conv(P)$. 
	By Carath{\'e}odory's theorem, $p$ belongs to the convex hull of at most $d+1$
	extremal points of $\conv(P)$. Let $p_1,p_2, \ldots, p_k$, $k \leq d+1$ be a minimal set of such extremal points.
	Since $p$ belongs to the interior of $\conv(\{p_1, \ldots, p_k\})$, any halfspace containing $p$ contains one of the points 
	$p_1, \ldots, p_k$. Thus, if $p$ belongs to at least $k(s-1)+1$ halfspaces, one of the points  $p_1, \ldots, p_k$ belongs to at least $s$ of them resulting in the forbidden $K_{2,s}$.
	Hence, the degree of $p$ is at most $k(s-1) \leq (d+1)(s-1)$.
\end{proof}

%
%

The \emph{polytope graph} of a polytope is the incidence graph
of the extremal points and 1-dimensional faces of the polytope. We will need the following
well-known fact.

\begin{fact}\label{fact:cut-poltope}
	Let $P$ and $H$ be respectively a polytope and a halfspace in $\bR^d$.
	The subgraph of the polytope graph of $P$ induced by the extremal point of $P$ that belong to $H$ is connected.
\end{fact}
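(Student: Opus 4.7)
The plan is to reduce the statement to two classical facts about polytopes: (i) a linear function on a polytope has no strict local maxima along the polytope graph other than global maxima, and (ii) the polytope graph of any polytope is connected.

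First I would choose coordinates for $H$: write $H = \{ x \in \bR^d : \langle c, x \rangle \geq t \}$ for a suitable normal vector $c$ and threshold $t$, and set $\phi(x) \define \langle c, x \rangle$, so that the vertices of $P$ lying in $H$ are precisely those with $\phi(v) \geq t$. Let $F \define \argmax_{x \in P} \phi(x)$, which is a nonempty face of $P$; if $\phi < t$ on all of $P$ there is nothing to prove, so assume $F \subseteq H$.

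Next, given two vertices $u$ and $v$ of $P$ with $\phi(u), \phi(v) \geq t$, I would build a $\phi$-monotone path in the polytope graph from each of them into $F$. The key local step is: if $w$ is a vertex of $P$ with $w \notin F$, then some neighbor $w'$ of $w$ in the polytope graph satisfies $\phi(w') > \phi(w)$. This is the standard ``simplex step'' argument: the tangent cone of $P$ at $w$ is generated by the edge directions at $w$, so if $\phi$ did not strictly increase along any edge at $w$, then $\phi$ would be maximized on all of $P$ at $w$, contradicting $w \notin F$. Iterating yields strictly increasing sequences of $\phi$-values, which must terminate at vertices $u^{*}, v^{*} \in F$; the resulting paths stay in $H$ because $\phi \geq t$ is preserved along them.

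Finally I would connect $u^{*}$ and $v^{*}$ inside $F$. Since $F$ is itself a polytope, its polytope graph is connected (a classical fact, in fact a consequence of Balinski's theorem), and its vertices and edges are a subset of those of $P$, so any such connecting path is a path in the polytope graph of $P$. Every vertex of $F$ satisfies $\phi = \max_P \phi \geq t$ and hence lies in $H$, so the concatenation of the three paths gives a walk from $u$ to $v$ in the induced subgraph on $V(P) \cap H$.

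The main obstacle is the local step: cleanly justifying that a non-maximizing vertex admits an improving edge. I would handle this by invoking the description of the tangent cone at a vertex of a polytope as the conic hull of the edge directions at that vertex, so that $\phi$ attains its maximum on $P$ at $w$ iff $\phi$ is non-increasing along every edge out of $w$; this is the one place where a reader not already familiar with polytope theory might need to be convinced, and I would cite a standard reference (e.g. Ziegler's \emph{Lectures on Polytopes}) rather than reproving it from scratch.
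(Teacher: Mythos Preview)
Your argument is correct and is precisely the standard proof of this fact. Note, however, that the paper does not actually prove \cref{fact:cut-poltope}: it is stated there as a ``well-known fact'' and used without justification. So there is no paper proof to compare against; your write-up would supply one where the paper gives none.

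One minor comment on presentation: in the final step you do not really need to invoke Balinski's theorem. Connectedness of the polytope graph of $F$ already follows from your own improving-edge argument applied to $F$ with any generic linear functional (pick any $c'$ and walk from each vertex of $F$ to the unique maximizer). Citing Balinski is fine, but it is a stronger statement than you need.
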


\begin{lemma}\label{lem:R2}
	Let $G$ be a $K_{2,s}$-free $n$-vertex point-halfplane incidence graph.
	Then $G$ has at most $3(s-1)n$ edges.
\end{lemma}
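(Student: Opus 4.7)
The plan is to prove the bound by induction on $n$. The base case $n = 0$ is trivial. For the inductive step, with $G = G(P,H)$ and $|P| + |H| = n$, we split on whether $P$ is in convex position.

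If $P$ is not in convex position, then \cref{lem:Caratheodory} applied with $d = 2$ produces a point $p \in P$ of degree at most $(d+1)(s-1) = 3(s-1)$ in $G$. The subgraph $G - p$ is a $K_{2,s}$-free point-halfplane incidence graph on $n - 1$ vertices, so the inductive hypothesis gives $|E(G-p)| \leq 3(s-1)(n-1)$, and hence $|E(G)| \leq 3(s-1)(n-1) + 3(s-1) = 3(s-1)n$.

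The main work is the convex-position case, which I would handle by a direct counting argument rather than by induction. Small cases ($n_p \leq 2$) are immediate; assume $n_p \geq 3$, so the convex hull of $P$ is a polygon whose polytope graph $C$ is a cycle on $n_p$ vertices and $n_p$ edges. By \cref{fact:cut-poltope}, for each halfplane $h \in H$, the set $A_h := P \cap h$ induces a connected subgraph of $C$, and therefore is either empty, a single vertex, a proper sub-arc (path) of $C$, or all of $C$. For each edge $e_i$ of $C$, let $m_i$ denote the number of halfplanes containing both endpoints of $e_i$. Since $G$ is $K_{2,s}$-free, any two points lie in at most $s-1$ common halfplanes, so $m_i \leq s - 1$ and therefore $\sum_i m_i \leq n_p(s-1)$.

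Now double-count pairs $(h, e_i)$ with $e_i \subseteq A_h$. Each halfplane $h$ with $|A_h| = k$ contributes $k-1$ such pairs if $k < n_p$ and $n_p$ such pairs if $k = n_p$, so a short bookkeeping (separating $h$'s with $A_h = P$, with $1 \leq |A_h| < n_p$, and with $A_h = \emptyset$) yields the identity
\[
|E(G)| \;=\; \sum_h |A_h| \;=\; \sum_{i=1}^{n_p} m_i \;+\; |\{h \in H : 1 \leq |A_h| < n_p\}| \;\leq\; n_p(s-1) + n_h.
\]
Since $s \geq 2$ we have $n_h \leq n_h(s-1)$, and thus $|E(G)| \leq (n_p + n_h)(s-1) = n(s-1) \leq 3(s-1)n$, completing the inductive step.

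The only subtle point I anticipate is the accounting in the convex-position case when some halfplanes contain the entire convex hull (i.e.\ $|A_h| = n_p$, so $A_h$ wraps around $C$ and contributes $n_p$ rather than $|A_h|-1$ edges); however, this contribution cancels cleanly with the $n_p|H_=|$ overcount in $\sum_i m_i$, so no separate case analysis is needed. (Moreover, $K_{2,s}$-freeness forces at most $s-1$ such halfplanes, which alone suffices as a back-up bound.)
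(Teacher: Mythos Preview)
Your proof is correct. The non-convex case is identical to the paper's (invoke \cref{lem:Caratheodory} and remove a low-degree point), but in the convex case you take a genuinely different route.

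The paper's argument in the convex case is a local pigeonhole: assuming every vertex has degree at least $3(s-1)+1$, every halfplane contains at least two points, so every halfplane through a fixed point $p$ also contains one of $p$'s two cycle-neighbours; with $p$ in more than $2(s-1)$ halfplanes, some neighbour shares $s$ of them, yielding the forbidden $K_{2,s}$. This shows the graph is $3(s-1)$-degenerate. Your argument is instead a global double-count over the $n_p$ cycle edges, using that each $A_h$ is an arc, to get the identity $|E(G)|=\sum_i m_i + |H_+|$ and hence the sharper bound $|E(G)|\le (s-1)n$ in the convex case. Both are valid; your count gives a better constant there, while the paper's degeneracy argument is shorter and generalises transparently to $\bR^3$ (\cref{lem:R3}), where the polytope graph is planar rather than a cycle and one uses a degree-$5$ vertex via Steinitz. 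Your arc-counting relies on the ``$k$ vertices $\Rightarrow$ $k-1$ edges'' structure specific to paths in a cycle, so it would need a different idea to extend.

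One small remark: your ``small cases $n_p\le 2$ are immediate'' is fine but could use one line of justification (e.g.\ for $n_p=2$ at most $s-1$ halfplanes contain both points, giving $|E(G)|\le (s-1)+n_h\le (s-1)n$).
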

\begin{proof}
		Let $P$ be a set of points on the plane and $H$ be a set of halfplanes such that 
		$G \simeq G(P, H)$. We assume without loss of generality that $|P| \geq 3$.
		To prove the lemma we will show that $G$ has a vertex of degree at most $3(s-1)$.
		If $P$ is not in convex position, such a vertex exists by \cref{lem:Caratheodory}, so we can assume that all points in $P$ are extremal points of $\conv(P)$. 
		Suppose that all vertices of $G$ have degree at least $3(s-1)+1$ and let $p$ be an arbitrary
		point in $P$. 
		The polytope graph of $P$ is a cycle, and hence $p$ has exactly 2 neighbours in this graph.  \cref{fact:cut-poltope}  implies that each of the halfplanes that contain $p$ and some other vertices in $P$ should also contain at least one of these 2 neighbours.
		Thus, since $p$ belongs to $3(s-1)+1 \geq 2(s-1)+1$ halfplanes in $H$, at least
		$s$ of them contain one other fixed point in $P$, which witnesses a forbidden $K_{2,s}$.
\end{proof}

\subsection{In $\bR^3$}
\label{section:PH-in-R3}

\begin{lemma}\label{lem:R3}
	Let $G$ be a $K_{2,s}$-free $n$-vertex point-halfspace incidence graph in $\bR^3$.
	Then $G$ has at most $5(s-1)n$ edges.
\end{lemma}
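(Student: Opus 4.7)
The plan is to mirror the structure of the proofs of \cref{lem:R1} and \cref{lem:R2}: show that every graph in the class has a vertex of degree at most $5(s-1)$, and then conclude by a standard hereditary induction on $n$ (remove the low-degree vertex; the remaining induced subgraph is still a $K_{2,s}$-free point-halfspace incidence graph in $\bR^3$).

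For the minimum-degree statement, I would split into two cases as in \cref{lem:R2}. If $P$ is not in convex position, \cref{lem:Caratheodory} with $d=3$ already delivers a vertex of degree at most $(d+1)(s-1)=4(s-1)\le 5(s-1)$. So assume $P$ is in convex position, meaning all points of $P$ are extremal vertices of the convex 3-polytope $\conv(P)$. By Steinitz's theorem, the polytope graph of $\conv(P)$ is a 3-connected planar graph on vertex set $P$, and Euler's formula ($e \le 3n-6$) forces some vertex $p \in P$ to have at most $k \le 5$ polytope neighbours $q_1,\dots,q_k$.

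To show $p$ has $G$-degree at most $5(s-1)$, suppose for contradiction that \emph{every} vertex of $G$ (both points and halfspaces) has degree at least $5(s-1)+1 \ge 2$; otherwise some vertex already has the desired small degree. Then every halfspace in $H$ contains at least two points of $P$. In particular, each halfspace $h$ that contains $p$ satisfies $|P\cap h|\ge 2$, so by \cref{fact:cut-poltope} the set $P\cap h$ is a connected subgraph of the polytope graph that contains $p$ together with at least one other point; hence $h$ must contain at least one of the polytope neighbours $q_1,\dots,q_k$. Pigeonholing the $\ge 5(s-1)+1$ halfspaces incident to $p$ among the $k\le 5$ neighbours yields some $q_i$ belonging to at least $\lceil (5(s-1)+1)/5\rceil = s$ of these halfspaces. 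Together with $p$, this produces a forbidden $K_{2,s}$, a contradiction.

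The main obstacle is really a bookkeeping one: one has to argue that the ``degenerate'' halfspaces containing only $p$ among points of $P$ can be excluded, which is handled by the observation above that the global min-degree assumption forces every halfspace to contain $\ge 2$ points. The geometric content beyond the two-dimensional case is just the replacement of the cycle (polytope graph of a convex polygon) by a planar 3-connected graph via Steinitz and the resulting $\delta\le 5$ bound from Euler; everything else, including the application of \cref{fact:cut-poltope} and the pigeonhole, is a direct generalization of the argument in \cref{lem:R2}.
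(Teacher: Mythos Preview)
Your proof is correct and follows essentially the same approach as the paper: assume every vertex has degree at least $5(s-1)+1$, use \cref{lem:Caratheodory} to reduce to $P$ in convex position, invoke Steinitz's theorem to get planarity of the polytope graph and hence a vertex $p$ of polytope-degree at most $5$, then use \cref{fact:cut-poltope} and pigeonhole to find a $K_{2,s}$ on $p$ and one of its polytope neighbours. Your handling of the degenerate case (a halfspace containing only $p$) via the global minimum-degree assumption is exactly how the paper implicitly deals with it; the only cosmetic difference is that you state this step explicitly.
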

\begin{proof}
	Let $P$ and $H$ be respectively a set of points and a set of halfspaces in $\bR^3$ such that 
	$G \simeq G(P, H)$. As before, to prove the lemma we will show that $G$ has a vertex of degree at most $5(s-1)$. Towards a contraction, suppose that all vertices in $G$ have at least $5(s-1)+1$ neighbours. This assumption and \cref{lem:Caratheodory} imply that $P$ is in convex position, and hence all points in $P$ are extremal points of $\conv(P)$. 
	
	Let $F$ be the polytope graph of $P$. By Steinitz's theorem (see e.g. \cite{Ziegler95}), $F$ is planar, and therefore has a vertex of degree at most 5. Let $p \in P$ be such a vertex.
	It follows from \cref{fact:cut-poltope} that any halfspace in $H$ that contains $p$ also contains
	at least one of the neighbours of $p$ in $F$. Thus, by the pigeonhole principle, at least $s$ halfspaces among those in $H$ that contain $p$ contain also one fixed neighbour of $p$, which
	witnesses the forbidden $K_{2,s}$. This contradiction completes the proof.
\end{proof}

\end{document}